\tikzstyle{non}=[inner sep=1pt]
\tikzstyle{lbox}=[rounded corners=5pt,draw=black!60,very thick,align=center]
\tikzstyle{wbox}=[rounded corners=5pt,align=center]
\tikzstyle{gbox}=[rounded corners=5pt,fill=green!20,align=center]
\tikzstyle{bbox}=[rounded corners=5pt,fill=blue!20,align=center]
\tikzstyle{rbox}=[rounded corners=5pt,fill=red!20,align=center]
\tikzstyle{obox}=[rounded corners=5pt,fill=yellow!35,align=center]
\tikzstyle{ledg}=[draw=black,very thick]
\tikzstyle{linc}=[fill=white,draw=black,rotate=90,inner sep=2pt,very thick,circle]
\newcommand{\nat}{\ensuremath{\mathbb{N}}\xspace}
\newcommand{\wpart}[2]{\ensuremath{[#1]_{#2}}\xspace}
\newcommand{\ppart}[1]{\wpart{#1}{\Pb}}
\newcommand{\qpart}[1]{\wpart{#1}{\Qb}}
\newcommand{\eqsu}[1]{\ensuremath{\sim_{#1}}\xspace}
\newcommand{\keqsu}{\eqsu{k}}
\newcommand{\sfa}{\ensuremath{\mathbb{S}_\eta}\xspace}
\newcommand{\sfl}[1]{\ensuremath{\lfloor #1\rfloor_\eta}\xspace}
\newcommand{\benrp}[2]{\ensuremath{{\llbracket #1\rrbracket_{#2}}}\xspace}
\newcommand{\benrd}[1]{\benrp{#1}{d}}
\newcommand{\benr}[1]{\ensuremath{{\llbracket #1 \rrbracket}}\xspace}
\newcommand{\fo}{\ensuremath{\textsf{FO}}\xspace}
\newcommand{\fow}{\mbox{\ensuremath{\fo(<)}}\xspace}
\newcommand{\fowm}{\mbox{\ensuremath{\fo(<,\MOD)}}\xspace}
\newcommand{\fows}{\mbox{\ensuremath{\fo(<,+1)}}\xspace}
\newcommand{\mso}{\ensuremath{\textsf{MSO}}\xspace}
\newcommand{\fod}{\ensuremath{\fo^2}\xspace}
\newcommand{\fodw}{\ensuremath{\fod(<)}\xspace}
\newcommand{\fodwm}{\ensuremath{\fod(<,\MOD)}\xspace}
\newcommand{\fodws}{\ensuremath{\fod(<,+1)}\xspace}
\newcommand{\fodwsm}{\ensuremath{\fod(<,+1,\MOD)}\xspace}
\newcommand{\sic}[1]{\ensuremath{\Sigma_{#1}}\xspace}
\newcommand{\siw}[1]{\ensuremath{\Sigma_{#1}(<)}\xspace}
\newcommand{\siwm}[1]{\ensuremath{\Sigma_{#1}(<,\MOD)}\xspace}
\newcommand{\siws}[1]{\ensuremath{\Sigma_{#1}(<,+1)}\xspace}
\newcommand{\siwsm}[1]{\ensuremath{\Sigma_{#1}(<,+1,\MOD)}\xspace}
\newcommand{\siwu}{\siw{1}}
\newcommand{\siwd}{\siw{2}}
\newcommand{\siwsd}{\siws{2}}
\newcommand{\siwt}{\siw{3}}
\newcommand{\siwst}{\siws{3}}
\newcommand{\siwq}{\siw{4}}
\newcommand{\pic}[1]{\ensuremath{\Pi_{#1}}\xspace}
\newcommand{\piw}[1]{\ensuremath{\Pi_{#1}(<)}\xspace}
\newcommand{\bsc}[1]{\ensuremath{\Bs\Sigma_{#1}}\xspace}
\newcommand{\bsw}[1]{\ensuremath{\Bs\Sigma_{#1}(<)}\xspace}
\newcommand{\bswm}[1]{\ensuremath{\Bs\Sigma_{#1}(<,\MOD)}\xspace}
\newcommand{\bsws}[1]{\ensuremath{\Bs\Sigma_{#1}(<,+1)}\xspace}
\newcommand{\bswsm}[1]{\ensuremath{\Bs\Sigma_{#1}(<,+1,\MOD)}\xspace}
\newcommand{\bswu}{\bsw{1}}
\newcommand{\bswd}{\bsw{2}}
\newcommand{\bswsd}{\bsws{2}}
\newcommand{\su}{\ensuremath{\textsf{SU}}\xspace}
\newcommand{\md}{\ensuremath{\textsf{MOD}}\xspace}
\newcommand{\vari}{quotienting Boolean algebra\xspace}
\newcommand{\pvari}{quotienting lattice\xspace}
\newcommand{\livari}{\textup{\bf li}-variety\xspace}
\newcommand{\plivari}{positive \textup{\bf li}-variety\xspace}
\newcommand{\plivaris}{positive \textup{\bf li}-varieties\xspace}
\newcommand{\varie}{variety\xspace}
\newcommand{\varies}{varieties\xspace}
\newcommand{\pvarie}{positive variety\xspace}
\newcommand{\pvaries}{positive varieties\xspace}
\newcommand{\eval}{\ensuremath{\text{\sc ev}}\xspace}
\newcommand{\Bs}{\ensuremath{\mathscr{B}}\xspace}
\newcommand{\Cs}{\ensuremath{\mathscr{C}}\xspace}
\newcommand{\Ds}{\ensuremath{\mathscr{D}}\xspace}
\newcommand{\Fs}{\ensuremath{\mathscr{F}}\xspace}
\newcommand{\Vs}{\ensuremath{\mathscr{V}}\xspace}
\newcommand{\Hb}{\ensuremath{\mathbf{H}}\xspace}
\newcommand{\Kb}{\ensuremath{\mathbf{K}}\xspace}
\newcommand{\Lb}{\ensuremath{\mathbf{L}}\xspace}
\newcommand{\Mb}{\ensuremath{\mathbf{M}}\xspace}
\newcommand{\Pb}{\ensuremath{\mathbf{P}}\xspace}
\newcommand{\Qb}{\ensuremath{\mathbf{Q}}\xspace}
\newcommand{\Ub}{\ensuremath{\mathbf{U}}\xspace}
\newcommand{\Vb}{\ensuremath{\mathbf{V}}\xspace}
\newcommand{\frS}{\ensuremath{\mathbbm{S}}\xspace}
\newcommand{\cmult}{\ensuremath{\mathbin{\scriptscriptstyle\bullet}}}
\theoremstyle{plain}
\newtheorem{theorem}{Theorem}[section]
\newtheorem{proposition}[theorem]{Proposition}
\newtheorem{lemma}[theorem]{Lemma}
\newtheorem{corollary}[theorem]{Corollary}
\newtheorem{fct}[theorem]{Fact}
\theoremstyle{definition}
\newtheorem{remark}[theorem]{Remark}
\newtheorem{example}[theorem]{Example}
\newtheorem{pro-example}[theorem]{Example}
\newtheorem{pro-remark}[theorem]{Remark}
\def\proofof#1{\begin{proof}[Proof of~#1]}
\def\eopo{\end{proof}}
\let\phi\varphi
\def\inv{^{-1}}
\let\epsilon\varepsilon
\def\lcm{\textsf{lcm}}
\newcommand{\MOD}{\ensuremath{\mathbf{MOD}}\xspace}
\def\D{\textbf{D}}
\renewcommand{\frS}{\ensuremath{\boldsymbol{S}}\xspace}
\keywords{Theory of computation; Formal languages and automata theory; Regular languages; Fragments of first-order logic; Modular predicates; Separation problem}
\title{Covering and separation for logical fragments with modular predicates}
\author[T. Place]{Thomas Place\rsuper{{a,d}}}	
\address{$\lsuper{a}$Univ. Bordeaux, CNRS,  Bordeaux INP, LaBRI, UMR 5800, F-33400, Talence, France}	
\thanks{Work supported by the ANR (project \textsc{DeLTA}, ANR-16-CE40-0007). \email{thomas.place@labri.fr, varun.ramanathan@u-bordeaux.fr, pascal.weil@labri.fr}}
\author[V. Ramanathan]{Varun Ramanathan\rsuper{{a,b,c}}}	
\address{\lsuper{b}Chennai Mathematical Institute, Chennai, India}
\author[P. Weil]{Pascal Weil\rsuper{{a,c}}}	
\address{\lsuper{{c}}CNRS, ReLaX, UMI 2000, Chennai, India}
\address{\lsuper{{d}}Institut Universitaire de France}
\begin{document}
\begin{abstract}
	For every class \Cs of word languages, one may associate a decision problem called \Cs-separation. Given two regular languages, it asks whether there exists a third language in \Cs containing the first language while being disjoint from the second one. Usually, finding an algorithm deciding \Cs-separation yields a deep insight on \Cs.
	
	We consider classes defined by fragments of first-order logic. Given such a fragment, one may often build a larger class by adding more predicates to its signature. In the paper, we investigate the operation of enriching signatures with \emph{modular predicates}. Our main theorem is a generic transfer result for this construction. Informally, we show that when a logical fragment is equipped with a signature containing the successor predicate, separation for the stronger logic enriched with modular predicates reduces to separation for the original logic. This result actually applies to a more general decision problem, called the covering problem. 
\end{abstract}
\maketitle

\section{Introduction}

\paragraph{\bf Context.} Logic is a powerful tool for the specification of the behavior of systems, which is classically modeled by formal languages of words, trees and other discrete structures. Monadic second-order (\mso), first-order (\fo) and their fragments play a prominent role in this context, especially in the case of languages of finite words (starting with the work of Büchi, see \cite{1994:Straubing} for a general overview). An important set of problems deals with characterizing the expressive power of particular logics. Given a logic \Fs, one would want a decision procedure for the \emph{\Fs-membership problem}: given a regular language $L$, decide whether $L$ is definable by a sentence of \Fs (i.e. whether $L$ belongs to the class of languages associated to \Fs). In practice, obtaining such an algorithm requires a deep understanding of \Fs: one needs to understand all the properties that can be expressed with \Fs.

Membership is known to be decidable for many fragments. The most celebrated result of this kind is the decidability of \fow-membership (first-order logic equipped the linear ordering) which is due to Schützenberger, McNaughton and Papert~\cite{sfo,mnpfo}. Another well-known example is the solution of Thérien and Wilke~\cite{1998:TherienWilke} for the two-variable fragment of first-order logic (\fodw). However, membership remains open for several natural fragments. A famous open question is to obtain membership algorithms for every level in the quantifier alternation hierarchy of first-order logic, which classifies it into levels \siw{n} and \bsw{n}. This problem has been investigated for years, starting with the theorem of Simon~\cite{simon75} which states that \bsw{1}-membership is decidable. However, despite all these efforts, progress has been slow and decidability is only known for $n \leq 4$ in the case of \siw{n} and for $n \leq 2$ in the case of $\bsw{n}$~\cite{pwdelta,pzqalt,pseps3,pseps3j} (we refer the reader to~\cite{jep-dd45,2015:PlaceZeitoun-SIGLOG,2017:PlaceZeitoun-CSR} for detailed surveys on the problem). A key point is that the latest results on this question (for \siwt, \bswd and \siwq) involve considering new decision problems generalizing membership: \emph{separation} and \emph{covering}.

For a fixed fragment \Fs, the \emph{\Fs-separation problem} takes \emph{two} input regular languages and asks whether there exists a third language, definable in \Fs, containing the first language and disjoint from the second one. Covering~\cite{2018:PlaceZeitoun,2016:PlaceZeitoun} is even more general: it takes two different objects as input: a regular language $L$ and a finite set of regular languages \Lb. It asks whether there exists an \Fs-cover \Kb of $L$ (\emph{i.e.}, a finite set of languages definable in \Fs whose union contains $L$) such that no language $K \in \Kb$ meets every language in \Lb. Separation corresponds to the special case of covering when the set \Lb is a singleton. Membership is the particular case of separation where the input languages are of the form $L$ and $A^*\setminus L$. The investigation of these two problems has been particularly fruitful. While obtaining an algorithm for separation or covering is usually more difficult than for membership, it is also more rewarding with respect to the insight one gets on the logical fragment investigated. Furthermore, solutions for these two problems are often more robust than those for membership. A striking example is a transfer theorem of~\cite{pzqalt} which can be applied to the quantifier alternation hierarchy: for every $n \geq 1$, $\siw{n+1}$-membership can be effectively reduced to \siw{n}-separation.

Separation is known to be decidable, for instance, for $\bsw{1}$ (Place, van Rooijen, Zeitoun \cite{pvzmfcs13} and, independently, Czerwinski, Martens and Masopust \cite{2013:CzerwinskiMartensMasopust}). The problem of separation was actually studied earlier, under a different guise: Almeida \cite{1999:Almeida} showed, if \Vs is a variety of languages and \Vb is the corresponding pseudovariety of monoids (in Eilenberg's correspondence, see \cite{1986:Pin}), that the \Vs-covering problem and the problem of computing \emph{\Vb-pointlikes} (a notion with deep roots in algebra and topology which we will not need to discuss here) reduce to each other in a natural fashion. It follows that earlier work of Henckell \cite{1988:Henckell} (see also \cite{2010:HenckellRhodesSteinberg-b,2017:GoolSteinberg,2018:GoolSteinberg}) on aperiodic pointlikes shows that \fow-separation and covering are decidable. For a a fully language-theoretic proof of this result, see Place and Zeitoun \cite{pzfoj}.

In the paper, we investigate the separation and covering problems for several families of logical fragments, built from weaker fragments using a generic operation: \emph{enrichment by modular predicates}. This operation was first introduced by Barrington \emph{et al.} \cite{1992:BarringtonComptonStraubing} in their study of circuit complexity. It is a natural extension also in the following sense: the first standard examples of regular languages that are not in \fow are those that involve modulo counting (\emph{e.g.}, the set of words of even length). \emph{Modular predicates} introduce just that capability: for every natural number $d \geq 1$ and every $i<d$, a unary predicate is available, which selects the positions in a word that are congruent to $i$ modulo $d$. Adding these predicates to a fragment of \fow allows the concise specification of languages that are not \fow-definable (it strictly increases the expressive power of the logic), while remaining within \mso-definability, and hence within the realm of regular languages.

Before we give an overview of the state of the art, let us comment on our method, or rather on the point of view we adopted in this paper. Specialists of the domain are well aware that this sort of problems, linking automata theory and logical specifications, can be approached from different angles. One of those is to rely heavily on algebraic models (following, say, \cite{1976:Eilenberg,1986:Pin,1994:Straubing}). Here we choose instead to remain as close as possible to purely language-theoretic arguments as in, \emph{e.g.}, \cite{2017:PlaceZeitoun-CSR,2018:PlaceZeitoun}. This way, we gain a little generality and we avoid introducing sophisticated machinery. One side-effect is that we are led to translating certain ideas that were already present in the framework of logic or algebra, under the restrictions imposed by that framework (and we explicitly mention these situations). Our main objective in opting for this approach is to, hopefully, make the paper more accessible to a wider audience.

\paragraph{\bf State of the art.} Currently, no results are known for the separation and covering problems associated to fragments enriched with modular predicates. However, earlier results considered the status of membership when a fragment of \fow is enriched with modular predicates: Barrington et al. showed that \fowm-membership is decidable~\cite{1992:BarringtonComptonStraubing} (see also \cite{1994:Straubing}). Chaubard et al.~\cite{2006:ChaubardPinStraubing-LICS} show the same result for the extensions of \siwu and \bswu: \siwm{1} and \bswm{1}. Kufleitner and Walter \cite{2015:KufleitnerWalter} show the decidability of membership for \siwm{2}. Finally, Dartois and Paperman show that the extension of \fodw with modular predicates (\fodwm) has decidable membership as well~\cite{2013:DartoisPaperman}. In \cite{2015:DartoisPaperman}, they extend their result to a wider range of logical fragments enriched with modular predicates (see also the unpublished \cite{2015:DartoisPaperman-unpub}).

Unfortunately, each of these decidability results is proved using a specific argument which deals directly with a particular fragment enriched with modular predicates. There are many fragments of first-order logic. Consequently, it would be preferable to avoid an approach which considers each of them independently. Instead, we would like to have a \emph{generic transfer theorem} which, given a fragment \Fs, lifts a solution of the decision problems for \Fs to the stronger variant of \Fs enriched with modular predicates.

It turns out that such a transfer theorem is already known for another natural operation on fragments of first-order logic: enrichment by the local predicates (the successor binary relation ``$+1$'' and the \textsf{min} and \textsf{max} unary predicates). In~\cite{2015:PlaceZeitoun,2017:PlaceZeitoun}, Place and Zeitoun showed that separation and covering are decidable for the following logical fragments enriched with the local predicates: \fodws, \siws{1}, \bsws{1}, \siwsd, \bswsd and \siwst (the enrichment of \fow is omitted as its expressive power is not increased). A key point is that all these results rely on the same generic reduction which is obtained by translating the problem to a language theoretic statement. Translating ideas from Straubing's seminal paper~\cite{1985:Straubing} on algebraic methods to solve the membership problem when local predicates are added to a fragment into a purely language-theoretic setting, Place and Zeitoun investigate an operation on classes of languages, namely $\Cs \mapsto \Cs\circ\su$ (see Section~\ref{sec:enrich} for definitions). It satisfies the two following properties:
\begin{enumerate}
	\item In most cases, if \Fs is a fragment of first-order logic, the enrichment of \Fs with local predicates corresponds to the class $\Fs \circ \su$. In~\cite{2018:PlaceZeitoun}, this is proved for two-variable first-order logic (\fodw) and levels in the alternation hierarchy (\siw{n} and \bsw{n}).
	\item The main theorem of~\cite{2018:PlaceZeitoun} states that the  separation and covering problems are decidable on $\Cs\circ\su$ if they are on \Cs.	This result highlights the robustness of separation and covering: such a theorem fails for membership.	
\end{enumerate}
The decidability results for the extended fragments that we mentioned above are then obtained from already established results for the separation and covering problems associated to \fodw, \siwu, \bswu, \siwd, \bswd and \siwt (see~\cite{2014:PlaceZeitoun,2013:CzerwinskiMartensMasopust,2018:PlaceZeitoun,pseps3j}).

\paragraph{\bf Contribution.} Our results apply to enrichment by modular predicates and follow the same scheme, with an interesting twist. We investigate an operation on classes of languages whose origins can be found implicitly or explicitly in \cite{1992:BarringtonComptonStraubing,2006:ChaubardPinStraubing-LICS}, written $\Cs \mapsto \Cs\circ\md$ (again, see Section~\ref{sec:enrich} for definitions). Then we show two properties which are similar to those of $\Cs \circ \su$.
\begin{enumerate}
	\item If \Cs is the class corresponding to a fragment \Fs of first-order logic satisfying some mild hypotheses, then $\Cs \circ \md$ corresponds to the extension of \Fs by modular predicates. Let us point out that this result is generic, which makes it stronger, and much simpler to establish than the corresponding result for $\Cs \circ \su$ (which requires a separate proof for each fragment).
	\item We show that, for every class \Cs of the form $\Cs = \Ds \circ \su$, the separation and covering problems are decidable on $\Cs \circ \md$ if they are on \Cs. This result is weaker than that for $\Cs \circ \su$: it does not apply to all classes, only those of the form $\Ds \circ \su$. 
\end{enumerate}
Consequently, our results complement those of Place and Zeitoun~\cite{pzfoj,2017:PlaceZeitoun} in a natural way. Combining the two theorems yield that for every class \Cs, the  separation and covering problems are decidable on $(\Cs\circ\su) \circ \md$ if they are on \Cs.

Going back to the logical point of view, this shows that for most fragments \Fs, enriching \Fs with the local and modular predicates simultaneously preserves the decidability of separation and covering. In the paper, we use this result to show that separation and covering are decidable for several fragments: \fowm, \fodwsm, \siwsm{n} for $n =1,2,3$ and \bswsm{n} for $n =1,2$.

\paragraph{\bf Plan of the paper}
The paper is organized as follows. Classes of languages and the membership, separation and covering problems are introduced in Section~\ref{sec:prelims}. Fragments of first-order logic, and the notion of enrichment of a fragment by the adjunction of new predicates (typically: the local and the modular predicates) is introduced in Section~\ref{sec:logic}. The enrichment operation on classes of languages, namely the operation $\Cs,\Ds\mapsto \Cs\circ\Ds$, is studied in Section~\ref{sec:enrich}. We also discuss in that section the language-theoretic impact of enriching a logical fragment with local and, especially, modular predicates. We formulate our main theorem in Section~\ref{sec: main theorem}: it reduces the covering and separation problems in a class of the form $(\Cs\circ\su)\circ\md$ to the same problem in $\Cs\circ\su$.

The proof of Theorem~\ref{thm:main} is given in two parts: in Section~\ref{sec:block}, we translate the problem to another language theoretic problem (in terms of the operation of block abstraction, introduced here), which turns out to be easier to handle. The proof of the theorem itself is given in Section~\ref{sec:transfer}. 

\section{Preliminaries}
\label{sec:prelims}

In this section, we introduce classes of languages and the decision problems that we shall consider.

\subsection{Classes of languages}

An \emph{alphabet} is a finite non-empty set of symbols that we call \emph{letters}. Given an alphabet $A$, we denote by $A^*$ the set of all finite sequences (known as \emph{words}) of letters in $A$, including the empty word denoted by $\varepsilon$. Given a word $w \in A^*$ we write $|w| \in \nat$ for the length of $w$. The set of non-empty words is written $A^+$ ($A^+ = A^* \setminus \{\varepsilon\}$). A subset of $A^*$ is called a \emph{language}.

A \emph{class of languages} \Cs is a correspondence $A \mapsto \Cs(A)$, defined for all alphabets $A$, where $\Cs(A)$ is a set of languages over $A$. All classes considered in the paper are included in the class of regular languages, this will be implicitly assumed from now on. These are the languages which are accepted by a nondeterministic finite automaton (see e.g. \cite{1997:Kozen}). Here, we shall work with the classical monoid-theoretic characterizations, which we now recall.

Recall that a \emph{semigroup} $S$ is a set equipped with an associative binary operation (called the \emph{product}) and a \emph{monoid} is a semigroup $M$ that has an identity element (which we denote by $1_M$). If $A$ is an alphabet, then $A^*$ is a monoid under word concatenation and $A^+$ is a semigroup under the same operation.

A morphism $\eta\colon A^* \to M$, into a finite monoid, is said to recognize a language $L \subseteq A^*$ if $L = \eta\inv(X)$ for some subset $X$ of $M$, that is, if $L = \eta\inv(\eta(L))$. It is well-known that a language is regular if and only if it is recognized by a morphism into a finite monoid \cite{1986:Pin}.

%

\subsection{Closure properties}

In the paper, we only work with classes of languages satisfying robust closure properties that we describe below.

\paragraph{Boolean operations.} A class \Cs of languages is a \emph{lattice} when $\Cs(A)$ is closed under finite unions and intersections for every alphabet $A$: that is, $\emptyset,A^* \in \Cs(A)$ and for every $L_1,L_2 \in \Cs(A)$, we have $L_1 \cup L_2 \in \Cs(A)$ and $L_1 \cap L_2 \in \Cs(A)$. A \emph{Boolean algebra} is a lattice \Cs that is additionally closed under complement: for every alphabet $A$ and every $L \in \Cs(A)$, we have $A^* \setminus L \in \Cs(A)$.

\paragraph{Quotients.}  Recall that if $L$ is a language in $A^*$ and $u\in A^*$, then the \emph{left} and \emph{right quotients} $u\inv L$ and $Lu\inv$ are defined as follows
$$u\inv L = \{v \in A^* \mid uv\in L\}\quad\textrm{ and }\quad L u\inv = \{v \in A^* \mid vu\in L\}.$$
We say that a class \Cs is \emph{quotient-closed} when it is closed under taking (left and right) quotients by words of~$A^*$.

\paragraph{Inverse morphisms.} A class \Cs is \emph{closed under inverse morphisms} when, for all alphabets $A,B$,  morphism $\alpha\colon A^* \to B^*$ and language $L \in \Cs(B)$, we have $\alpha\inv(L) \in \Cs(A)$.

We shall also consider a weaker variant of this closure property, namely \emph{closure under inverse length increasing morphisms}. A morphism $\alpha$ as above is \emph{length increasing} when $|\alpha(w)| \geq |w|$ for every $w \in A^*$. A class \Cs is closed under inverse length increasing morphisms when for every such morphism and every language $L \in \Cs(B)$, we have $\alpha\inv(L) \in \Cs(A)$. Observe that by definition, any class \Cs which is closed under inverse morphisms is also closed under inverse length increasing morphisms.

\begin{remark}
	The morphism $\alpha$ is length increasing if and only if $\alpha(a) \neq \varepsilon$ for any $a \in A$. Such morphisms are also called \emph{non-erasing}.
\end{remark}

\paragraph{Varieties.} We call \emph{\pvarie} (resp. \emph{\varie}) a \pvari (resp. \vari) which is closed under inverse morphisms. Similarly, we call \emph{\plivari} (resp \emph{\livari}) a \pvari (resp. \vari) which is closed under inverse length increasing morphisms.

\subsection{Decision problems}

Our main objective in this paper is to investigate two decision problems: separation and covering.  Both problems are parametrized by an arbitrary class of languages \Cs. We start with the definition of separation.

Given two languages $L_1,L_2$ over some alphabet $A$, we say that a language $K \subseteq A^*$ \emph{separates $L_1$ from $L_2$} if $L_1 \subseteq K$ and $L_2 \cap K = \emptyset$. Furthermore, given a class \Cs, we say that $L_1$ is \emph{\Cs-separable from} $L_2$ if there exists $K \in \Cs(A)$ which separates $L_1$ from $L_2$.

Given a class of languages \Cs, the \emph{\Cs-separation problem} takes as input two regular languages $L_1$ and $L_2$ and asks whether $L_1$ is \Cs-separable from $L_2$.

\begin{remark}
	Note that when \Cs is not closed under complement, this definition is not symmetric: it could happen that $L_1$ is \Cs-separable from $L_2$ while $L_2$ is not \Cs-separable from $L_1$.
\end{remark}

The covering problem is a generalization of separation that was originally defined in~\cite{2018:PlaceZeitoun,2016:PlaceZeitoun}. As pointed out in the introduction, in the particular case of varieties of regular languages, the covering problem is a translation of the more algebraic problem of the computation of pointlikes \cite{1999:Almeida}.

Given a language $L$ in $A^*$, a \emph{cover of $L$} is a finite set of languages \Kb in $A^*$ such that $L \subseteq \bigcup_{K \in \Kb} K$. For a given class \Cs, a \emph{\Cs-cover} of $L$ is  a cover \Kb of $L$ such that every $K \in \Kb$ belongs to \Cs. Additionally, given a finite multiset\footnote{We speak of multiset here in order to allow \Lb to have several copies of the same language. This is natural as the input for the covering problem is given by monoid morphisms recognizing the languages in \Lb, and it may happen that several distinct recognizers define the same language. This is harmless in practice.} of languages \Lb, we say that a finite set of languages \Kb is \emph{separating for \Lb} if for any $K\in\Kb$, there exists $L \in \Lb$ such that $K \cap L = \emptyset$ (that is: no language in \Kb intersects every language in \Lb).

Now consider a class \Cs. Given a language $L_1$ and a finite multiset of languages $\Lb_2$, we say that the pair $(L_1,\Lb_2)$ is \emph{\Cs-coverable} if there exists a \Cs-cover of $L_1$ which is separating for $\Lb_2$. The \emph{\Cs-covering problem} takes as input a regular language $L_1$ and a finite multiset of regular languages $\Lb_2$, and it asks whether $(L_1,\Lb_2)$ is $\Cs$-coverable.

We complete this definition by showing why covering generalizes separation, at least when the class \Cs is a lattice.

\begin{fct} \label{fct:septocove}
	Let \Cs be a lattice and $L_1,L_2$ two languages. Then $L_1$ is \Cs-separable from $L_2$, if and only if $(L_1,\{L_2\})$ is \Cs-coverable.
\end{fct}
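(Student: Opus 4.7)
The plan is to prove both implications by directly unfolding the definitions, observing that the lattice hypothesis is only needed for one direction.

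For the forward direction, I would suppose that $K \in \Cs(A)$ separates $L_1$ from $L_2$, so $L_1 \subseteq K$ and $K \cap L_2 = \emptyset$. Then the singleton $\Kb = \{K\}$ is a $\Cs$-cover of $L_1$, since $L_1 \subseteq K = \bigcup_{K' \in \Kb} K'$ and $K \in \Cs(A)$. Moreover, $\Kb$ is separating for $\{L_2\}$: the unique element $K$ of $\Kb$ satisfies $K \cap L_2 = \emptyset$. Hence $(L_1, \{L_2\})$ is $\Cs$-coverable. This direction requires no closure property.

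For the converse, I would suppose $(L_1, \{L_2\})$ is $\Cs$-coverable and pick a $\Cs$-cover $\Kb = \{K_1, \ldots, K_n\}$ of $L_1$ separating for $\{L_2\}$. The separating condition, applied with the single language $L_2 \in \{L_2\}$, forces $K_i \cap L_2 = \emptyset$ for every $i$. Setting $K = K_1 \cup \cdots \cup K_n$, the covering property gives $L_1 \subseteq K$, while disjointness from $L_2$ is preserved under finite unions, so $K \cap L_2 = \emptyset$. The one nontrivial point, and the only place the hypothesis is used, is that $K \in \Cs(A)$: this is exactly the closure of a lattice under finite unions applied to the $K_i \in \Cs(A)$. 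Thus $K$ witnesses $\Cs$-separability of $L_1$ from $L_2$.

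There is no real obstacle here; the proof is essentially a one-line observation in each direction. The only subtlety worth flagging explicitly is that the lattice hypothesis is used only in the backward direction and only via closure under finite unions, which is why the statement of the fact restricts to lattices rather than to arbitrary classes of languages.
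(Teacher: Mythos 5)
Your proof is correct and follows the paper's argument essentially verbatim: a singleton cover in the forward direction, and taking the union of the cover (using lattice closure under finite unions) in the backward direction. The additional remark that the lattice hypothesis is only used in the backward direction is accurate but not a departure from the paper's reasoning.
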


\begin{proof}
	Assume that $L_1$ is \Cs-separable from $L_2$ and let $K\in\Cs$ be a separator. It is immediate that $\Kb = \{K\}$ is a separating \Cs-cover for $(L_1,\{L_2\})$. Conversely, assume that $(L_1,\{L_2\})$ is \Cs-coverable and let \Kb be the separating \Cs-cover. The union $K$ of the languages in \Kb is in \Cs since \Cs is a lattice, and $L_1 \subseteq K$ by definition of a cover. Moreover, no language in \Kb intersects $L_2$ since \Kb was separating for $\{L_2\}$, that is, $L_2 \cap K = \emptyset$. Therefore $K \in \Cs$ separates $L_1$ from $L_2$. 
\end{proof}



\section{Logic and modular predicates}
\label{sec:logic}

We introduce here the formal definition of fragments of first-order logic and the classes of languages they define. 

\subsection{First-order logic}

Let us fix an alphabet $A$. Any word $w \in A^*$ may be viewed as a logical structure whose domain is the set of positions $\{0,\dots,|w|-1\}$ in $w$. In first-order logic, one can quantify over these positions and use a pre-determined signature of predicates.

More precisely: a \emph{predicate} is a symbol $P$ together with an integer $k \in \nat$ called the \emph{arity} of $P$. Moreover, whenever we consider a predicate $P$ of arity $k$, we shall assume that an \emph{interpretation} of $P$ is fixed: for every word $w \in A^*$, $P$ is interpreted as a $k$-ary relation over the set of positions of $w$. A \emph{signature} (over the alphabet $A$) is a (possibly infinite) set of predicate symbols $\frS = \{P_1,\dots,P_\ell,\dots\}$, all interpreted over words in $A^*$. Given such a signature \frS, we write $\fo[\frS]$ for the set of all first-order formulas over~\frS. As usual, we call sentence a formula with no free variable.

We use the classical semantic of first-order formulas to interpret formulas of $\fo[\frS]$ on a word $w \in A^*$. In particular, every $\fo[\frS]$ sentence $\varphi$ defines a language $L \subseteq A^*$. It contains all words $w \in A^*$ satisfying $\varphi$: $L = \{w \in A^* \mid w \models \varphi\}$. 

\begin{example}
	Assume that $A = \{a,b\}$ and consider the signature $\frS = \{a,b,<\}$ where $a$ (resp. $b$) is a unary predicate which selects positions labeled with the letter $a$ (resp. $b$) and $<$ is a binary predicate interpreted as the (strict) linear order. Then the following is an $\fo[\frS]$ sentence:
	\[
	\exists x \exists y\  x < y \wedge a(x) \wedge b(x) \wedge \neg (\exists z \ x < z \wedge z < y). 
	\]
	It defines the language $A^*abA^*$. 
\end{example}

In the paper, we shall consider signatures containing specific predicates. We present them now.

\paragraph{Label predicates.} All the signatures that we consider include the label predicates. For each letter $a\in A$, the unary \emph{label predicate} $a(x)$ is interpreted as the unary relation selecting all positions whose label is $a$. Observe that these predicates depend on the alphabet $A$ that we are using.  Abusing notation, we write $A$ for the set of label predicates over alphabet $A$.

\paragraph{Linear order.} Another predicate that is also always contained in our signatures is the binary predicate $<$ which is interpreted as the (strict) linear order over the positions.

Moreover, two natural sets of predicates are of particular interest for the paper: the local predicates and the modular predicates.

\paragraph{Local predicates.} There are four local predicates. They are as follows:
\begin{itemize}
	\item the binary predicate  $+1$ interpreted as the successor relation between positions;
	\item the unary predicate \textsf{min} selecting the leftmost position in the word;
	\item the unary predicate \textsf{max} selecting the rightmost position in the word;
	\item the constant (0-ary) predicate $\varepsilon$ which holds exactly when the word is empty.
\end{itemize}

\paragraph{Modular predicates.} There are infinitely many modular predicates. For every natural number $d > 1$ and $0 \le i < d$, 
\begin{itemize}
	\item the unary predicate $\MOD^d_i$ selects the positions $x$ that are congruent to $i$ modulo $d$;
	\item the constant $\D^d_i$ holds when the length of the word is congruent to $i$ modulo~$d$.
\end{itemize}
We denote by \MOD\ the (infinite) set of all modular predicates.

These are all the predicates that we shall consider. An important observation is that the order predicate $<$, the local and the modular predicates are examples of \emph{structural} predicates (also known as \emph{numerical} predicates, see \cite{1994:Straubing}): a $k$-ary predicate $P$ is structural if its interpretation is defined for every alphabet and is independent from the labels. More precisely, given two words $w,w'$ (possibly over different alphabets) having the same length $\ell$ and $k$ positions $i_1,\dots,i_k \leq \ell$, $P$ satisfies the following property:
\[
\text{$P(i_1,\dots,i_k)$ holds in $w$} \quad \text{if and only if} \quad \text{$P(i_1,\dots,i_k)$ holds in $w'$.}
\]
We say that a signature \frS is \emph{structural} if it contains only structural predicates.

\subsection{Fragments of first-order logic}\label{sec: fragments and logical enrichment}

A \emph{fragment} $\Fs$ of first-order logic consists in the specification of a (possibly finite) set $V_\Fs$ of variables and a correspondence $\Fs\colon \frS \mapsto \Fs[\frS]$ which associates with every signature \frS a set $\Fs[\frS] \subseteq \fo[\frS]$ of formulas over the signature \frS using only the variables in $V_\Fs$ and which satisfies the following properties:
\begin{itemize}
	\item for every signature \frS, every quantifier-free $\fo[\frS]$-formula belongs to $\Fs[\frS]$;
	\item $\Fs[\frS]$ is closed under conjunction and disjunction;
	\item $\Fs$ is closed under quantifier-free substitution: if $\frS,\frS'$ are signatures and $\phi$ is a formula in $\Fs[\frS]$, then $\Fs[\frS']$ contains every formula $\varphi'$ obtained by replacing each atomic formula in $\varphi$ by a quantifier-free formula of $\Fs[\frS']$.
\end{itemize}

\begin{example}
\fo itself is a fragment of first-order logic, and so is $\fo^2$, where $\fo^2[\frS]$ is the set of formulas in $\fo[\frS]$ which use at most two variable names.

The classes in the quantifier alternation hierarchy also provide interesting examples: given $n \geq 1$, a formula of $\fo[\frS]$ is in $\sic{n}[\frS]$ (resp. $\pic{n}[\frS]$) if its prenex normal form has $n-1$ quantifier alternations (i.e. $n$ blocks of quantifiers) and starts with an existential (resp. universal) quantifier, or if it has at most $n-2$ quantifier alternations. For example, a sentence whose prenex normal form is
\[
\exists x_1\enspace \forall x_2\enspace  \forall x_3 \enspace\exists x_4 \enspace\forall x_5
\ \varphi(x_1,x_2,x_3,x_4,x_5) \quad \text{(with $\varphi$ quantifier-free)}
\]
\noindent
is {\sic 4}. Clearly, $\sic{n+1}[\frS]$ contains both $\sic{n}[\frS]$ and $\pic{n}[\frS]$. Moreover, \sic{n} and \pic{n} formulas are not closed under negation and it is standard to also consider \bsc{n}, namely the Boolean combinations of \sic{n} formulas. 
\end{example}

A fragment \Fs and a structural signature \frS determine a class of languages, denoted by $\Fs(\frS)$. For every alphabet $A$, $\Fs(\frS)(A)$ contains the languages $L \subseteq A^*$ which can be defined by a sentence of $\Fs[A,\frS]$. That is, a sentence of \Fs over the signature containing the label predicates over $A$ and the structural predicates in \frS (whose interpretation is defined independently of the alphabet).

It is well known that the classes $\fow$, $\fodw$, $\bsw{n}$ are \varies, and that the classes $\siw{n}$ and $\piw{n}$ are \pvaries. The latter form a strict hierarchy whose union is \fow~\cite{1978:BrzozowskiKnast}.

\paragraph{Logical enrichment.} In the paper, we are not interested in a particular class. Instead, we consider a generic operation that one may apply to logically defined classes. Let \Fs be a fragment, \frS a structural signature and consider the class $\Fs(\frS)$. One may define a larger class by enriching \frS with additional predicates. We are mainly interested in enrichment by \emph{modular predicates}. If \Fs is a fragment and \frS is a structural signature, we write $\Fs(\frS,\MOD)$ for the class associated to \Fs and the enriched structural signature $\frS \cup \MOD$.

Our main objective is to investigate the separation and covering problems for classes of the form $\Fs(\frS,\MOD)$. Ideally, we would like to have a transfer theorem: a generic effective reduction from $\Fs(\frS,\MOD)$-covering to $\Fs(\frS)$-covering. Getting such a result remains an open problem. We obtain a slightly weaker one: while we do present such a reduction, it is only correct when the smaller class $\Fs(\frS)$ satisfies specific hypotheses. Informally, we require the original signature \frS to contain the set of local predicates ($+1$, \textsf{min}, \textsf{max} and $\varepsilon$).

This motivates us to introduce a notation for enrichment by local predicates. If \Fs is a fragment of first-order logic and \frS a structural signature, we denote by $\Fs(\frS,+1)$ the class of languages associated to \Fs and the enriched structural signature $\frS \cup \{+1,\textsf{min},\textsf{max}, \varepsilon\}$.

\begin{remark} \label{rem:fosucc}
	By combining the two above operations, every fragment \Fs and structural signature \frS yield four classes: $\Fs(\frS)$, $\Fs(\frS,+1)$, $\Fs(\frS,\MOD)$ or $\Fs(\frS,+1,\MOD)$. These four classes are distinct in most cases. However, it may not be the case in specific situations. We shall encounter one important such situation in the paper: \fow and \fows coincide (as well as \fowm and $\fo(<,+1,\MOD)$). This is because the local predicates are easily defined from the linear order when quantifications are unrestricted. For example, $x + 1= y$ is expressed by the formula $x < y \wedge \neg\exists z \left(x < z \wedge z< y \right)$. 
\end{remark}

It was shown in~\cite{2015:PlaceZeitoun,2017:PlaceZeitoun} that under mild assumptions on $\Fs(\frS)$ (general enough to capture all relevant examples), $\Fs(\frS,+1)$-covering can be effectively reduced to $\Fs(\frS)$-covering. Our main theorem (presented in Section~\ref{sec: main theorem}) extends this result: under the same assumptions (on $\Fs(\frS)$), there is another effective reduction from $\Fs(\frS,+1,\MOD)$-covering to $\Fs(\frS,+1)$-covering (and therefore to $\Fs(\frS)$-covering as well by transitivity). Combined with previously known results, this yields the decidability of covering for several classes of the form $\Fs(\frS,+1,\MOD)$.

Our approach is similar to that taken in~\cite{2017:PlaceZeitoun}. We exploit the fact that the addition of local or modular predicates to a structural signature admits a nice language-theoretic characterization. This is discussed in Section~\ref{sec:enrich}. We then formulate our main theorem precisely in Section~\ref{sec: main theorem}.


\section{The enrichment operation for classes of languages}
\label{sec:enrich}
Let $\Cs$ and $\Ds$ be classes of languages. We define in Section~\ref{sec: enrichment} the \emph{\Ds-enrichment of \Cs}, written $\Cs \circ \Ds$. Before we get to the precise definition, let us formulate two remarks.

\begin{remark}
	Enrichment is the language theoretic counterpart of an algebraic operation defined between varieties of semigroups: the \emph{wreath product}. In fact, we use the same notation as for the wreath product and our definition (taken from~\cite{2017:PlaceZeitoun}) is based on Straubing's so-called wreath product principle~\cite{1985:Straubing}. Much of the literature on this topic is written from an algebraic point of view, including the connection with logic. The language theoretic approach adopted here allows us to bypass some general machinery which is not needed in this context.
\end{remark}

\begin{remark}\label{rk: enrich with SU, MOD}
Whereas we define enrichment for arbitrary classes \Cs and \Ds, we will mainly be concerned with \Ds-enrichment when \Ds is the class \su of suffix languages or the class \md of modulo languages (defined below). If \Cs is defined by a fragment of first-order logic and a structural signature, these language-theoretic operations correspond to adding local or modular predicates to the signature, see Section~\ref{sec: enrichment and logic}. The main result of the paper, Theorem~\ref{thm:main}, states that, for any \pvarie \Cs, the covering problem associated to the class $(\Cs \circ \su) \circ \md$ reduces to the same problem for the class $\Cs \circ \su$. 
\end{remark}

\subsection{Enrichment of a class of languages}\label{sec: enrichment}

We first define a technical notion of \Pb-tagging. Let $A$ be an alphabet and \Pb a finite partition of $A^*$. For each word $u \in A^*$, we denote by $\ppart{u} \in \Pb$ the unique language in \Pb which contains $u$. We use the finite set $\Pb \times A$ as an extended alphabet and define a canonical map $\tau_\Pb\colon A^* \to (\Pb \times A)^*$ as follows. 

Let $w \in A^*$. If $w = \varepsilon$, then we let $\tau_\Pb(\varepsilon) = \varepsilon$. Otherwise, $w = a_1 \cdots a_n$ with $n \geq 1$ and $a_1,\dots,a_n \in A$, and we let $\tau_\Pb(w) = b_1 \cdots b_n$ with
\[
b_1 = (\ppart{\varepsilon},a_1)\in \Pb \times A \quad \text{and} \quad b_i = (\ppart{a_1 \cdots a_{i-1}},a_i) \in \Pb \times A \quad \text{for $2 \leq i \leq n$.}
\]
We call $\tau_\Pb(w)$ the \emph{\Pb-tagging} of $w$. It can be viewed as a simple relabeling: each position $i$ in $w$ is given a new label encoding its original label in $A$ and the unique language in \Pb which contains the prefix ending at position $i-1$.

\begin{example} \label{ex:tag}
	On alphabet $A = \{a,b\}$, consider the languages $P_0,P_1,P_2$, defined by $P_m = \{w \in A^* \mid |w| = m \mod 3\}$. Clearly, $\Pb = \{P_0,P_1,P_2\}$ is a partition of $A^*$. The \Pb-tagging of $w = babbbaaa$ is
	$\tau_\Pb(w) = (P_0,b)(P_1,a)(P_2,b)(P_0,b)(P_1,b)(P_2,a)(P_0,a)(P_1,a)$.
\end{example}  

\begin{remark}
	Note that the map $w \mapsto \tau_\Pb(w)$ is not a morphism, nor is it surjective in general. There are usually compatibility constraints between consecutive positions in $\tau_\Pb(w)$. This can be observed in Example~\ref{ex:tag}. On the other hand, $\tau_\Pb$ is clearly injective.
\end{remark}

The following fact is an immediate consequence of the definition.

\begin{fct} \label{fct:tagdecomp}
	Let $A$ be an alphabet and \Pb a finite partition of $A^*$. Then for any $a \in A$ and $u \in A^*$, we have $\tau_\Pb(ua) = \tau_\Pb(u) \cdot (\ppart{u},a)$.
\end{fct}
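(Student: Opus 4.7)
The plan is to verify the claim by a direct case analysis on whether $u$ is empty, unfolding the explicit definition of $\tau_\Pb$ given just before the statement. The fact is essentially a prefix-compatibility property of the tagging map, and it follows immediately once one observes that the tag attached to position $i$ in $\tau_\Pb(w)$ depends only on the strict prefix of $w$ of length $i-1$ (together with the letter at position $i$).

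First I would handle the base case $u = \varepsilon$. Here $ua = a$ is a single letter, so the definition (applied with $n=1$) yields $\tau_\Pb(a) = (\ppart{\varepsilon},a)$. Since $\tau_\Pb(\varepsilon) = \varepsilon$ and $\ppart{u} = \ppart{\varepsilon}$, the right-hand side $\tau_\Pb(u)\cdot(\ppart{u},a)$ reduces to $(\ppart{\varepsilon},a)$ as well, which settles this case.

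For the main case $u \neq \varepsilon$, I would write $u = a_1 \cdots a_{n-1}$ with $n \geq 2$ and set $a_n = a$, so that $ua = a_1 \cdots a_n$. Applying the definition to $ua$, the $i$-th letter of $\tau_\Pb(ua)$ for $1 \leq i \leq n-1$ is $(\ppart{a_1\cdots a_{i-1}}, a_i)$ (with the convention $a_1\cdots a_0 = \varepsilon$), which by the same definition applied to $u$ is exactly the $i$-th letter of $\tau_\Pb(u)$. The final letter of $\tau_\Pb(ua)$ is $(\ppart{a_1\cdots a_{n-1}}, a_n) = (\ppart{u}, a)$. Concatenating the two parts yields $\tau_\Pb(ua) = \tau_\Pb(u)\cdot(\ppart{u},a)$, as required.

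There is no real obstacle here; the statement is a routine bookkeeping fact. The only subtlety worth highlighting is that the definition uses $a_1\cdots a_0 = \varepsilon$ as the empty prefix for the first position, which is why the base case $u=\varepsilon$ fits seamlessly into the same pattern. Nothing beyond the definition of $\tau_\Pb$ is needed.
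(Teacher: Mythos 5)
Your proof is correct and is exactly the routine unfolding of the definition that the paper itself alludes to (the paper offers no written proof, stating only that the fact is an immediate consequence of the definition). The two-case split on $u=\varepsilon$ versus $u\neq\varepsilon$ tracks the two clauses of the definition of $\tau_\Pb$, and the letter-by-letter comparison is exactly right.
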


We now explain how one may use taggings to build new languages from those contained in a fixed class. Consider an arbitrary  class \Cs and an alphabet $A$. Let \Pb be a finite partition of $A^*$.  We say that a language $L \subseteq A^*$ is \emph{\Pb-liftable from \Cs} if there exist languages $L_P \in \Cs(\Pb \times A)$ for every $P \in \Pb$ such that
\[
L = \bigcup_{P \in \Pb} \left(\tau_\Pb\inv(L_P) \cap P\right).
\]
We may now define enrichment. Consider two classes of languages \Cs and \Ds. If $A$ is an alphabet, a \emph{finite} partition of $A^*$ into languages of \Ds is called a \emph{\Ds-partition of $A^*$}. The \emph{\Ds-enrichment of \Cs}, written $\Cs\circ \Ds$, is the class such that for every alphabet $A$, $(\Cs \circ \Ds)(A)$ consists of the languages $L \subseteq A^*$ which are \Pb-liftable from \Cs, for some \Ds-partition \Pb of $A^*$.

Before we investigate the properties of classes that are built with enrichment, let us present two technical lemmas. We shall use them to select appropriate \Ds-partitions for building languages in $\Cs \circ \Ds$.

\begin{lemma} \label{lem:refine}
	Let \Cs be a class closed under inverse length increasing morphism and $A$ an alphabet. Let \Pb and \Qb be finite partitions of $A^*$ such that \Pb refines \Qb. Then every language $L \subseteq A^*$ which is \Qb-liftable from \Cs is also \Pb-liftable from \Cs. 
\end{lemma}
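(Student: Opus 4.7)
The plan is to construct a length-preserving morphism $\alpha \colon (\Pb \times A)^* \to (\Qb \times A)^*$ relating the two taggings via the identity $\tau_\Qb = \alpha \circ \tau_\Pb$, and then to invoke the closure hypothesis on \Cs in order to transport the witnessing languages from the coarser framework to the finer one.

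Since \Pb refines \Qb, each $P \in \Pb$ is contained in a unique block of \Qb, which I denote by $f(P)$. I would define $\alpha$ to be the monoid morphism induced by the letter map $(P,a) \mapsto (f(P),a)$. This morphism is length-preserving, and therefore length-increasing. A straightforward induction on the length of $w$, based on Fact~\ref{fct:tagdecomp}, then establishes the identity $\alpha(\tau_\Pb(w)) = \tau_\Qb(w)$ for every $w \in A^*$; the key observation is that if $u$ lies in $\ppart{u}$, then $u$ also lies in $f(\ppart{u})$, whence $\qpart{u} = f(\ppart{u})$, which ensures that the letters produced by $\tau_\Pb$ and $\tau_\Qb$ match up after applying $\alpha$.

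Now suppose that $L$ is \Qb-liftable from \Cs, witnessed by languages $L_Q \in \Cs(\Qb \times A)$ satisfying $L = \bigcup_{Q \in \Qb} (\tau_\Qb\inv(L_Q) \cap Q)$. For each $P \in \Pb$ I would set $K_P = \alpha\inv(L_{f(P)})$; by closure of \Cs under inverse length-increasing morphisms, $K_P \in \Cs(\Pb \times A)$. It then remains to check the equality $L = \bigcup_{P \in \Pb} (\tau_\Pb\inv(K_P) \cap P)$. Fixing $w \in A^*$ and setting $P = \ppart{w}$, the disjointness of both partitions reduces the verification to the single equivalence $w \in \tau_\Qb\inv(L_{f(P)}) \iff w \in \tau_\Pb\inv(K_P)$, which follows directly from $\alpha \circ \tau_\Pb = \tau_\Qb$ and the definition $K_P = \alpha\inv(L_{f(P)})$.

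I expect no serious obstacle in executing this plan: the argument is essentially bookkeeping that exploits the compositional structure of the tagging map captured by Fact~\ref{fct:tagdecomp}, combined with the single closure property assumed on \Cs. The only mild subtlety worth flagging is that $\tau_\Pb$ is itself not a morphism, which is precisely why one must introduce the auxiliary morphism $\alpha$ on the tagged alphabets rather than attempt a direct comparison of the two taggings on $A^*$.
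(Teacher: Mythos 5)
Your proposal is correct and follows essentially the same route as the paper: you define the same letter-level morphism $\alpha\colon (\Pb\times A)^*\to(\Qb\times A)^*$ (your $f(P)$ is the paper's $Q_P$), and you conclude by transporting the witnessing languages through $\alpha\inv$ using closure of \Cs under inverse length-increasing morphisms. The paper leaves the verification that $L=\bigcup_{P\in\Pb}\bigl(\tau_\Pb\inv(\alpha\inv(L_{Q_P}))\cap P\bigr)$ as a routine check, whereas you make the underlying identity $\alpha\circ\tau_\Pb=\tau_\Qb$ explicit; this is a presentational difference only.
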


\begin{proof}
	Let $L \subseteq A^*$ be \Qb-liftable from \Cs. There are languages $L_Q \in \Cs(\Qb \times A)$ for every $Q \in \Qb$ such that
	\[
	L = \bigcup_{Q \in \Qb} \left(\tau_\Qb\inv(L_Q) \cap Q\right).
	\]
	Since \Pb refines \Qb, for every $P \in \Pb$, there exists a unique language $Q_P \in \Qb$ such that $P \subseteq Q_P$. We use this to define a length increasing morphism $\alpha: (\Pb \times A)^* \to (\Qb \times A)^*$. For every letter $(P,a) \in \Pb \times A$, we define $\alpha((P,a)) = (Q_P,a)$. It is now simple to verify from the definitions that
	\[
	L = \bigcup_{P \in \Pb} \left(\tau_\Pb\inv(\alpha\inv(L_{Q_P})) \cap P\right).
	\]
	Since \Cs is closed under inverse length increasing morphism, it is immediate that for every $P \in \Pb$, we have $\alpha\inv(L_{Q_P}) \in \Cs(\Pb \times A)$. Therefore $L$ is \Pb-liftable from \Cs, as desired.	
\end{proof}

The second lemma shows that we may always work with a \Ds-partition which satisfies some additional properties (provided that \Ds is sufficiently robust). Recall that a finite partition \Pb of $A^*$ is a (finite index) \emph{congruence} if $\ppart{u}\ppart{v}$ is contained in $\ppart{uv}$ for every $u, v \in A^*$. In that case, the set \Pb is a monoid for the operation $\ppart{u} \cmult \ppart{v} = \ppart{uv}$ and the map $w \mapsto \ppart{w}$ is a morphism from $A^*$ to \Pb. We talk of a \emph{\Ds-congruence} if \Pb is a \Ds-partition and a finite index congruence.

\begin{lemma} \label{lem:partmult}
	Let \Cs be a class closed under inverse length increasing morphism and \Ds a \vari. Let $A$ be an alphabet and $L \in (\Cs \circ \Ds)(A)$. Then there exists a \Ds-congruence \Pb of $A^*$ such that $L$ is \Pb-liftable from \Cs.
\end{lemma}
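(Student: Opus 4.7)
The plan is to start from an arbitrary $\Ds$-partition witnessing that $L \in (\Cs \circ \Ds)(A)$ and refine it into a $\Ds$-partition which happens to be a congruence, using the standard trick of closing under quotients. Since $\Cs$ is closed under inverse length increasing morphisms, Lemma~\ref{lem:refine} will let us transport the liftability property from the original partition to the refined one.

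More precisely, the hypothesis provides a $\Ds$-partition $\Qb = \{Q_1, \ldots, Q_n\}$ of $A^*$ such that $L$ is $\Qb$-liftable from $\Cs$. Consider the family
\[
\Fs = \{\, u\inv Q_i v\inv \mid 1 \leq i \leq n,\ u,v \in A^* \,\}.
\]
Each $Q_i$ is regular (being in $\Ds \subseteq \reg$), hence has only finitely many distinct two-sided quotients; therefore $\Fs$ is finite. Moreover, since $\Ds$ is a \varie, it is closed under left and right quotients, so $\Fs \subseteq \Ds(A)$. Let $\Es$ be the finite Boolean algebra generated by $\Fs$ inside the power set of $A^*$. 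Because $\Ds$ is closed under Boolean operations, we have $\Es \subseteq \Ds(A)$. Let $\Pb$ be the set of atoms of $\Es$; then $\Pb$ is a finite partition of $A^*$ and $\Pb \subseteq \Ds(A)$, i.e. $\Pb$ is a $\Ds$-partition.

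It remains to verify that $\Pb$ is a congruence and to conclude. The key observation is that $\Fs$ is itself closed under quotients by words, since $w\inv(u\inv Q_i v\inv) = (uw)\inv Q_i v\inv$ and similarly on the right; as quotients commute with Boolean operations, $\Es$ is therefore closed under left and right quotients by arbitrary words. Consequently, for every $w \in A^*$ and every atom $P \in \Pb$, the set $w\inv P$ (resp.\ $P w\inv$) belongs to $\Es$ and is thus a union of atoms. From this one deduces in the standard way that $u \mapsto \ppart{u}$ is compatible with concatenation: if $\ppart{u_1} = \ppart{u_2}$ and $\ppart{v_1} = \ppart{v_2}$, setting $P = \ppart{u_1 v_1}$, the inclusion $\ppart{u_1} \subseteq P v_1\inv$ forces $u_2 v_1 \in P$, and the symmetric argument with $v_2$ then gives $u_2 v_2 \in P$. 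Hence $\Pb$ is a $\Ds$-congruence refining $\Qb$, and Lemma~\ref{lem:refine} yields that $L$ is $\Pb$-liftable from $\Cs$.

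The only delicate point is the finiteness of $\Fs$, which relies on the tacit assumption, stated at the beginning of Section~\ref{sec:prelims}, that all classes of languages considered in the paper consist of regular languages; everything else amounts to the well-known fact that a finite Boolean algebra of subsets of $A^*$ closed under quotients defines a monoid congruence on $A^*$.
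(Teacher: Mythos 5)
Your proof is correct and follows the same strategy as the paper: apply Lemma~\ref{lem:refine} to reduce the task to constructing a $\Ds$-congruence refining the given $\Ds$-partition $\Qb$, and build that congruence from the (finitely many) two-sided quotients of the languages in $\Qb$ using closure of $\Ds$ under quotients and Boolean operations. The only difference is cosmetic: the paper outsources this construction to a cited folklore fact (\cite[Prop.\ 1.3]{1995:RhodesWeil}), producing the congruence as the kernel classes of a morphism $\eta\colon A^*\to M$, whereas you spell out the equivalent construction directly as the atoms of the finite Boolean algebra generated by the quotients $u\inv Q_i v\inv$ and verify the congruence property by hand.
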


\begin{proof}
	By definition $L$ is \Qb-liftable from an arbitrary \Ds-partition \Qb of $A^*$. Thus, by Lemma~\ref{lem:refine}, it suffices to show that there exists a \Ds-congruence \Pb which refines \Qb. Since \Qb is a finite set, it is folklore\footnote{It is important here that all languages considered in the paper are regular.} (see e.g. \cite[Prop. 1.3]{1995:RhodesWeil}) that one may construct a finite monoid $M$ and a surjective morphism $\eta: A^* \to M$ satisfying the following properties.
	\begin{itemize}
		\item Every $Q \in \Qb$ is recognized by $\eta$.
		\item Every language recognized by $\eta$ is a Boolean combination of languages having the form $u\inv Q v\inv$ with $Q \in \Qb$ and $u,v \in A^*$.
	\end{itemize}
	For every $s \in M$, let $P_s = \eta\inv(s)$. We let $\Pb = \{P_s \mid s \in M\}$. Clearly, \Pb is a finite partition of $A^*$ and a congruence. The first assertion implies that \Pb refines \Qb and the second one that \Pb is a \Ds-partition since \Ds is a \vari.
\end{proof}

\subsection{Closure properties of enrichment}\label{sec: closure enrichment}

The definition of enrichment makes sense for any two classes \Cs and \Ds. However, one needs a few hypotheses on \Cs and \Ds for it to be robust. The technical proposition below summarizes the closure properties we will need.

\begin{proposition} \label{prop:closureenrich}
	Let \Cs be a \plivari and \Ds a \livari. Then $\Cs \circ \Ds$ is a \plivari containing \Cs and \Ds. 
\end{proposition}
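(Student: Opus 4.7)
The plan is to verify the four closure properties one by one, using Lemmas~\ref{lem:refine} and~\ref{lem:partmult} whenever one needs to replace an arbitrary witnessing $\Ds$-partition by a more convenient refinement or congruence.

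I would start with the inclusions. For $\Cs\subseteq\Cs\circ\Ds$, take the trivial $\Ds$-partition $\Pb=\{A^*\}$; the tagging $\tau_\Pb$ is then the length-preserving isomorphism $\beta\colon a\mapsto(A^*,a)$, and for $L\in\Cs(A)$ the language $L_{A^*}:=\gamma\inv(L)$, with $\gamma\colon(\Pb\times A)^*\to A^*$ the projection, lies in $\Cs(\Pb\times A)$ by closure of $\Cs$ under inverse length-increasing morphisms and satisfies $\tau_\Pb\inv(L_{A^*})=L$. For $\Ds\subseteq\Cs\circ\Ds$, given $L\in\Ds(A)$ the partition $\Pb=\{L,A^*\setminus L\}\setminus\{\emptyset\}$ is a $\Ds$-partition because $\Ds$ is Boolean, and choosing $L_P=(\Pb\times A)^*$ when $P\subseteq L$ and $L_P=\emptyset$ otherwise (both in $\Cs$) exhibits $L$ as $\Pb$-liftable. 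Lattice closure is then a direct use of Lemma~\ref{lem:refine}: for $L_1,L_2\in(\Cs\circ\Ds)(A)$ witnessed by partitions $\Pb_1,\Pb_2$, the common refinement $\Pb=\{P_1\cap P_2\mid P_i\in\Pb_i\}\setminus\{\emptyset\}$ is a $\Ds$-partition (since $\Ds$ is a lattice), and $L_1\cup L_2$ and $L_1\cap L_2$ are $\Pb$-liftable via the pointwise union, respectively intersection, of the witnessing $\Cs$-families indexed by $\Pb$.

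For quotient closure I would use Lemma~\ref{lem:partmult} to choose a $\Ds$-\emph{congruence} $\Pb$ witnessing $L=\bigcup_{Q\in\Pb}\tau_\Pb\inv(L_Q)\cap Q$, so that $\Pb$ inherits a monoid structure. A short induction from Fact~\ref{fct:tagdecomp} yields, for all $u,v\in A^*$, the decompositions
\[
\tau_\Pb(uv)=\tau_\Pb(u)\cdot\varphi_{\ppart{u}}(\tau_\Pb(v))\quad\text{and}\quad\tau_\Pb(vu)=\tau_\Pb(v)\cdot\varphi_{\ppart{v}}(\tau_\Pb(u)),
\]
where $\varphi_s\colon(\Pb\times A)^*\to(\Pb\times A)^*$ is the length-preserving morphism sending $(Q,a)$ to $(s\cdot Q,a)$. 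Setting $L'_Q:=\varphi_{\ppart{u}}\inv\bigl(\tau_\Pb(u)\inv L_{\ppart{u}\cdot Q}\bigr)$ then exhibits $u\inv L$ as $\Pb$-liftable, and setting $L'_Q:=L_{Q\cdot\ppart{u}}\cdot\varphi_Q(\tau_\Pb(u))\inv$ does the same for $Lu\inv$; in both cases each $L'_Q$ lies in $\Cs$ by quotient-closure combined with closure under inverse length-increasing morphisms.

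The main technical step is closure under inverse length-increasing morphisms. Given such a morphism $\alpha\colon A^*\to B^*$ and $L\in(\Cs\circ\Ds)(B)$ witnessed by a $\Ds$-congruence $\Qb$ of $B^*$, I would pull $\Qb$ back to $\Pb:=\{\alpha\inv(Q)\mid Q\in\Qb\}\setminus\{\emptyset\}$. Closure of $\Ds$ under inverse length-increasing morphisms makes $\Pb$ a $\Ds$-partition, and a short calculation using the monoid structure of $\Qb$ shows it is again a congruence; let $Q_P$ denote the unique element of $\Qb$ with $\alpha\inv(Q_P)=P$. The crux is to construct a single length-increasing morphism $\tilde\alpha\colon(\Pb\times A)^*\to(\Qb\times B)^*$ satisfying $\tilde\alpha\circ\tau_\Pb=\tau_\Qb\circ\alpha$: writing $\alpha(a)=c_1\cdots c_k$, set
\[
\tilde\alpha((P,a))=(Q_P,c_1)(Q_P\cdot\qpart{c_1},c_2)\cdots(Q_P\cdot\qpart{c_1\cdots c_{k-1}},c_k),
\]
and verify the intertwining identity by induction on word length, using that $\Qb$ is a congruence. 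Then $L'_P:=\tilde\alpha\inv(L_{Q_P})$ lies in $\Cs(\Pb\times A)$ and witnesses $\alpha\inv(L)$ as $\Pb$-liftable from $\Cs$. This last construction---defining $\tilde\alpha$ and checking its intertwining identity---is where I expect the bulk of the work to lie; the other three properties reduce to routine manipulations of quotients and inverse morphisms inside $\Cs$ once one has the right witnessing partition.
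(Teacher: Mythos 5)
Your proposal is correct and follows essentially the same route as the paper: trivial/two-block partitions for the containments, common refinement via Lemma~\ref{lem:refine} for lattice closure, a $\Ds$-congruence via Lemma~\ref{lem:partmult} together with the relabeling morphisms $\varphi_s$ for quotients, and the intertwining morphism $\tilde\alpha$ (your explicit letter-by-letter formula unfolds exactly to the paper's $\beta((Q,b))=\lambda_P(\tau_\Pb(\alpha(b)))$) for inverse length-increasing morphisms. The only superfluous step is noting that the pulled-back partition is itself a congruence, which the paper never uses.
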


\begin{proof}
	We fix the \plivari \Cs and the \livari of regular languages \Ds for the proof. There are several properties of $\Cs \circ \Ds$ to show.
	
	\paragraph{\bf Containment.} We first prove that $\Cs \circ \Ds$ contains \Cs and \Ds. Consider an alphabet $A$. We start with \Ds. Let $L \in \Ds(A)$. Since $\Ds$ is a Boolean algebra, $\Pb =\{L,A^* \setminus L\}$ is a \Ds-partition of $A^*$. Moreover $(\Pb \times A)^*$ and $\emptyset$ lie in the lattice $\Cs(\Pb \times A)$. The equality
	\[
	L = \left(\tau_\Pb\inv((\Pb \times A)^*) \cap L\right) \cup\left(\tau_\Pb\inv(\emptyset) \cap \left(A^* \setminus L\right)\right)
	\]
	then shows that $L \in (\Cs \circ \Ds)(A)$.	We turn to \Cs. Consider $L \in \Cs(A)$. Let $\Qb = \{A^*\}$ which is clearly a \Ds-partition of $A^*$. Moreover consider the morphism $\alpha: (\Qb \times A)^* \to A^*$ defined by $\alpha(A^*,a) = a$ for all $a\in A$. Since \Cs is closed under inverse length increasing morphism, we have $\alpha\inv(L) \in \Cs(\Qb \times A)$. It is now clear that
	\[
		L = \tau_{\Qb}\inv(\alpha\inv(L)) \cap A^*.
	\]
	This yields as desired that $L \in (\Cs \circ \Ds)(A)$.
	
	\paragraph{\bf Union and intersection.}	We now prove that $\Cs \circ \Ds$ is a lattice. Consider an alphabet $A$. First we verify that $A^*, \emptyset \in \Cs \circ \Ds$. This follows from the observation that $\Pb = \{A^*\}$ always is a \Ds-partition and $L_{A^*} = A^*$ always sits in the lattice \Cs: then $A^* = \bigcup_{P\in\Pb}\tau_P\inv(L_P) \cap P$ is in $\Cs \circ \Ds$. The same holds for $\emptyset$ if we let $L_{A^*} = \emptyset$. Now let $K,L \in (\Cs \circ \Ds)(A)$. We show that $K \cup L$ and $L \cap K$ belong to $(\Cs \circ \Ds)(A)$ as well. By definition there exist \Ds-partitions $\Pb_K$ and $\Pb_L$ of $A^*$ such that $K$ and $L$ are respectively $\Pb_K$-liftable and $\Pb_L$-liftable from \Cs. Since \Ds is a Boolean algebra, it is immediate that we may construct a third \Ds-partition \Pb of $A^*$ which refines both $\Pb_K$ and $\Pb_L$. By Lemma~\ref{lem:refine}, this implies that $K$ and $L$ are both \Pb-liftable from \Cs. Therefore there exist languages $K_P,L_P \in \Cs(\Pb \times A)$, for every $P \in \Pb$, such that
	\[
	K = \bigcup_{P \in \Pb} \left(\tau_{\Pb}\inv(K_P) \cap P\right) \quad \text{ and } \quad L = \bigcup_{P \in \Pb} \left(\tau_{\Pb}\inv(L_P) \cap P\right).
	\]
	Since inverse images commute with Boolean operations, it follows that
	\[
	K \cup L = \bigcup_{P \in \Pb} \left(\tau_{\Pb}\inv(K_P \cup L_P) \cap P\right)\quad\textrm{ and }\quad K \cap L = \bigcup_{P \in \Pb} \left(\tau_{\Pb}\inv(K_P \cap L_P) \cap P\right).
	\]
	Since \Cs is a lattice, it is immediate that $K_P \cup L_P$ and $K_P \cap L_P$ belong to $\Cs(\Pb \times A)$ for every $P \in \Pb$. Consequently $K \cup L$ and $K \cap L$ are \Pb-liftable from \Cs and belong to $(\Cs \circ \Ds)(A)$.
	
	We turn to quotients and inverse morphism. Let $A,B$ be alphabets, $\alpha: B^* \to A^*$ a length increasing morphism and $u \in A^*$. Consider $L \in (\Cs\circ\Ds)(A)$. We show that $u\inv L$ and $L u\inv$ belong to $(\Cs \circ \Ds)(A)$ and $\alpha\inv(L)$ belongs to $(\Cs \circ \Ds)(B)$. Let us start with an observation.
	
	By hypothesis on \Ds and Lemma~\ref{lem:partmult}, there exists a finite index \Ds-congruence \Pb of $A^*$ such that $L$ is \Pb-liftable from \Cs. It follows that there exist languages $L_P \in \Cs(\Pb \times A)$ for every $P \in \Pb$ such that
	\[
	L = \bigcup_{P \in \Pb} \left(\tau_{\Pb}\inv(L_P) \cap P\right).
	\]
Since \Pb is a congruence, \Pb is a monoid for an operation denoted by $\cmult$, such that the map $w \mapsto \ppart{w}$ is a morphism. If $P \in \Pb$, we denote by $\lambda_P$ the (length increasing) morphism from $(\Pb \times A)^*$ to itself given by $\lambda_P((Q,a)) = (P \cmult Q,a)$ for every $(Q,a)\in \Pb \times A$. One may verify from the definitions that for every $u,v \in A^*$, we have
	\[
	\tau_{\Pb}(uv) = \tau_{\Pb}(u) \lambda_{\ppart{u}}(\tau_{\Pb}(v)).
	\]
	
	\paragraph{\bf Right quotients.} We first show that $L u\inv \in (\Cs \circ \Ds)(A)$. Let $P \in \Pb$ and $v\in P$. Then $v\in Lu\inv$ if and only if $vu\in L$, if and only if $\tau_{\Pb}(vu) \in L_Q$, where $Q = \ppart{vu} = \ppart{v} \cmult \ppart{u} = P \cmult \ppart{u}$. Since $\tau_{\Pb}(vu) = \tau_{\Pb}(v)\ \lambda_{\ppart{v}}(\tau_{\Pb}(u)) = \tau_{\Pb}(v)\ \lambda_{P}(\tau_{\Pb}(u))$, we find that
	$$Lu\inv \cap P = \{v \in A^* \mid \tau_{\Pb}(v) \in L_Q (\lambda_{P}(\tau_{\Pb}(u)))\inv\} \cap P.$$
	Then if we let $M_P = L_{P \cmult \ppart{u}} (\lambda_{P}(\tau_{\Pb}(u)))\inv$ for each $P\in \Pb$, we get
	\[
	Lu\inv = \bigcup_{P \in \Pb} \left(\tau_{\Pb}\inv(M_P) \cap P\right).
	\]
	Since $\Cs$ is quotient-closed, this establishes that $Lu\inv \in (\Cs\circ\Ds)(A)$.
	
	\paragraph{\bf Left quotients.} We turn to the proof that $u\inv L \in (\Cs \circ \Ds)(A)$. Let $P \in \Pb$ and $v\in P$. Then $v \in u\inv L$ if and only if $uv \in L$, if and only if $\tau_{\Pb}(uv) \in L_{\ppart{u} \cmult P}$, if and only if $\tau_{\Pb}(u) \lambda_{\ppart{u}}(\tau_{\Pb}(v)) \in L_{\ppart{u} \cmult P}$. Therefore, if we let $M_P = \lambda_{\ppart{u}}\inv(\tau_{\Pb}(u))\inv (L_{\ppart{u} \cmult P})$, we find that
	\[
	u\inv L = \bigcup_{P \in \Pb} \left(\tau_{\Pb}\inv(M_P) \cap P\right).
	\]
	Since $\Cs$ is quotient-closed and closed under length increasing morphisms, this establishes that $u\inv L \in (\Cs\circ\Ds)(A)$, and concludes the proof for quotients.
	
	\paragraph{\bf Inverse length increasing morphisms.} It remains to show that $\alpha\inv(L) \in (\Cs \circ \Ds)(B)$. Consider the finite partition \Qb of $B^*$ given by
	\[
	\Qb = \{\alpha\inv(P) \mid P \in \Pb \text{ and } \alpha\inv(P) \neq \emptyset\}.
	\]
	Then \Qb is a \Ds-partition since \Pb is one and \Ds is closed under inverse length increasing morphisms. 
	
	We define a morphism $\beta\colon (\Qb \times B)^* \to (\Pb \times A)^*$ as follows. Let $(Q,b)$ be a letter in $\Qb \times B$. By definition of \Qb, $Q = \alpha\inv(P)$ for a uniquely determined $P\in \Pb$ and in particular, $P = \ppart{\alpha(u)}$ for any $u\in Q$. Then we let
	\[
	\beta((Q,b)) = \lambda_P(\tau_{\Pb}(\alpha(b))).
	\]
	Clearly, this defines a morphism $\beta\colon (\Qb \times B)^* \to (\Pb \times A)^*$. Moreover, it is length increasing since both $\lambda_{P}$ and $\alpha$ are.
	
	\begin{fct} \label{fact:definemorph}
		The morphism $\beta$ satisfies $\beta \circ \tau_{\Qb} = \tau_{\Pb} \circ \alpha$. 
	\end{fct}
	
	\begin{proof}
		We verify that $\beta \circ \tau_{\Qb}(w) = \tau_\Pb \circ \alpha(w)$ for every $w \in B^*$, by induction on the length of $w$. The result is trivial if $w = \varepsilon$ and we now assume that $|w| \geq 1$. Then $w = ub$ for some letter $b \in B$ and word $u \in B^*$. By Fact~\ref{fct:tagdecomp}, $\tau_{\Qb}(w) = \tau_{\Qb}(u) \cdot (\qpart{u},b)$ and it follows, by induction, that
		\[
		\beta(\tau_{\Qb}(w)) = \beta(\tau_{\Qb}(u)) \ \beta((\qpart{u},b)) = \tau_{\Pb}(\alpha(u)) \ \beta((\qpart{u},b)).
		\]
		By definition, $\beta((\qpart{u},b)) = \lambda_P(\tau_{\Pb}(\alpha(b)))$, where $P = \ppart{\alpha(u)}$. Therefore
		$$\beta(\tau_{\Qb}(w)) = \tau_{\Pb}(\alpha(u)) \ \lambda_{\ppart{\alpha(u)}}(\tau_{\Pb}(\alpha(b))) = \tau_{\Pb}(\alpha(u) \alpha(b)) =  \tau_{\Pb}(\alpha(ub)) =  \tau_{\Pb}(\alpha(w)),$$
		as desired.
	\end{proof}
	
	Recall that we have languages $L_P \in \Cs(\Pb \times A)$ for every $P \in \Pb$ such that
	\[
	L = \bigcup_{P \in \Pb} \left(\tau_{\Pb}\inv(L_P) \cap P\right).
	\]
	Fact~\ref{fact:definemorph} shows that
	$$\alpha\inv(L) = \bigcup_{P \in \Pb} \left(\alpha\inv(\tau_{\Pb}\inv(L_P)) \cap \alpha\inv(P)\right) = \bigcup_{P \in \Pb} \left(\tau_{\Qb}\inv(\beta\inv(L_P)) \cap \alpha\inv(P)\right).$$
	For each $Q \in \Qb$, we note that $Q= \alpha\inv(P)$ for a uniquely defined $P \in \Pb$, and we let $H_Q = \beta\inv(L_P)$. Since \Cs is closed under inverse (length increasing) morphisms, every $H_Q$ lies in $\Cs(\Qb \times B)$ and we have
	\[
	\alpha\inv(L) = \bigcup_{Q \in \Qb} (\tau_{\Qb}\inv(H_Q) \cap Q),
	\]
	showing directly that $\alpha\inv(L) \in (\Cs \circ \Ds)(B)$. This concludes the proof of Proposition~\ref{prop:closureenrich}.
\end{proof}

\subsection{Enrichment by modulo languages}\label{sec: enrichment and logic}

As mentioned in Remark~\ref{rk: enrich with SU, MOD}, we are mainly interested in the special instances of \Ds-enrichment for two particular classes \Ds. This is because these operations are the language-theoretic counterparts of the logical enrichment operations with local and modular predicates that we introduced in Section~\ref{sec: fragments and logical enrichment}.

We begin with the class \md (\emph{modulo languages}). We first define it and then show that \md-enrichment corresponds exactly to enriching signatures with modular predicates. For every alphabet $A$, $\md(A)$ consists of the finite Boolean combinations of languages of the form $\{w \in A^* \mid |w| = m \mod d\}$, with $m,d \in \nat$ and $m < d$. The following is immediately verified.

\begin{proposition} \label{prop:modvari}
	 The class \emph{\md} is a \vari.
\end{proposition}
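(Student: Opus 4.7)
My plan is to unpack the definition of a \vari\ --- a Boolean algebra of languages closed under left and right quotients by arbitrary words --- and check each condition separately for the class \md. The Boolean algebra part is essentially built into the definition: for every alphabet $A$, $\md(A)$ is stipulated to consist of the finite Boolean combinations of basic languages $L_{m,d} = \{w \in A^* \mid |w| \equiv m \pmod d\}$, so closure under $\cup$, $\cap$, and complement is immediate. One only needs to note that $A^* = L_{0,1}$ and $\emptyset = A^* \setminus L_{0,1}$ belong to $\md(A)$.

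For quotient closure, I would first verify the claim on basic languages. Fixing $u \in A^*$ with $|u| = k$, a direct computation shows $u^{-1} L_{m,d} = \{v \in A^* \mid k + |v| \equiv m \pmod d\} = L_{m',d}$ with $m' = (m - k) \bmod d$, and by the same reasoning $L_{m,d}\, u^{-1} = L_{m',d}$. Both are again basic modulo-languages, hence in $\md(A)$. The general case then follows from the elementary fact that quotient commutes with finite Boolean operations: writing an arbitrary $L \in \md(A)$ as a Boolean combination of basic languages, $u^{-1} L$ is the same Boolean combination of their quotients, each of which was just shown to lie in $\md(A)$; and symmetrically for $L u^{-1}$.

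I expect no real obstacle: every step reduces to an arithmetic identity modulo $d$, which is why the paper treats the proposition as ``immediately verified.'' The one subtlety worth flagging --- and the reason the statement stops at \vari\ rather than \varie --- is that \md\ is \emph{not} closed under inverse morphisms in general: a morphism $\alpha\colon A^* \to B^*$ sending distinct letters to words of distinct lengths can turn a basic language $L_{m,d} \subseteq B^*$ into a language whose membership depends on the multiplicities of individual letters of $A$ in $w$, rather than only on $|w|$, which is precisely what $\md(A)$ cannot express.
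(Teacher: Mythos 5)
Your proof is correct, and since the paper offers no written argument for this proposition (it says only that it is ``immediately verified''), the elementary verification you supply is precisely the intended one. The two steps you identify are the right ones: the Boolean-algebra part is built into the definition (modulo the observation that $A^*$ and $\emptyset$ are available), and quotient-closure reduces, via the fact that word quotients commute with Boolean operations, to the arithmetic computation $u^{-1} L_{m,d} = L_{m,d}\, u^{-1} = L_{m',d}$ with $m' = (m - |u|) \bmod d$. Your concluding remark on why \emph{\md} is a \vari but not a \varie is also accurate and matches the example the paper gives in the remark immediately following the proposition ($\alpha(a) = aa$, $\alpha(b) = b$ pulls the even-length language back to a language depending on $|w|_b$ rather than on $|w|$ alone).
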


\begin{remark}
	\md is not closed inverse morphisms, nor even under inverse length increasing morphisms. Indeed, let $A = \{a,b\}$ and consider the language $L = \{w \in A^* \mid |w| = 0 \mod 2\} \in \md$. Let $\alpha\colon  A^* \to A^*$ be the morphism defined by $\alpha(a) = aa$ and $\alpha(b) = b$. One may verify that $\alpha\inv(L) = \{w \in A^* \mid 2|w|_a + |w|_b = 0 \mod 2\}$ (here $|w|_a$ denotes the number of copies of the letter ``$a$'' occurring in $w$), a language outside \md. It is known that \md is closed under a weaker variant of inverse morphisms: inverse length multiplying morphisms (we shall not need this property).  A morphism $\alpha\colon  A^* \to B^*$ is \emph{length multiplying} when there exists $k \geq 1$ depending only on $\alpha$ such that for every $w \in A^*$, we have $|\alpha(w)| = k |w|$.	
\end{remark}

We complete this definition with the following lemma. It states an elementary, yet useful property of \md.

\begin{lemma} \label{lem:mdnormalform}
	Let $A$ be an alphabet and $L \in \emph{\md}(A)$. There exists a natural number $k \geq 1$ such that for every $w,w' \in A^*$, if $|w|$ and $|w'|$ are congruent modulo $k$, then $w \in L$ if and only if $w' \in L$.
\end{lemma}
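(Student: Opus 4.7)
The plan is a direct induction on the structure of $L$ as a finite Boolean combination of the generating languages $L_{m,d} = \{w \in A^* \mid |w| \equiv m \bmod d\}$.

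First I would handle the base case: if $L = L_{m,d}$, then $k = d$ works, since membership of $w$ in $L$ depends only on the residue of $|w|$ modulo $d$, and if $|w| \equiv |w'| \pmod d$ then $|w| \equiv m \pmod d$ iff $|w'| \equiv m \pmod d$. Next I would establish a stability lemma for the property itself: if $k$ works for $L$, then any positive multiple of $k$ also works, because congruence modulo a multiple of $k$ implies congruence modulo $k$.

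With that stability in hand, the Boolean combinations are easy. Suppose $L_1, L_2 \subseteq A^*$ admit witnesses $k_1, k_2$. Let $k = \mathrm{lcm}(k_1,k_2)$ (or simply $k_1k_2$). By the stability observation, $k$ is a valid witness for both $L_1$ and $L_2$. Then whenever $|w| \equiv |w'| \pmod k$, we have simultaneously $w \in L_1 \Leftrightarrow w' \in L_1$ and $w \in L_2 \Leftrightarrow w' \in L_2$, and hence the same equivalence holds for $L_1 \cup L_2$, $L_1 \cap L_2$, and $A^* \setminus L_1$.

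Since by definition $L \in \md(A)$ is obtained from finitely many generators $L_{m_1,d_1}, \dots, L_{m_n,d_n}$ by finitely many Boolean operations, iterating the argument yields a single common $k$ (for instance, $k = d_1 \cdots d_n$) that works for $L$. There is no real obstacle here; the only small care is to make the inductive invariant about the \emph{existence} of a common $k$, which is why the monotonicity-in-$k$ step is needed before tackling Boolean combinations.
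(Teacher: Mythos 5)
Your proof is correct and matches the paper's approach: the paper also takes $k$ to be the least common multiple of the moduli $d$ appearing in the Boolean combination. You have simply spelled out in more detail, via structural induction and the monotonicity-in-$k$ observation, what the paper compresses into a single sentence.
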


\begin{proof}
	By definition, $L$ is a Boolean combination of languages of the form $\{w \in A^* \mid |w| = m \mod d\}$ for some $m,d \in \nat$. It suffices to let $k$ be the least common multiple of these $d$.  
\end{proof}

We now connect \md-enrichment with the logical operation of enriching signatures with modular predicates. For every fragment \Fs and every structural signature \frS, the equality $\Fs(\frS) \circ \md = \Fs(\frS,\MOD)$ holds. This result is essentially folklore. However, the proofs available in the literature only apply to specific fragments and signatures. For example, a proof that $\bswm{1} = \bsw{1} \circ \md$ is available in~\cite{2006:ChaubardPinStraubing-LICS}. Here, we properly establish the generic correspondence.

\begin{theorem} \label{thm:wform:logco}
	Let $\Fs$ be a fragment of first-order logic and let \frS be a structural signature. Then $\Fs(\frS) \circ \emph{\md} = \Fs(\frS,\MOD)$.
\end{theorem}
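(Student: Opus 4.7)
The strategy is a back-and-forth translation between sentences of $\Fs(\frS,\MOD)$ over $A$ and sentence-family descriptions of liftings from $\Fs(\frS)$ over enriched alphabets $\Pb\times A$, where $\Pb$ is an $\md$-partition. The pivotal observation is that if $\Pb=\{P_0,\dots,P_{k-1}\}$ is the partition of $A^*$ by length modulo $k$, then in $\tau_{\Pb}(w)$ the first coordinate of the letter at (0-indexed) position $x$ is precisely $P_{x\bmod k}$; dually, the atom $\MOD^k_j(x)$ holds exactly when the first coordinate of that letter is $P_j$. This yields a dictionary between label atoms $(Q,a)(x)$ over $\Pb\times A$ and quantifier-free formulas over $A\cup\frS\cup\MOD$, and everything else then follows from the closure of $\Fs$ under Boolean operations and quantifier-free substitution, which is the abstract property that will make the argument work uniformly for every fragment.

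For the inclusion $\Fs(\frS)\circ\md\subseteq\Fs(\frS,\MOD)$, I start from $L=\bigcup_{P\in\Pb}(\tau_{\Pb}^{-1}(L_P)\cap P)$ with each $L_P$ defined by some $\varphi_P\in\Fs[\frS,\Pb\times A]$. By Lemma~\ref{lem:mdnormalform} there exists $k\geq 1$ such that every $P\in\Pb$ is a union of mod-$k$ residue classes, indexed by some $I_P\subseteq\{0,\dots,k-1\}$. I then define $\widehat{\varphi}_P\in\Fs[\frS,\MOD,A]$ from $\varphi_P$ by a single quantifier-free substitution: every label atom $(Q,a)(x)$ is replaced by $a(x)\wedge\bigvee_{j\in I_Q}\MOD^k_j(x)$, and every atom coming from $\frS$ is preserved. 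The sentence $\bigvee_{P\in\Pb}\bigl(\widehat{\varphi}_P\wedge\bigvee_{j\in I_P}\D^k_j\bigr)$ then defines $L$ and lies in $\Fs[\frS,\MOD,A]$ by closure under conjunction and disjunction.

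For the reverse inclusion, given $L$ defined by $\varphi\in\Fs[\frS,\MOD,A]$, I let $k$ be the least common multiple of the moduli occurring in $\varphi$ and take $\Pb=\{P_0,\dots,P_{k-1}\}$ with $P_j=\{w:|w|\equiv j\pmod k\}$, which is plainly an $\md$-partition. For each $j$ I define $\widehat{\varphi}_j\in\Fs[\frS,\Pb\times A]$ via the quantifier-free substitution that sends every $a(x)$ to $\bigvee_{Q\in\Pb}(Q,a)(x)$, every $\MOD^d_i(x)$ to $\bigvee_{\ell\equiv i\pmod d}\bigvee_{a\in A}(P_\ell,a)(x)$, every constant $\D^d_i$ to $\top$ if $j\equiv i\pmod d$ and to $\bot$ otherwise, and preserves the atoms of $\frS$. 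A routine structural induction then shows that for $w\in P_j$, $w\models\varphi$ iff $\tau_{\Pb}(w)\models\widehat{\varphi}_j$, giving $L=\bigcup_j(\tau_{\Pb}^{-1}(L(\widehat{\varphi}_j))\cap P_j)\in\Fs(\frS)\circ\md$.

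The only delicate point is to pin down the indexing convention so that the ``first coordinate'' reading of each letter in $\tau_{\Pb}(w)$ genuinely matches $\MOD^k_j(x)$ on $A$-words; once that is settled, the entire argument is driven by the closure under Boolean connectives and quantifier-free substitution built into the definition of a fragment. This is precisely what makes the proof uniform across all fragments $\Fs$, in contrast with the case-by-case arguments cited in the paper.
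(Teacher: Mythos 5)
Your proposal is correct and takes essentially the same approach as the paper: both directions are driven by quantifier-free substitution of label atoms, based on the observation that the first coordinate of position $x$ in $\tau_\Pb(w)$ is determined by $x$ modulo the relevant period, and hence translatable to/from $\MOD^k_j(x)$. The only (cosmetic) difference is that you normalize the whole $\md$-partition to a single modulus $k$ via Lemma~\ref{lem:mdnormalform} before substituting, whereas the paper factors the same content into its Lemma~\ref{lem:prefix} for an individual $Q \in \md(A)$; the resulting formulas and the role of closure under quantifier-free substitution are the same.
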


\begin{proof}
	The two directions in the proof are handled separately.

\paragraph{From $\Fs(\frS) \circ \md$ to $\Fs(\frS,\MOD)$}
Let $L$ be a language in $(\Fs(\frS) \circ \md)(A)$. By definition, there exists a \md-partition \Pb of $A^*$ and a language $L_P \subseteq (\Pb \times A)^*$ in $\Fs(\frS)$ for every $P \in \Pb$ such that
\[
L = \bigcup_{P \in \Pb} \left(\tau_\Pb\inv(L_P) \cap P \right).
\]
We want to show that $L \in \Fs(\frS,\MOD)(A)$. Since $\Fs$ is a fragment of first-order logic, it is closed under conjunctions and disjunctions, and it suffices to prove that for each $P \in \Pb$, both $P$ and $\tau_\Pb\inv(L_P)$ are defined by $\Fs[A,\frS,\MOD]$ sentences. The argument is based on the following lemma which is immediate from the definition of $\md(A)$ and \MOD.

\begin{lemma} \label{lem:prefix}
	For any language $Q \in \emph{\md}(A)$, the following properties holds:
	\begin{enumerate}
		\item $Q$ is defined by a quantifier-free sentence $\psi_Q$ of $\Fs[\MOD]$.
		\item There exists a quantifier-free formula $\xi_Q(x)$ in $\Fs[\MOD]$ with one free variable such that for any $w = a_0 \cdots a_n \in A^*$ and any position $x$ in $w$, $w \models \xi_Q(x)$ if and only if $a_0 \cdots a_{x-1} \in Q$. 
	\end{enumerate}
\end{lemma}

Consider $P \in \Pb$. Since \Pb is a \md-partition, Lemma~\ref{lem:prefix}~(1) shows that $P$ is defined by the sentence $\psi_P$ in $\Fs[\MOD]$, which is contained in $\Fs[A,\frS,\MOD]$.

Now consider the language $\tau_\Pb\inv(L_P) \subseteq A^*$. By definition, $L_P \subseteq (\Pb \times A)^*$ is defined by a sentence $\phi_P$ in $\Fs[\Pb \times A,\frS]$. We build an $F[A,\frS,\MOD]$-sentence $\hat\phi_P$ defining $\tau_\Pb\inv(L_P)$ by applying quantifier-free substitutions to $\phi_P$.

Note that the sentence $\varphi_P$ contains two kinds of atomic formulas: those involving label predicates, of the form $(Q,a)(x)$ for some $(Q,a) \in \Pb \times A$, and those involving structural predicates in \frS. The sentence $\hat\phi_P$ is obtained from $\phi_P$ by replacing each atomic sub-formula $(Q,a)(x)$ in $\phi_P$ ($(Q,a) \in \Pb \times A$) by the quantifier-free $\Fs[A,\MOD]$-formula:
\[
\xi_Q(x) \wedge a(x) \quad \text{where $\xi_Q(x)$ is as given by Lemma~\ref{lem:prefix} (2).}
\]
Then $\hat\phi_P$ belongs to $\Fs[A,\frS,\MOD]$ since every fragment is closed under quantifier-free substitutions. Moreover, one can verify directly from the definition of $\tau_\Pb\colon  A^* \to (\Pb  \times A)^*$ that $\hat\phi_P$ defines $\tau_\Pb\inv(L_P)$. This concludes the proof that $L$ is defined by a sentence in $\Fs(\frS,\MOD)$.

\paragraph{From $\Fs(\frS,\MOD)$ to $\Fs(\frS) \circ \md$}

We now want to show that a language $L$ in $\Fs(\frS,\MOD)$ lies in $\Fs(\frS) \circ \md$. First, we define an appropriate \md-partition \Pb of $A^*$.

Let $\phi$ be a sentence of $\Fs[A,\frS,\MOD]$ defining $L$ and let $p$ be the least common multiple of the integers $d \geq 1$ such that $\varphi$ contains a modular predicate of the form $\MOD^d_j(x)$ or $\D^d_j$ for some $j < d$ ($p$ is well-defined since $\varphi$ contains finitely many predicates; we let $p = 1$ if $\phi$ contains no modular predicate). For every $i < p$, we define
\[
P_i = \{w \in A^* \mid |w| = i \mod p\} \in \md
\]
and we let $\Pb = \{P_i \mid i < p\}$. Clearly \Pb is a \md-partition of $A^*$ and the following fact holds.

\begin{fct} \label{fct:replacemod}
	For any predicate $\MOD^d_j(x)$ occurring in $\varphi$, there exists a quantifier-free formula $\zeta^d_{j}(x)$ of $\Fs[\Pb \times A,\frS]$ such that for any $w \in A^*$ and any position $x \geq 0$ in $w$, we have
	\[
	w \models \MOD^d_j(x) \quad \text{if and only if} \quad \tau_\Pb(w) \models \zeta^d_{j}(x).
	\]
\end{fct}

\begin{proof}
	We let
	\[
	\zeta^d_{j}(x) = \bigvee_{a \in A} \left(\bigvee_{\{i < p \mid j = i\!\mod d\}} (P_i,a)(x)\right).
	\]
	The result follows since $p$ is a multiple of $d$.
\end{proof}

Next we exhibit languages $L_P \subseteq (\Pb \times A)^*$ for every $P \in \Pb$, all in $\Fs(\frS)$ and such that
\[
L = \bigcup_{P \in \Pb} \left(P \cap \tau_\Pb\inv(L_P)\right), 
\]
and for that purpose, we rely on the following lemma.

\begin{lemma} \label{lem:consformula}
	For any $P \in \Pb$, there exists a sentence $\psi_P$ of $\Fs[\Pb \times A,\frS]$ such that for any $w \in P$, the following equivalence holds:
	\[
	w \models \varphi \quad \text{if and only if} \quad \tau_\Pb(w) \models \psi_P.
	\]
\end{lemma}

Let us assume for a moment that Lemma~\ref{lem:consformula} holds. Then, for each $P \in \Pb$, we let $L_P$ be the language in $(\Pb \times A)^*$ defined by the sentence $\psi_P$ provided by Lemma~\ref{lem:consformula}. By definition, $L_P\in \Fs(\frS)$, and the equality
\[
L = \bigcup_{P \in \Pb} \left(P \cap \tau_\Pb\inv(L_P)\right) 
\]
follows immediately from the lemma. We finish with the proof of Lemma~\ref{lem:consformula}.

\begin{proof}[Proof of Lemma~\ref{lem:consformula}]
	By definition $P = P_i$ for some $i < p$ (i.e. $P$ is the language of all words whose length is congruent to $i$ modulo $p$). We build $\psi_P$ from $\phi$ in $\Fs[A,\frS,\MOD]$ by replacing each atomic sub-formula by a quantifier-free $\Fs[\Pb \times A,\frS]$ formula as follows:
	\begin{itemize}
		\item the atomic formula $a(x)$ ($a \in A$) is replaced by
		\[
		\bigvee_{Q \in \Pb} (Q,a)(x);
		\]
		\item an atomic formula involving the structural predicates in \frS remains unchanged;
		\item the atomic formula $\MOD^d_j(x)$ (with $j < d$) is replaced by the formula $\zeta^d_{j}(x)$ given by Fact~\ref{fct:replacemod};
		\item the atomic formula $\D^d_j$ (with $j < d$) is replaced by $\top$ if $i$ and $j$ are congruent modulo $d$ and $\bot$ otherwise.
	\end{itemize}
	Since $P = P_i = \{w \in A^* \mid |w| = i \mod p\}$ and $p$ is a multiple of $d$, if $j < p$, then either every word in $P$ satisfies $\D^d_j$ (exactly when $i$ and $j$ are congruent modulo $d$) or no word in $P$ does. It is now straightforward to verify that $\psi_P$ satisfies the desired property: for any $w \in P$, $w \models \varphi$ if and only if $\tau_\Pb(w) \models \psi_P$.	
\end{proof}
This concludes the proof of Theorem~\ref{thm:wform:logco}.
\end{proof}

\subsection{Enrichment by suffix languages}

We now consider enrichment of logical signatures by local predicates. In most cases, it corresponds to \su-enrichment where \su denotes the class of suffix languages, defined below.

\begin{remark}
	There is a significant difference with what happened for \md-enrichment and modular predicates. The correspondence stated in Theorem~\ref{thm:wform:logco} is \emph{generic}. This is not the case here. It is true that in almost all relevant cases, the equality $\Fs(\frS) \circ \su = \Fs(\frS,+1)$ holds. However, this is not a generic theorem and there are counter-examples. For example, consider $\fo(\emptyset)$ (i.e. the structural signature is empty and only the label predicates are available). It turns out that the class $\fo(+1)$ is strictly larger than $\fo(\emptyset) \circ \su$.
	
	In practice, establishing the equality $\Fs(\frS) \circ \su = \Fs(\frS,+1)$ requires a proof that is specific to $\Fs(\frS)$, and often technical and tedious. Fortunately, this has already been achieved for the classes that we consider in the paper.
\end{remark}

Let us first define \su. For every alphabet $A$, $\su(A)$ consists of the finite Boolean combinations of languages of the form $A^*w$, for some $w \in A^*$. These languages are sometimes called \emph{definite}, see \emph{e.g.}, \cite{1976:Eilenberg,1985:Straubing}. Again, the following is folklore and elementary.

\begin{proposition} \label{prop:suvar}
	The class \emph{\su} is an \livari.
\end{proposition}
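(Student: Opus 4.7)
The plan is to verify in turn each of the three defining properties of an \livari: Boolean closure, quotient closure, and closure under inverse length increasing morphisms. Boolean closure is immediate from the very definition of $\su(A)$ as the set of finite Boolean combinations of basic languages of the form $A^*w$. Moreover, since quotients and inverse images commute with Boolean operations, it suffices to verify the remaining two closure properties on the generators $A^*w$, and the conclusion then extends to all Boolean combinations.

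For quotients, I would simply compute $u\inv(A^*w)$ and $(A^*w)u\inv$ explicitly for arbitrary $u,w \in A^*$, by case analysis on the relative lengths of $u$ and $w$. For the right quotient: if $|u|\geq |w|$, then $vu$ ends with $w$ independently of $v$, so $(A^*w)u\inv$ is either $A^*$ or $\emptyset$; if $|u|<|w|$, then $vu$ ends with $w$ iff $u$ is a suffix of $w$ (say $w=w'u$) and $v$ ends with $w'$, giving $A^*w'$ or $\emptyset$. For the left quotient, the analysis shows that $u\inv(A^*w)=A^*w \cup F$, where $F$ is a finite set consisting of those short words $v$ with $|v|<|w|$ such that $w=sv$ for some suffix $s$ of $u$. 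Since $\{v\}=A^*v\setminus \bigcup_{a\in A}A^*av$ lies in $\su(A)$, every finite set is in $\su(A)$, and the left quotient is therefore in $\su(A)$ as well.

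For inverse length increasing morphisms $\alpha\colon B^*\to A^*$, the length increasing hypothesis is what makes the argument work: setting $k=|w|$, for any $v\in B^*$ with $|v|\geq k$, writing $v=v''v'$ with $|v'|=k$, one has $|\alpha(v')|\geq k=|w|$, so $\alpha(v)$ ends with $w$ iff $\alpha(v')$ does. Letting $V=\{v'\in B^k \mid \alpha(v') \in A^*w\}$, a finite set, we obtain
\[
\alpha\inv(A^*w) \cap \{v \mid |v|\geq k\} \;=\; \bigcup_{v'\in V} B^*v',
\]
which is a finite union of basic suffix languages, while the remaining portion $\alpha\inv(A^*w)\cap\{v\mid |v|<k\}$ is a finite set and hence lies in $\su(B)$ as observed above. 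Thus $\alpha\inv(A^*w)\in \su(B)$. The only subtle point is precisely the use of length-increasingness: without it, short letters could collapse arbitrarily long factors of $v$, and membership in $A^*w$ could no longer be determined by any bounded suffix of $v$. This is exactly why \su is only an \livari, and not a \varie.
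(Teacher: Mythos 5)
The paper does not actually give a proof of Proposition~\ref{prop:suvar}; it simply remarks that the statement is ``folklore and elementary.'' So there is no paper argument to compare against. Your proof is correct, complete, and follows the natural route: verify the three defining properties of an \livari, and for quotients and inverse length increasing morphisms, reduce to the generators $A^*w$ by observing that both operations commute with Boolean operations.

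All the case analyses check out. For the right quotient, the two cases $|u|\geq|w|$ and $|u|<|w|$ indeed exhaust the possibilities and produce $A^*$, $\emptyset$, or $A^*w'$. For the left quotient, the decomposition $u\inv(A^*w)=A^*w\cup F$ with $F$ a finite set of words shorter than $w$ is right, and the identity $\{v\}=A^*v\setminus\bigcup_{a\in A}A^*av$ correctly places singletons (and hence finite sets) in $\su$. For inverse length increasing morphisms, the key observation $|\alpha(v')|\geq|v'|=|w|$ ensures that membership of $\alpha(v)$ in $A^*w$ is determined by the length-$|w|$ suffix $v'$ of $v$, which gives the finite union of basic suffix languages plus a finite remainder. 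Your closing remark on why length-increasingness is indispensable is also accurate and clarifies why $\su$ is an \livari rather than a \varie.
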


We consider a natural stratification $(\su_k)_k$ within \su: for every $k \in \nat$ and every alphabet $A$, we let $\su_k(A)$ be the set of Boolean combinations of languages of the form $A^*w$, where $w \in A^*$ and $|w| \leq k$. Note that every $\su_k(A)$ is finite. Each stratum $\su_k$ is an \livari. Moreover, for every alphabet $A$, we have
\[
\su_k(A) \subseteq \su_{k+1}(A) \text{ for every $k \in \nat$} \qquad \text{and} \qquad \bigcup_{k \in \nat} \su_k(A) = \su(A).
\]
Given an alphabet $A$ and an integer $k\in\nat$, the equivalence relation $\keqsu$ on $A^*$ is defined as follows if $w,w' \in A^*$, we let $w \keqsu w'$ if and only if the following condition holds:
\[
\text{For every language $L \in \su_k(A)$,} \quad w\in L \Leftrightarrow w' \in L.
\]
In other words, $w \keqsu w'$ if and only if $w = w'$ or $|w|, |w'| \ge k$ and $w$ and $w'$ have the same length $k$ suffix. Note that \keqsu has finite index since $\su_k$ is finite. Since every stratum $\su_k$ is a Boolean algebra, the following lemma is immediate.

\begin{lemma} \label{lem:canoeq}
	Let $k \in \nat$. For every alphabet $A$, the languages in $\emph{\su}_k(A)$ are exactly the unions of \keqsu-classes.
\end{lemma}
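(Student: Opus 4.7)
The lemma asserts that $\su_k(A)$, viewed as a Boolean algebra of languages, consists precisely of the $\sim_k$-saturated subsets of $A^*$. The plan is to prove the two inclusions separately, with the forward direction essentially definitional and the reverse direction being the substantive one.

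For the forward inclusion, I would argue directly from the definition of $\sim_k$: if $L \in \su_k(A)$, $w \in L$ and $w \sim_k w'$, then $w' \in L$ as well, so $L$ is a union of $\sim_k$-classes.

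For the reverse inclusion, the key observation is that $\su_k(A)$ is a \emph{finite} Boolean algebra (generated by the finite family of languages $A^*w$ with $|w|\le k$), and therefore $\sim_k$ has finite index. Consequently, any $\sim_k$-saturated language is a \emph{finite} union of $\sim_k$-classes, and it suffices to prove that every single $\sim_k$-class belongs to $\su_k(A)$; closure under finite unions then finishes the job.

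The concrete step, and the only place requiring a small case distinction, is exhibiting each $\sim_k$-class as a Boolean combination of languages of the form $A^*v$ with $|v|\le k$. I would split by the length of a representative $w$. If $|w|\ge k$ and $u$ is the length-$k$ suffix of $w$, then by the explicit description of $\sim_k$ recalled just before the lemma, the class of $w$ is exactly $A^*u$, which belongs to $\su_k(A)$. If $|w|<k$, then $w$ is alone in its class, and one can write
\[
\{w\} \;=\; A^*w \;\setminus\; \bigcup_{a\in A} A^*aw,
\]
where each word $aw$ has length $|w|+1\le k$, so every language appearing on the right-hand side lies in $\su_k(A)$, and hence so does $\{w\}$ by Boolean closure. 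I expect this short-word case to be the only mildly technical point; everything else reduces to bookkeeping inside a finite Boolean algebra.
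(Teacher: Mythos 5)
Your proof is correct. The paper does not actually spell out an argument: it notes that $\su_k(A)$ is a (finite) Boolean algebra and declares the lemma ``immediate'', which implicitly invokes the standard abstract fact that a finite Boolean algebra of subsets of $X$ consists exactly of the unions of equivalence classes for the induced equivalence (each class being $\bigcap_{L \ni w} L \cap \bigcap_{L \not\ni w}(A^*\setminus L)$, a finite Boolean combination). You reach the same conclusion by a more concrete route: after the same finiteness reduction, you exhibit each $\keqsu$-class explicitly as a Boolean combination of generators $A^*v$ with $|v|\le k$, using the explicit description of $\keqsu$ given just before the lemma --- $A^*u$ for the length-$k$ suffix $u$ when $|w|\ge k$, and $A^*w\setminus\bigcup_{a\in A}A^*aw$ when $|w|<k$. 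Both arguments are sound; yours is more elementary and self-contained (it does not appeal to the general atom/saturation fact), at the cost of the small case analysis, while the paper's is shorter but leaves the Boolean-algebra bookkeeping to the reader.
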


%

As announced, it is known that for important fragments of first-order, \su-enrichment corresponds the logical operation of enrichment by local predicates. This goes back to Straubing \cite{1985:Straubing} for the quantifier alternation hierarchies, and to Thérien and Wilke \cite{1998:TherienWilke} for \fod. Place and Zeitoun reformulated these results in terms of language class enrichment in \cite[Prop. 5.2 and 6.2]{2017:PlaceZeitoun}.

\begin{theorem}\label{thm: su for classical fragments}
Let $\Fs$ be one of the following fragments of first-order logic: \emph{\fod}, \sic{n}, \bsc{n} ($n\ge 1$). Then $\Fs(<) \circ \emph{\su} = \Fs(<,+1)$.
\end{theorem}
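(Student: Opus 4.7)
The plan is to prove the two inclusions separately, treating $\Fs$ as one of $\fod$, $\sic{n}$, $\bsc{n}$. In both directions I first invoke Lemma~\ref{lem:refine} together with Lemma~\ref{lem:canoeq}: since every $\su$-partition of $A^*$ is refined by the canonical partition $\Pb_k$ into $\keqsu$-classes for some $k$, I may assume throughout that the $\su$-partition has the form $\Pb_k$.

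For the inclusion $\Fs(<) \circ \su \subseteq \Fs(<,+1)$, I would translate a sentence $\phi \in \Fs[\Pb_k \times A, <]$ witnessing $\Pb_k$-liftability into an equivalent sentence $\tilde\phi \in \Fs[A, <, +1, \min, \max, \varepsilon]$. The translation is not a quantifier-free substitution (the $i$-th predecessor of a position is not quantifier-freely accessible through $+1$), but a careful fragment-preserving rewriting. For $\sic{n}$ and $\bsc{n}$, each existential block $\exists x$ is expanded into $\exists x\, \exists x_1 \cdots \exists x_k$ subject to the constraints $x_i + 1 = x_{i-1}$ (with $x_0 = x$) or a $\min(x_{i-1})$ fallback, and each label atom $(Q,a)(x)$ is rewritten as $a(x)$ together with a Boolean condition on the labels at $x_1, \ldots, x_k$ that determines membership in the $\keqsu$-class $Q$; universal blocks are rewritten dually, and alternation depth is preserved since only homogeneous quantifiers are added. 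Membership in the class $P$ itself is encoded analogously using $\max$. For $\fod$ the auxiliary $x_1, \ldots, x_k$ cannot appear simultaneously, so they must be eliminated by nested $\exists/\exists$ alternations that reuse the two variables and walk one neighbor at a time — this is the technically delicate case but is a standard two-variable manipulation.

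For the converse inclusion $\Fs(<,+1) \subseteq \Fs(<) \circ \su$, I would use an Ehrenfeucht-Fraïssé game argument adapted to each fragment: the two-pebble game for $\fod$, and the alternation-controlled game for $\sic{n}$ and $\bsc{n}$. The key statement to prove is that, for every game rank $r$, there exists $k$ such that whenever $\tau_{\Pb_k}(w)$ and $\tau_{\Pb_k}(w')$ have the same $\Fs[<]$-type of rank $r$, the words $w$ and $w'$ have the same $\Fs[<,+1,\min,\max,\varepsilon]$-type of rank $r$. The duplicator transports her strategy by noting that each $+1$-neighbor of a spoiler-pebbled position is canonically identified by the tag $(Q,a)$ of the appropriate position in the $k$-neighborhood; conversely, if the spoiler pebbles a position $y$ in the tagged game, the duplicator uses the $(Q,a)$ labels at $y$ and its neighbors to answer. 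Back-translating via the standard game-to-formula correspondence yields, for each $P \in \Pb_k$, a sentence $\psi_P \in \Fs[\Pb_k \times A, <]$ equivalent to $\phi$ on words in $P$, which is exactly the $\Pb_k$-liftability required.

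The main obstacle is that the EF-game analysis is irreducibly fragment-specific. For $\fod$ it is the argument of Thérien and Wilke~\cite{1998:TherienWilke} characterizing $\fod(<,+1)$ via the wreath product $\davar \ast \mathbf{D}$; for $\sic{n}$ and $\bsc{n}$ it is Straubing's~\cite{1985:Straubing} syntactic normalization, which threads the elimination of $+1$ through the prenex alternation structure without introducing spurious alternations. Each analysis must fit within the syntactic budget of $\Fs$ — two variables for $\fod$, alternation depth for $\sic{n}$ and $\bsc{n}$ — which is why no uniform proof is possible: the equality $\Fs(<) \circ \su = \Fs(<,+1)$ provably fails for $\Fs = \fo(\emptyset)$, as noted in the paper. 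In the write-up I would invoke the language-theoretic reformulations of Place and Zeitoun~\cite{2017:PlaceZeitoun} (their Propositions~5.2 and~6.2) rather than reproving the EF-game content, and confine the new material to the uniform setup and the fragment-by-fragment formula translation in the forward direction.
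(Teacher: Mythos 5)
The paper does not prove this theorem at all: it is stated as a known result, citing Straubing~\cite{1985:Straubing} and Th\'erien--Wilke~\cite{1998:TherienWilke} for the original results and~\cite[Prop.~5.2 and~6.2]{2017:PlaceZeitoun} for the language-theoretic reformulation. Your proposal ultimately defers to exactly the same references, so it matches the paper's approach, and the EF-game and formula-translation sketch you add on top is a reasonable high-level account of what those sources contain --- including the correct observation that the hard (converse) direction is irreducibly fragment-specific, consistent with the paper's own remark about the $\fo(\emptyset)$ counterexample.
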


Another useful result is that applying \su-enrichment to \fow does not build a larger class. This not surprising: as we explained in Remark~\ref{rem:fosucc}, the classes \fow and \fows coincide. We prove this in the following proposition.

\begin{proposition} \label{prop: su for fo}
	We have $\emph{\fow} \circ \emph{\su} = \emph{\fow}$.
\end{proposition}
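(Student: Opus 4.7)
The inclusion $\fow \subseteq \fow \circ \su$ is immediate from Proposition~\ref{prop:closureenrich}: since $\fow$ is a variety (hence a \plivari) and $\su$ is an \livari by Proposition~\ref{prop:suvar}, the class $\fow \circ \su$ contains $\fow$. The nontrivial direction is $\fow \circ \su \subseteq \fow$, and my plan is to mimic the argument used in the proof of Theorem~\ref{thm:wform:logco} for $\md$-enrichment, while exploiting the observation of Remark~\ref{rem:fosucc} that $\fow = \fows$: in first-order logic with unrestricted quantification, the successor and the extremal positions are definable from the linear order, so we can freely manipulate local predicates without leaving $\fow$.

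Concretely, let $L \in (\fow \circ \su)(A)$, so there exists a $\su$-partition $\Pb$ of $A^*$ and languages $L_P \in \fow(\Pb \times A)$ for each $P \in \Pb$ with
\[
L = \bigcup_{P \in \Pb} \left(\tau_\Pb\inv(L_P) \cap P\right).
\]
Since $\fow(A)$ is a Boolean algebra, it suffices to check that each $P$ and each $\tau_\Pb\inv(L_P)$ lies in $\fow(A)$. For $P \in \Pb \subseteq \su(A)$, this is standard: $P$ is a Boolean combination of languages of the form $A^*w$, and each such language is definable in $\fows = \fow$ by fixing the labels of the last $|w|$ positions.

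The core of the argument is then, as in Lemma~\ref{lem:prefix}, to show that for every $Q \in \su(A)$ there exists an $\fow[A]$ formula $\xi_Q(x)$ with one free variable such that for every word $w = a_0 \cdots a_n$ and every position $x$, $w \models \xi_Q(x)$ if and only if $a_0 \cdots a_{x-1} \in Q$. Since $Q$ is a Boolean combination of suffix languages $A^*v$, checking that the prefix of $w$ ending at position $x-1$ ends with a fixed word $v$ only requires inspecting the labels of the $|v|$ positions immediately preceding $x$, which is expressible in $\fow$ via the first-order definition of successor. Given such formulas $\xi_Q$, we build a sentence $\hat\phi_P$ defining $\tau_\Pb\inv(L_P)$ from a sentence $\phi_P \in \fow[\Pb \times A]$ defining $L_P$ by quantifier-free substitution: each atomic formula $(Q,a)(x)$ is replaced by $a(x) \wedge \xi_Q(x)$, while order atoms are kept unchanged. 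Closure of $\fow$ under quantifier-free substitution yields $\hat\phi_P \in \fow[A]$.

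The main (and only real) obstacle is the construction of the $\xi_Q$ formulas; this is routine once one unwinds the definition of suffix languages and recalls that successor is $\fow$-definable, so the proof should be short and essentially a direct transcription of the $\md$-to-$\MOD$ pattern of Theorem~\ref{thm:wform:logco}, with modular predicates replaced by successor-based ones and with the observation that the resulting formula stays inside $\fow$ rather than requiring an enriched signature.
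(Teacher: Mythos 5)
Your approach coincides with the paper's: both inclusions are handled identically, and the crucial technical ingredient — a one-free-variable formula $\xi_Q(x)$ expressing that the prefix of $w$ ending just before position $x$ lies in $Q \in \su(A)$ — is exactly the paper's Fact~\ref{fct:fosucc} (where it is called $\psi_K(x)$).

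There is, however, one genuine flaw in the justification of your final step. You claim that replacing each atomic formula $(Q,a)(x)$ in $\phi_P$ by $a(x) \wedge \xi_Q(x)$ produces a formula in $\fow[A]$ \emph{by closure of \fow under quantifier-free substitution}. That closure property does not apply here: the formulas $\xi_Q(x)$ are not quantifier-free over the signature $\{A,<\}$. Asserting that the $|v|$ positions immediately preceding $x$ carry the letters of $v$ requires existential quantification over those positions (this is precisely how the paper builds $\psi_K(x)$), and moreover the first-order definition of successor from $<$ itself contains a quantifier. The definition of a fragment in Section~3.2 requires the substituted formulas to be quantifier-free, so the cited closure property is not the right tool. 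The conclusion is still correct, but for a different reason: \emph{full} first-order logic — unlike a proper fragment — is trivially closed under substitution of arbitrary $\fo$-formulas for atomic formulas, and this is what the paper uses implicitly when it states directly that $\zeta'$ belongs to $\fo[A,<]$. This is also exactly the point where the analogy with Theorem~\ref{thm:wform:logco} breaks down: there, the substituted formulas \emph{are} quantifier-free because the modular predicates belong to the target signature, so the fragment-level closure property applies; here they do not, and one must instead rely on the special fact that the fragment at hand is all of $\fo$. You should replace the appeal to quantifier-free substitution with this observation.
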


\begin{proof}
	Since \fow is a \varie, it is immediate from Proposition~\ref{prop:closureenrich} that \fow is contained in $\fow \circ \su$. We now verify the converse inclusion. Let $A$ be an alphabet and consider $L \in (\fow \circ \su)(A)$. We show that $L \in \fow$. By definition, there exists an \su-partition \Pb of $A^*$ such that $L$ is \Pb-liftable from \fow. Thus, we have languages $L_P \in \fow(\Pb \times A)$ for every $P \in \Pb$ such that
	\[
	L = \bigcup_{P \in \Pb} \tau_{\Pb}\inv(L_P) \cap P.
	\]
	Since \fow is closed under union and intersection, it suffices to show that for each $P$ in \Pb, both $P$ and $\tau_{\Pb}\inv(L_P)$ are in \fow. We use the following fact.
	
	\begin{fct} \label{fct:fosucc}
		For every $K \in \emph{\su}(A)$, one may build the following $\emph{\fo}[A,<]$ formulas:
		\begin{itemize}
			\item a sentence $\varphi_K$ which defines $K$;
			\item a formula $\psi_K(x)$ with one free variable $x$ such that, for every word $u \in A^*$ and every position $i$ in $u$, $u \models \psi_K(i)$ if and only if the prefix of $u$ ending at position $i-1$ belongs to $K$.
		\end{itemize}
	\end{fct}
	
	\begin{proof}
		By definition a language $K \in \su(A)$ is a Boolean combination of languages having the form $A^*w$ for some $w \in A^*$. Hence, since one may freely use Boolean connectives in first-order formulas, it suffices to consider the case when $K$ itself is of this form, $K = A^*w$. Let $a_1,\dots,a_n \in A$ such that $w = a_1 \cdots a_n$. Recall that we may also freely use the successor predicate in $\fo[A,<]$ formulas since $x+1 = y$ is equivalent to $x < y \wedge \neg \left(\exists z\ x < z \wedge z <y\right)$. We may use \textsf{max} as well, as $\textsf{max}(x)$ is equivalent to $\neg \left(\exists y\ x < y\right)$. We define $\varphi_K$ as the following sentence:
		\[
		\exists x_1 \cdots \exists x_n\quad \textsf{max}(x_n) \wedge \left(\bigwedge_{1 \leq i \leq n} a_i(x_i)\right) \wedge \left(\bigwedge_{1 \leq i \leq n-1} x_i +1 = x_{i+1}\right).
		\]
		Furthermore, we define $\psi_K(x)$ as the following formula:
		\[
		\exists x_1 \cdots \exists x_n\quad \left(x_n+1 = x\right) \wedge \left(\bigwedge_{1 \leq i \leq n} a_i(x_i)\right) \wedge \left(\bigwedge_{1 \leq i \leq n-1} x_i +1 = x_{i+1}\right).
		\]
		One may verify that these formulas have the expected semantics. This concludes the proof of Fact~\ref{fct:fosucc}.		
	\end{proof}
	
	We may now finish the proof. Consider $P \in \Pb$. Since \Pb is an \su-partition, we have $P \in \su$ and it is immediate that $P \in \fow$: it is defined by sentence $\varphi_P$ given by Fact~\ref{fct:fosucc}. It remains to show that $\tau_{\Pb}\inv(L_P) \in \fow$. By hypothesis, $L_P \in \fow(\Pb \times A)$. Thus, it is defined by a sentence $\zeta$ of $\fo[\Pb \times A,<]$. It is immediate, from the definition of $\tau_{\Pb}$, that $\tau_{\Pb}\inv(L_P)$ is defined by the formula $\zeta'$ of $\fo[A,<]$ obtained by replacing every occurrence of an atomic formula $(Q,a)(x)$ in $\zeta$ by the formula
	\[
	\psi_Q(x) \wedge a(x) \quad \text{where $\psi_Q(x)$ is the formula given by Fact~\ref{fct:fosucc}}.
	\]
	This implies that $\tau_{\Pb}\inv(L_P) \in \fow$, finishing the proof.
\end{proof}

\section{Main theorem}\label{sec: main theorem}

We are now ready to present our main theorem. As announced, it states a generic reduction for the covering problem which applies to classes of the form $(\Cs \circ \su) \circ \md$ where \Cs is a \pvarie. 

\begin{theorem}\label{thm:main}
	Let \Cs be a \pvarie. The covering (resp. separation) problem for $(\Cs \circ \emph{\su}) \circ \emph{\md}$ can be effectively reduced to the same problem for $\Cs \circ \emph{\su}$.
\end{theorem}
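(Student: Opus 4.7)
The plan is to transform a $(\Cs \circ \su) \circ \md$-covering instance $(L_1, \Lb_2)$ on an alphabet $A$ into an equivalent $\Cs \circ \su$-covering instance on the enlarged alphabet $\Pb \times A$, where $\Pb$ is a modular partition of $A^*$ determined by the input.

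First I would reduce to canonical modular partitions. By Lemma~\ref{lem:mdnormalform}, every language in $\md(A)$ is a union of classes of the partition $\Pb_k = \{P^{(k)}_0, \dots, P^{(k)}_{k-1}\}$, where $P^{(k)}_i = \{w \in A^* \mid |w| = i \mod k\}$, for some $k \ge 1$. Hence every \md-partition of $A^*$ is refined by some $\Pb_k$, and Lemma~\ref{lem:refine}, which applies because $\Cs \circ \su$ is a \plivari by Proposition~\ref{prop:closureenrich} (using $\Cs$ \pvarie and $\su$ \livari), ensures that any witness partition may be replaced by one of this canonical form. So $(L_1, \Lb_2)$ is $(\Cs \circ \su) \circ \md$-coverable if and only if it admits a $\Pb_k$-liftable $(\Cs \circ \su)$-cover separating $\Lb_2$ for some $k$. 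Using the syntactic recognizers of $L_1$ and of the $L \in \Lb_2$, one can then compute an explicit bound $k_0$ on the modulus (controlled by the periods and indices of the group-like parts of these monoids), so that it suffices to look for a $\Pb_{k_0}$-liftable cover.

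Once the modulus $k_0$ is fixed, I would encode the problem on the enlarged alphabet via the tagging map $\tau_\Pb\colon A^* \to (\Pb \times A)^*$, with $\Pb = \Pb_{k_0}$. Setting $L_1' = \tau_\Pb(L_1)$ and $\Lb_2' = \{\tau_\Pb(L) \mid L \in \Lb_2\}$, the very definition of $\Pb$-liftability from Section~\ref{sec: enrichment} matches up $\Pb$-liftable $(\Cs \circ \su)$-covers of $L_1$ separating $\Lb_2$ with $\Cs \circ \su$-covers of $L_1'$ separating $\Lb_2'$ \emph{inside} the image $T = \tau_\Pb(A^*) \subseteq (\Pb \times A)^*$. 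In principle this is exactly a $\Cs \circ \su$-covering question, and calling a decision procedure for the latter should answer the original one.

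The main obstacle is the non-surjectivity of $\tau_\Pb$: the $\Cs \circ \su$-covering problem expects an instance on the full free monoid $(\Pb \times A)^*$, while the correspondence above only gives covers on the subset $T$. This is precisely what the block abstraction machinery of Section~\ref{sec:block} is designed to handle: the constraints defining $T$ are purely local (each position's $\Pb$-coordinate is determined by that of its predecessor), hence they can be absorbed into the $\su$-component of $\Cs \circ \su$, turning the "cover of $T$" question into a genuine $\Cs \circ \su$-covering instance on $(\Pb \times A)^*$ over an effectively constructed input. Once this translation is in place, an algorithm for $\Cs \circ \su$-covering applied to the constructed instance decides the original problem, and the separation statement follows as the special case $|\Lb_2| = 1$ via Fact~\ref{fct:septocove}.
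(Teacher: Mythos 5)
Your plan has the right outline---encode the modular partition into the alphabet and query a $\Cs\circ\su$-covering oracle over $(\Pb\times A)^*$---but it treats the only genuinely hard point as a preprocessing step. You assert that from the syntactic recognizers of the input languages ``one can compute an explicit bound $k_0$ on the modulus'' so that it suffices to search for $\Pb_{k_0}$-liftable covers. That assertion \emph{is} the theorem, not a lemma you may call upon: nothing in Lemmas~\ref{lem:refine} or~\ref{lem:mdnormalform}, nor in Proposition~\ref{prop:closureenrich}, tells you that if \emph{some} $\Pb_k$-liftable separating cover exists then one exists for a $k$ bounded in terms of the input. A priori, a $(\Cs\circ\su)\circ\md$-cover may require arbitrarily fine modular partitions. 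Proving that the stability index $s$ of a recognizing morphism $\eta$ (the least $s$ with $\eta(A^s)=\eta(A^{2s})$) suffices as a bound is the technical content of the paper's proof, which goes through block abstraction (Theorem~\ref{thm:blockenr}, via the maps $\mu_d\colon A^*\to A_d^*$) and then through covering over the alphabet $\sfa=\eta(A^+)$ of stably-formed words (Theorem~\ref{thm:transfer}). Your phrase ``controlled by the periods and indices of the group-like parts of these monoids'' gestures correctly at where the bound lives, but provides no argument, and no shortcut is visible from what precedes it in the paper.

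By contrast, the ``non-surjectivity of $\tau_\Pb$'' you flag is not a genuine obstacle, and block abstraction is not what handles it. Once $k_0$ is fixed and $\Pb=\Pb_{k_0}$, the pair $(L_1,\Lb_2)$ admits a $\Pb$-liftable separating $(\Cs\circ\su)$-cover if and only if $(\tau_\Pb(L_1),\tau_\Pb(\Lb_2))$ is $(\Cs\circ\su)$-coverable over the full free monoid $(\Pb\times A)^*$; the constraint $T=\tau_\Pb(A^*)$ never needs to be expressed. From a cover $\Kb'$ over $(\Pb\times A)^*$, each $\tau_\Pb^{-1}(K')$ is already $\Pb$-liftable (take all $L_P$ equal to $K'$). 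Conversely, given $K=\bigcup_{P}(\tau_\Pb^{-1}(K_P)\cap P)$, set $K'=\bigcup_{P}\bigl(K_P\cap g^{-1}(P)\bigr)$, where $g(w)$ is determined by the $\Pb$-component of the last letter of $w$ (with $g(\varepsilon)=\ppart{\varepsilon}$), so that $g^{-1}(P)\in\su(\Pb\times A)\subseteq(\Cs\circ\su)(\Pb\times A)$; then $\tau_\Pb^{-1}(K')=K$, and coverage and separation transfer directly. So the appeal to Section~\ref{sec:block} to ``absorb $T$ into the $\su$-component'' is misplaced: the block-abstraction machinery is there to obtain the modulus bound, not to tame $\tau_\Pb$'s image, and your route would reconverge with the paper's only once that bound is actually established.
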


Theorem~\ref{thm:main} can be combined with another reduction theorem for classes of the form $\Cs \circ \su$ (\cite{2015:PlaceZeitoun} and \cite[Thm 4.12]{2017:PlaceZeitoun}, see also \cite{2001:Steinberg} for an algebraic statement relating to the Boolean algebra case): for any \pvarie \Cs, the covering problem for $\Cs \circ \su$ can be effectively reduced to the same problem for \Cs. When combining these theorems, we obtain the following corollary.

\begin{corollary} \label{cor:main}
	Let \Cs be a \pvarie. The covering (resp. separation) problem for $(\Cs \circ \emph{\su}) \circ \emph{\md}$ can be effectively reduced to the same problem for \Cs.
\end{corollary}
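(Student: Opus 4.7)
The plan is simply to chain Theorem~\ref{thm:main} with the Place--Zeitoun reduction cited immediately after it. First, by Theorem~\ref{thm:main} applied to the \pvarie \Cs, any instance of the covering (resp.\ separation) problem for $(\Cs\circ\su)\circ\md$ can be effectively transformed into an instance of the covering (resp.\ separation) problem for $\Cs\circ\su$. Second, by the reduction of~\cite{2015:PlaceZeitoun} and~\cite[Thm 4.12]{2017:PlaceZeitoun} applied to the same \pvarie \Cs, any instance of the covering problem for $\Cs\circ\su$ reduces effectively to an instance of the covering problem for \Cs. Composing these two effective reductions yields the claimed reduction from $(\Cs\circ\su)\circ\md$ to \Cs.

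The hypotheses of both reductions coincide and are satisfied simultaneously: each requires only that \Cs be a \pvarie, which is precisely the hypothesis of the corollary. It is crucial to note that the second reduction is applied at the class \Cs itself, not at the enriched class $\Cs\circ\su$; this is important because, by Proposition~\ref{prop:closureenrich}, the enriched class $\Cs\circ\su$ is in general only a \plivari rather than a full \pvarie, and so it might not itself meet the hypothesis of the Place--Zeitoun reduction. The separation case for the second reduction can either be extracted from the covering case via Fact~\ref{fct:septocove} (since $\Cs\circ\su$ is a lattice, the singleton-multiset instance of covering coincides with the corresponding separation instance), or be obtained directly from the Place--Zeitoun argument.

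There is no substantive obstacle at the level of the corollary itself; it is a one-line composition of two effective reductions. The genuine work lies entirely in proving the underlying Theorem~\ref{thm:main}, carried out in Sections~\ref{sec:block} and~\ref{sec:transfer}, together with the previously established Place--Zeitoun reduction for $\Cs\circ\su$.
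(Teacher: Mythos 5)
Your proposal is correct and follows exactly the route the paper takes: the corollary is stated immediately after the paper observes that Theorem~\ref{thm:main} composes with the Place--Zeitoun reduction from $\Cs\circ\su$-covering to $\Cs$-covering (both requiring only that $\Cs$ be a \pvarie). Your remark that the Place--Zeitoun reduction's hypothesis lives on $\Cs$ rather than on $\Cs\circ\su$ is a sensible clarification, though not strictly needed since that is how the cited theorem is stated.
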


The remaining sections of the paper are devoted to the proof of Theorem~\ref{thm:main}. It is organized as follows. First we show (Theorem~\ref{thm:blockenr}) that, for classes of the form $\Cs \circ \su$, \md-enrichment can be reformulated as yet another operation called block abstraction, written $\Ds \mapsto \benr{\Ds}$, which is much simpler to handle. Next we show (in Section~\ref{sec:transfer}) that for any class \Ds closed under inverse length increasing morphisms, the $\benr{\Ds}$-covering problem reduces to the \Ds-covering problem. The combination of these two properties establishes Theorem~\ref{thm:main}: for any \pvarie \Cs, Proposition~\ref{prop:closureenrich} ensures that $\Cs \circ \su$ is closed under inverse length increasing morphisms, and hence the covering problem for $(\Cs \circ \su) \circ \md$ reduces to the same problem for $\Cs \circ \su$.

However, before we start this proof, we discuss applications of Theorem~\ref{thm:main} and Corollary~\ref{cor:main}: combining them with previously known results yields decidability results for new fragments.

\subsection{Full first-order logic} Let us start with \fo itself. We showed in Proposition~\ref{prop: su for fo} that \fow is stabilized by \su-enrichment: $\fow = \fow \circ \su$. Moreover, Theorem~\ref{thm:wform:logco} shows that $\fowm = \fow \circ \md$ and, since \fow is a \varie, we may instantiate Theorem~\ref{thm:main} to obtain an effective reduction from \fowm-covering to \fow-covering.

Finally, \fow-covering is known to be decidable (see, \textit{e.g.}, \cite{pzfoj}). Altogether, we get the following corollary.

\begin{corollary} \label{cor:fo}
	The covering problem for \emph{\fowm} is decidable.
\end{corollary}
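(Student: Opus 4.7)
The plan is to assemble the corollary directly from ingredients already established: the correspondence between adding modular predicates to a logic and \md-enrichment of the associated language class (Theorem~\ref{thm:wform:logco}), the \su-stability of full first-order logic (Proposition~\ref{prop: su for fo}), the generic transfer theorem (Theorem~\ref{thm:main}), and the classical decidability of \fow-covering from \cite{pzfoj}.

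First, I would rewrite \fowm as a twice-enriched language class so that Theorem~\ref{thm:main} becomes directly applicable. Applying Theorem~\ref{thm:wform:logco} to the fragment $\Fs = \fo$ with the structural signature $\{<\}$ yields $\fowm = \fow \circ \md$. Since Proposition~\ref{prop: su for fo} asserts $\fow = \fow \circ \su$, substitution gives
\[
\fowm \ =\ \fow \circ \md \ =\ (\fow \circ \su) \circ \md,
\]
which is exactly the shape to which Theorem~\ref{thm:main} applies.

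Next, since \fow is a \varie (and hence, in particular, a \pvarie), Theorem~\ref{thm:main} may be instantiated with $\Cs = \fow$. It produces an effective reduction from the covering problem for $(\fow \circ \su) \circ \md$, that is, for \fowm, to the covering problem for $\fow \circ \su = \fow$. Composing this reduction with the known decision procedure for \fow-covering~\cite{pzfoj} yields a decision procedure for \fowm-covering.

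There is no genuine obstacle here: the argument is a one-line composition of previously established results. The only check one needs to perform is that the hypotheses of Theorem~\ref{thm:main} are met, which reduces to the standard observation that \fow is a \varie, and that the \emph{non}-generic identity $\fow = \fow \circ \su$ (which fails for some weaker fragments, as noted in the paper) is available through Proposition~\ref{prop: su for fo}. Once these two sanity checks are in place, the corollary follows immediately.
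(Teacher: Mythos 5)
Your proposal is correct and follows essentially the same route as the paper: rewrite $\fowm = (\fow \circ \su) \circ \md$ via Theorem~\ref{thm:wform:logco} and Proposition~\ref{prop: su for fo}, instantiate Theorem~\ref{thm:main} with $\Cs = \fow$ (a \varie), and compose with the known decidability of \fow-covering from~\cite{pzfoj}.
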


\subsection{Two-variable first-order logic}
We now consider \fod. As stated in Theorem~\ref{thm: su for classical fragments}, we have $\fodws = \fodw \circ \su$. When combined with Theorem~\ref{thm:wform:logco}, this yields $\fodwsm = (\fodw \circ \su) \circ \md$. Since \fodw is a variety, Corollary~\ref{cor:main} gives an effective reduction from \fodwsm-covering to \fodw-covering.

It was shown in~\cite{2016:PlaceZeitoun,2018:PlaceZeitoun} that \fodw-covering is decidable (see also~\cite{pvzmfcs13} for a proof restricted to \fodw-separation). We obtain the following corollary.

\begin{corollary} \label{cor:fo2}
	The covering problem for \emph{\fodwsm} is decidable. 
\end{corollary}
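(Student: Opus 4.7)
The plan is to assemble the corollary directly from the theorems already proved in the paper together with a single external decidability result, so the proof is essentially a bookkeeping exercise rather than requiring any new construction.

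First I would identify \fodwsm as a class of the form $(\Cs \circ \su) \circ \md$ with $\Cs = \fodw$. By Theorem~\ref{thm: su for classical fragments} applied to the fragment \fod, we have $\fodws = \fodw \circ \su$. Next, by Theorem~\ref{thm:wform:logco} applied to the fragment \fod and the structural signature $\{<,+1,\textsf{min},\textsf{max},\varepsilon\}$, we obtain
\[
\fodwsm \;=\; \fodws \circ \md \;=\; (\fodw \circ \su) \circ \md.
\]
Here one should check that the hypotheses of Theorem~\ref{thm:wform:logco} apply: \fod is indeed a fragment of first-order logic in the sense of Section~\ref{sec: fragments and logical enrichment}, and $\{<,+1,\textsf{min},\textsf{max},\varepsilon\}$ is a structural signature.

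Second, I would invoke the reduction machinery. Since \fodw is a \varie (a fact recalled in the paper after the introduction of fragments), Corollary~\ref{cor:main} applies with $\Cs = \fodw$ and provides an effective reduction of the covering problem for $(\fodw \circ \su) \circ \md = \fodwsm$ to the covering problem for \fodw.

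Finally, I would invoke the known decidability of \fodw-covering, established in \cite{2016:PlaceZeitoun,2018:PlaceZeitoun}. Combining this with the effective reduction above yields a decision procedure for \fodwsm-covering. There is no real obstacle in this proof: all the substantive work is done by Theorem~\ref{thm:main}, Theorem~\ref{thm: su for classical fragments}, Theorem~\ref{thm:wform:logco}, and the prior result on \fodw-covering; the corollary is simply their composition.
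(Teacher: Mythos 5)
Your proof is correct and follows the same route as the paper: identify $\fodwsm = (\fodw \circ \su) \circ \md$ via Theorem~\ref{thm: su for classical fragments} and Theorem~\ref{thm:wform:logco}, invoke Corollary~\ref{cor:main} with $\Cs = \fodw$ (a \varie), and conclude using the known decidability of \fodw-covering from \cite{2016:PlaceZeitoun,2018:PlaceZeitoun}.
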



\subsection{Quantifier alternation hierarchy}

Finally, we consider the quantifier alternation hierarchy of first-order logic. We may combine Theorems~\ref{thm: su for classical fragments} and~\ref{thm:wform:logco} to obtain that for every $n \geq 1$,
\[
\siwsm{n} = (\siw{n} \circ \su) \circ \md \text{ and } \bswsm{n} = (\bsw{n} \circ \su) \circ \md.
\]
Since \siw{n} and \bsw{n} are respectively a \pvarie and a \varie, Corollary~\ref{cor:main} shows that \siwsm{n}-covering (resp. \bswsm{n}-covering) is effectively reducible to \siw{n}-covering (resp. \bsw{n}-covering).

It is shown in~\cite{2014:PlaceZeitoun,pseps3,pseps3j} that \siw{1}-, \siw{2}- and \siw{3}-covering are decidable. Moreover, \bsw{1}- and \bsw{2}-covering are proved to be decidable in~\cite{pzboolpol} (there also exists many proofs for the decidability of \bsw{1}-separation, see~\cite{pvzmfcs13,2013:CzerwinskiMartensMasopust}). Altogether, we get the following corollary.

\begin{corollary} \label{cor:main2}
	The covering problem for the levels \siwsm{1}, \bswsm{1}, \siwsm{2}, \bswsm{2} and \siwsm{3} is decidable.
\end{corollary}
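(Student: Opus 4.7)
The plan is to treat all five fragments uniformly by applying Corollary~\ref{cor:main}, whose single hypothesis (being a \pvarie) is satisfied by each of the base classes involved. First I would verify that this hypothesis is indeed met: as recalled in Section~\ref{sec:logic}, the fragments $\siw{n}$ are \pvaries and the fragments $\bsw{n}$ are even \varies, so Corollary~\ref{cor:main} applies with $\Cs = \siw{n}$ or $\Cs = \bsw{n}$.

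Next, I would invoke the identifications that were established in the paragraphs immediately preceding the corollary, namely
\[
\siwsm{n} = (\siw{n} \circ \su) \circ \md \quad\textrm{and}\quad \bswsm{n} = (\bsw{n} \circ \su) \circ \md,
\]
which are obtained by combining Theorem~\ref{thm: su for classical fragments} (handling the addition of local predicates through $\su$-enrichment for $\sic{n}$ and $\bsc{n}$) with Theorem~\ref{thm:wform:logco} (handling the addition of modular predicates through $\md$-enrichment). Applied in turn to $n=1,2,3$ for the $\Sigma$-levels and to $n=1,2$ for the $\Bs\Sigma$-levels, Corollary~\ref{cor:main} yields, in each case, an effective reduction of the covering problem for the target fragment to the covering problem for the corresponding base class $\siw{n}$ or $\bsw{n}$.

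Finally, I would close the argument by quoting the known decidability results for the base covering problems: \siw{1}-, \siw{2}- and \siw{3}-covering are decidable by~\cite{2014:PlaceZeitoun,pseps3,pseps3j}, and \bsw{1}- and \bsw{2}-covering by~\cite{pzboolpol}. Chaining these decision procedures with the effective reductions above gives the desired algorithms.

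In this corollary the real work has already been absorbed by Theorem~\ref{thm:main} (whose proof is the subject of the remaining sections), and by the cited literature on base-case covering; the only step that requires care is the verification that each target fragment genuinely sits inside the scope of Corollary~\ref{cor:main}, i.e.\ that it can be written as $(\Cs \circ \su) \circ \md$ for some \pvarie $\Cs$, which is precisely what the identifications above provide. I would therefore not expect any substantial obstacle here beyond a careful bookkeeping of which classes the quoted results apply to.
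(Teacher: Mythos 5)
Your proposal matches the paper's argument: apply Theorem~\ref{thm: su for classical fragments} and Theorem~\ref{thm:wform:logco} to write each target class as $(\siw{n}\circ\su)\circ\md$ or $(\bsw{n}\circ\su)\circ\md$, note that the base classes are \pvaries (even \varies in the Boolean case), invoke Corollary~\ref{cor:main}, and conclude via the cited decidability results for \siw{n}-covering ($n=1,2,3$) and \bsw{n}-covering ($n=1,2$). This is exactly the paper's proof, with the same ingredients and in the same order.
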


%

\section{Block abstraction}
\label{sec:block}
In this section, we introduce the operation $\Cs \mapsto \benr{\Cs}$ of \emph{block abstraction}, defined on classes of languages. While less standard than enrichment, this operation is a key ingredient of this paper because of its connection with \md-enrichment: the two operations coincide for classes of the form $\Cs \circ \su$ where \Cs is a \pvarie.

\subsection{Definition of block abstraction}\label{sec: definition block abstraction}

Let $d \geq 1$ be an integer and $A$ an alphabet. Recall that $A^d$ denotes the set of words with length $d$. We let $A^{<d} = \bigcup_{0<c<d}A^c$ and $A_d = A^{<d} \cup A^d$. Our intent is to use $A_d$ as an alphabet and form words in $(A_d)^*$. This can be misleading since a letter $w \in A_d$ is also a word in $A^*$. To avoid confusion, we adopt the following convention: when $w\in A_d$ is used as a letter of the alphabet $A_d$, we denote it by $(w)$.

We define a map $\mu_d\colon A^* \to A_d^*$ as follows. If $w \in A^*$ has length $\ell$, we consider the Euclidean division of $\ell$ by $d$: $\ell = d q + r$, with $q,r \in \nat$ and $0 \leq r < d$. Then $w$ admits a decomposition $w = w_1 \cdots w_{q+1}$ where $|w_i| = d$ for $1 \leq i \leq q$ and $|w_{q+1}| = r$, and we let
\[
\mu_d(w) = \left\{\begin{array}{ll}
(w_1)\cdots (w_q) & \text{if $w_{q+1} = \varepsilon$ (i.e. $r = 0$)}, \\
(w_1)\cdots (w_q) (w_{q+1}) & \text{if $w_{q+1} \neq \varepsilon$ (i.e. $r \neq 0$)}.
\end{array}\right.
\]

\begin{remark}
	The map $\mu_d\colon A^* \to A_d^*$ is not a morphism and it is not surjective: all letters in $\mu_d(w)$, except possibly the last one, are in $A^d$. However, $\mu_d$ is clearly injective.
\end{remark}

We also note the following elementary fact.

\begin{fct} \label{fct:blockcut}
	Let $d \geq 1$ and $u,v \in A^*$. If the length of $u$ is a multiple of $d$, then $\mu_d(uv) = \mu_d(u) \cdot \mu_d(v)$.
\end{fct}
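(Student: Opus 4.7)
The plan is to unpack the definition of $\mu_d$ on both sides and observe that the block decomposition of $uv$ is just the concatenation of the block decompositions of $u$ and $v$, precisely because $|u|$ being a multiple of $d$ means no partial final block of $u$ spills over into $v$.

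First, I would write $|u| = dq$ (using the hypothesis), so the canonical decomposition of $u$ is $u = u_1 \cdots u_q$ with each $|u_i| = d$, and hence $\mu_d(u) = (u_1) \cdots (u_q)$ (the second case of the definition does not apply because the remainder is zero). Next, I would write $|v| = dq' + r'$ with $0 \le r' < d$, giving the canonical decomposition $v = v_1 \cdots v_{q'+1}$ with $|v_i| = d$ for $i \le q'$ and $|v_{q'+1}| = r'$. Then $\mu_d(v) = (v_1) \cdots (v_{q'})$ if $r' = 0$ and $\mu_d(v) = (v_1) \cdots (v_{q'}) (v_{q'+1})$ if $r' \neq 0$.

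Then I would compute $|uv| = d(q+q') + r'$, which is the Euclidean division of $|uv|$ by $d$. The canonical decomposition of $uv$ into blocks is therefore $uv = u_1 \cdots u_q v_1 \cdots v_{q'+1}$, with blocks $u_1,\dots,u_q,v_1,\dots,v_{q'}$ of length $d$ and a final block $v_{q'+1}$ of length $r'$. Plugging into the definition of $\mu_d$ yields $\mu_d(uv) = (u_1)\cdots(u_q)(v_1)\cdots(v_{q'})$ if $r'=0$ and $\mu_d(uv) = (u_1)\cdots(u_q)(v_1)\cdots(v_{q'})(v_{q'+1})$ otherwise. In both cases the right-hand side factors as $\mu_d(u)\cdot\mu_d(v)$, concluding the proof.

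There is no real obstacle here: the only thing to be careful about is the case distinction on whether $r' = 0$, and ensuring that the assumption $d \mid |u|$ is used precisely to rule out the case where $u$ itself contributes a short final block. The proof is essentially a direct bookkeeping argument and should fit in a few lines in the paper.
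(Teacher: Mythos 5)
Your proof is correct, and since the paper states this fact without proof (as ``elementary''), your direct unpacking of the definition is exactly what is expected: the hypothesis $d \mid |u|$ ensures the canonical block decomposition of $uv$ is obtained by concatenating those of $u$ and $v$, and the two cases $r'=0$ and $r'\neq 0$ are handled cleanly.
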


Let \Cs be a class of languages. For any natural number $d \geq 1$, we let \benrd{\Cs} be the class such that, for any alphabet $A$, $\benrd{\Cs}(A)$ consists of the languages $L$ of the form
\[
L = \mu_d\inv(K) \quad \text{for some language $K \in \Cs(A_d)$.}
\]
The \emph{block abstraction} of \Cs is defined as the union of the classes \benrd{\Cs} for $d \geq 1$. More precisely, for any alphabet $A$, we let
\[
\benr{\Cs}(A) = \bigcup_{d \geq 1} \benrd{\Cs}(A).
\]
We record a few simple properties of the block abstraction operation.

\begin{lemma}\label{lem:blockinc}
	Let \Cs be a class of languages closed under inverse length increasing morphisms. Let $d,n \geq 1$ such that $n$ is a multiple of $d$. Then $\benrd{\Cs} \subseteq \benrp{\Cs}{n}$.
\end{lemma}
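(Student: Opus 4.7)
The plan is to reduce block abstraction at level $d$ to block abstraction at level $n$ by exhibiting a length-increasing morphism that refines $n$-blocks into $d$-blocks, and then invoking closure under inverse length-increasing morphisms.

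First, I would fix an alphabet $A$ and a language $L \in \benrd{\Cs}(A)$, so that $L = \mu_d^{-1}(K)$ for some $K \in \Cs(A_d)$. Writing $n = kd$ with $k \geq 1$, I would define a morphism $\alpha\colon A_n^* \to A_d^*$ by setting, for each letter $(w) \in A_n$ (where $w \in A^*$ with $1 \le |w| \le n$), $\alpha((w)) = \mu_d(w)$. Since every $w$ used as a letter of $A_n$ satisfies $|w|\ge 1$, we have $|\mu_d(w)| \ge 1$, so $\alpha$ is length increasing.

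The key computation is the identity $\alpha \circ \mu_n = \mu_d$. To verify it, I would take an arbitrary $u \in A^*$ and write $\mu_n(u) = (u_1)\cdots(u_p)(u_{p+1})$, where $|u_i| = n$ for $i \le p$ and $0 \le |u_{p+1}| < n$ (omitting the last letter if $u_{p+1} = \varepsilon$). Applying $\alpha$ gives $\mu_d(u_1)\cdots\mu_d(u_p)\mu_d(u_{p+1})$. Since $n$ is a multiple of $d$, each $u_i$ with $i \le p$ has length a multiple of $d$; repeatedly applying Fact~\ref{fct:blockcut} yields $\mu_d(u_1)\cdots\mu_d(u_p)\mu_d(u_{p+1}) = \mu_d(u_1\cdots u_p u_{p+1}) = \mu_d(u)$, as desired.

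Finally, I would let $K' = \alpha^{-1}(K)$. By hypothesis \Cs is closed under inverse length increasing morphisms, so $K' \in \Cs(A_n)$. Using the identity $\alpha \circ \mu_n = \mu_d$, we get
\[
\mu_n^{-1}(K') = \mu_n^{-1}(\alpha^{-1}(K)) = (\alpha \circ \mu_n)^{-1}(K) = \mu_d^{-1}(K) = L,
\]
which shows $L \in \benrp{\Cs}{n}(A)$. I do not anticipate any real obstacle: the only subtlety is the bookkeeping in the verification of $\alpha \circ \mu_n = \mu_d$, which reduces to a careful application of Fact~\ref{fct:blockcut} to the decomposition of $u$ into blocks of length $n$.
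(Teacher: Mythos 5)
Your proof is correct and follows essentially the same approach as the paper: both define the length-increasing morphism from $A_n^*$ to $A_d^*$ sending $(w)$ to $\mu_d(w)$, establish the identity $\mu_d = \alpha \circ \mu_n$ (via Fact~\ref{fct:blockcut}, using that $n$ is a multiple of $d$), and then pull back $K$ through $\alpha$. The only difference is that you spell out the verification of $\alpha \circ \mu_n = \mu_d$, which the paper simply asserts.
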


\begin{proof}
If $L \in \benrd{\Cs}$, then $L =\mu_d\inv(K)$ for some language $K \in \Cs(A_{d})$. Now let $\eta\colon A_n^* \to A_{d}^*$ be the length increasing morphism which maps each letter $(w) \in A_n$ to $\mu_{d}(w)$. Note in particular that $\mu_d =  \eta\circ\mu_n$ (this is where we use the fact that $n$ is a multiple of $d$).  By our hypothesis on \Cs, the language $H = \eta\inv(K)$ belongs to $\Cs(A_n)$ and we have 
	\[
	L =  \mu_d\inv(K) = \mu_n\inv(\eta\inv(K)) = \mu_n\inv(H),
	\]
	which concludes the proof.
\end{proof}

\begin{corollary}\label{cor:commonindex}
	Let \Cs be a class closed under inverse length increasing morphisms and let $L_1, \ldots, L_m$ ($m \ge 1$) be languages in $\benr{\Cs}$. There exists $d \geq 1$ such that $L_1, \ldots, L_m \in \benrd{\Cs}$.
\end{corollary}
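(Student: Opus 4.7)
The plan is to invoke Lemma~\ref{lem:blockinc} with a suitable common multiple. By definition of $\benr{\Cs}$, for each $i \in \{1,\dots,m\}$ there exists an integer $d_i \geq 1$ such that $L_i \in \benrp{\Cs}{d_i}$. The only freedom we have is in how we combine the $d_i$ into a single index $d$ that works uniformly for all of them.

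The natural choice is to take $d$ to be the least common multiple of $d_1, \ldots, d_m$ (or indeed any common multiple, e.g.\ their product). Then $d \geq 1$ and $d$ is a multiple of each $d_i$, so Lemma~\ref{lem:blockinc} applies (using the hypothesis that \Cs is closed under inverse length increasing morphisms) and yields the inclusions $\benrp{\Cs}{d_i} \subseteq \benrp{\Cs}{d}$ for every $i$. In particular, each $L_i$ lies in $\benrp{\Cs}{d}$, which is exactly the conclusion we want.

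There is essentially no obstacle here: the statement is a routine application of the monotonicity lemma just proved, and its role is simply to allow later arguments to assume, without loss of generality, that finitely many languages of $\benr{\Cs}$ share a common block index. The only thing worth being careful about is that Lemma~\ref{lem:blockinc} requires closure under inverse length increasing morphisms, which is precisely the hypothesis made on \Cs in the corollary.
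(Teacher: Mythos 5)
Your proof is correct and follows exactly the paper's argument: each $L_i$ lies in some $\benrp{\Cs}{d_i}$, take $d$ to be the least common multiple, and apply Lemma~\ref{lem:blockinc}. Nothing to add.
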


\begin{proof}
	By definition, each $L_i$ ($1\le i \leq m$) lies in $\benrp{\Cs}{d_i}$ for some $d_i \geq 1$, and Lemma~\ref{lem:blockinc} then shows that $L_1, \dots, L_m \in \benrd{\Cs}$ where $d$ is the least common multiplier of the $d_i$.
\end{proof}

This in turn shows that block abstraction preserves closure under Boolean operations.

\begin{corollary} \label{cor:nopclos}
	If \Cs is a lattice closed under inverse length increasing morphisms, then its block abstraction \benr{\Cs} is a lattice.
\end{corollary}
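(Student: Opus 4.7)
The plan is to reduce the closure properties of $\benr{\Cs}$ directly to those of $\Cs$ via Corollary~\ref{cor:commonindex}, which lets us place finitely many given languages of $\benr{\Cs}$ inside a common stratum $\benrd{\Cs}$, where everything is described by a single preimage under $\mu_d$.

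First I would handle the constants. The key observation is that the definition unwinds cleanly for $d=1$: $A^{<1}=\emptyset$ so $A_1 = A$, and $\mu_1$ is essentially the identity (it merely views each letter $a\in A$ as the letter $(a)\in A_1$). Since $\Cs$ is a lattice, $\emptyset$ and $A_1^*$ lie in $\Cs(A_1)$, and their $\mu_1$-preimages are respectively $\emptyset$ and $A^*$, which therefore lie in $\benrp{\Cs}{1}(A) \subseteq \benr{\Cs}(A)$.

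Next, I would argue closure under binary unions and intersections. Let $L_1, L_2 \in \benr{\Cs}(A)$. By Corollary~\ref{cor:commonindex}, which applies because $\Cs$ is assumed closed under inverse length increasing morphisms, there exists a common $d \geq 1$ with $L_1, L_2 \in \benrd{\Cs}(A)$. Writing $L_i = \mu_d^{-1}(K_i)$ for some $K_i \in \Cs(A_d)$ and using that inverse images commute with Boolean operations, we get
\[
L_1 \cup L_2 = \mu_d^{-1}(K_1 \cup K_2) \quad \text{and} \quad L_1 \cap L_2 = \mu_d^{-1}(K_1 \cap K_2).
\]
Since $\Cs$ is a lattice, $K_1 \cup K_2$ and $K_1 \cap K_2$ belong to $\Cs(A_d)$, and hence $L_1 \cup L_2, L_1 \cap L_2 \in \benrd{\Cs}(A) \subseteq \benr{\Cs}(A)$.

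There is no real obstacle here; the whole argument rests on Corollary~\ref{cor:commonindex}, which bundles the two input languages into a single $\benrd{\Cs}$, after which the lattice structure is inherited coordinatewise from $\Cs(A_d)$. The statement is best viewed as a direct corollary packaging the "common index" trick together with the trivial verification for the constants $\emptyset$ and $A^*$.
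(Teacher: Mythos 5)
Your proof is correct and follows essentially the same route as the paper's: both invoke Corollary~\ref{cor:commonindex} to place $L_1,L_2$ in a common stratum $\benrd{\Cs}$ and then push the lattice operations through $\mu_d^{-1}$, and both dispose of $\emptyset$ and $A^*$ directly (the paper simply calls this ``immediate'' where you spell it out via $d=1$).
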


\begin{proof}
	Let $A$ be an alphabet. It is immediate that $\emptyset$ and $A^*$ are elements of $\benr{\Cs}(A)$. Now let $L_1,L_2 \in \benr{\Cs}(A)$. By Corollary~\ref{cor:commonindex}, there exists $d \geq 1$ such that $L_1,L_2 \in \benrd{\Cs}(A)$, so that $L_1 = \mu_d\inv(K_1)$ and $L_2 = \mu_d\inv(K_2)$ for some $K_1,K_2 \in \Cs(A_d)$. Therefore $L_1 \cup L_2 = \mu_d\inv(K_1 \cup K_2)$ and $L_1 \cap L_2 = \mu_d\inv(K_1 \cap K_2)$, which concludes the proof since $L_1 \cup L_2$ and $L_1 \cap L_2$ both belong to \benr{\Cs}.
\end{proof}

We now turn to the main theorem of the section.

\begin{theorem} \label{thm:blockenr}
	If \Cs is a \pvarie, then $(\Cs \circ \emph{\su}) \circ \emph{\md} = \benr{\Cs \circ \emph{\su}}$.
\end{theorem}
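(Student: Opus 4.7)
The plan is to prove the two inclusions separately, working throughout with the canonical \md-partition $\Pb = \{P_0, \dots, P_{d-1}\}$ where $P_i = \{w \in A^* \mid |w| \equiv i \pmod{d}\}$, together with the length-increasing morphism $\beta\colon A_d^* \to (\Pb \times A)^*$ defined letter-by-letter by $\beta((a_0 \cdots a_{j-1})) = (P_0, a_0)(P_1, a_1) \cdots (P_{j-1}, a_{j-1})$. A direct induction on $|w|$, using Facts~\ref{fct:tagdecomp} and~\ref{fct:blockcut}, yields the key identity $\beta \circ \mu_d = \tau_\Pb$, which underlies both inclusions.

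For $(\Cs \circ \su) \circ \md \subseteq \benr{\Cs \circ \su}$, I would start from $L$ in the left-hand side with arbitrary \md-partition and apply Lemma~\ref{lem:mdnormalform} to extract an integer $d$, then Lemma~\ref{lem:refine} (valid since $\Cs \circ \su$ is a \plivari by Proposition~\ref{prop:closureenrich}) to assume the partition is $\Pb$. Given lifts $L_i \in (\Cs \circ \su)(\Pb \times A)$, closure under inverse length-increasing morphisms yields $\beta\inv(L_i) \in (\Cs \circ \su)(A_d)$. Each $P_i$ equals $\mu_d\inv(R_i)$ for an elementary $R_i \in \su(A_d)$: $R_i = A_d^* A^i$ for $i > 0$, and $R_0$ is the complement in $A_d^*$ of $\bigcup_{0<j<d} A_d^* A^j$. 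Combining through the lattice structure of $\Cs \circ \su$, we reach $L = \mu_d\inv\bigl(\bigcup_i (\beta\inv(L_i) \cap R_i)\bigr) \in \benrd{\Cs \circ \su}$.

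For the reverse inclusion $\benr{\Cs \circ \su} \subseteq (\Cs \circ \su) \circ \md$, write $L = \mu_d\inv(K)$ and unpack $K = \bigcup_{Q \in \Qb_0}(\tau_{\Qb_0}\inv(K_Q) \cap Q)$ with an \su-partition $\Qb_0$ of $A_d^*$ of width at most $k_0$ and $K_Q \in \Cs(\Qb_0 \times A_d)$. Using the same $\Pb$, I would choose the \su-partition $\Qb_1$ of $(\Pb \times A)^*$ based on suffixes of length $(k_0+1)d$. Since the letters of $\tau_\Pb(w)$ already carry their position mod $d$, the $\Qb_1$-class of the prefix at any $P_0$-tagged position captures the last $k_0+1$ full blocks, enough to recover both the $\Qb_0$-class of the corresponding $\mu_d$-image and the content of the preceding block. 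This enables a (non-length-increasing) morphism $\psi\colon (\Qb_1 \times \Pb \times A)^* \to (\Qb_0 \times A_d)^*$ that emits one letter at each $P_0$-tagged position and $\varepsilon$ elsewhere, so that $\psi(\tau_{\Qb_1}(\tau_\Pb(w)))$ produces the "body" of $\tau_{\Qb_0}(\mu_d(w))$ covering all full blocks. Since $\Cs$ is a \pvarie, it is closed under arbitrary inverse morphisms, and $\psi\inv(K_Q) \in \Cs(\Qb_1 \times \Pb \times A)$.

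The main obstacle is the terminal partial block for $i > 0$: a morphism cannot single out the final position from intermediate positions carrying tag $P_{i-1}$. I would handle this by splitting into finitely many cases indexed by a pair $(s, v) \in \Qb_0 \times A^i$ encoding the $\Qb_0$-class of the full-block prefix and the content of the final partial block. Both are determined by the suffix of $\tau_\Pb(w)$ of length $(k_0+1)d$, hence correspond to \su-classes of $(\Pb \times A)^*$. For each fixed $(s, v)$, quotient closure of $\Cs$ provides $K_Q \cdot (s, v)\inv \in \Cs$, and the requirement that the body lands in this quotient pulls back through $\psi\inv$ to a $\Cs$-language. The condition $\mu_d(w) \in Q$ similarly translates into a suffix condition on $w$ intersected with a length-mod-$d$ condition, both residing in the lattice $(\Cs \circ \su) \circ \md$ (which contains both $\su$ and $\md$ by Proposition~\ref{prop:closureenrich}). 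Taking the union over $Q$, $i$, and $(s, v)$ assembles the desired lifts $L_i \in (\Cs \circ \su)(\Pb \times A)$.
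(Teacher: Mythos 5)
Your overall blueprint matches the paper's: specialize to the canonical length-mod-$d$ partition, set up a ``block morphism'' from a suitable $\su$-tagged alphabet into $(\Qb_0 \times A_d)^*$ (not length increasing, so full closure under inverse morphisms for $\Cs$ is needed, as you note), and absorb the terminal partial block through a right quotient by a single letter, conditioned on finitely many $\su$-recoverable parameters. The forward inclusion $(\Cs\circ\su)\circ\md \subseteq \benr{\Cs\circ\su}$ via the identity $\beta\circ\mu_d = \tau_\Pb$ is correct, and in fact closer to an explicit version of the paper's Proposition~\ref{prop:mod2enr}.

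There is, however, an off-by-one error in the reverse inclusion that makes the construction fail on a whole residue class. You have $\psi$ emit its letter at each $P_0$-tagged position. At position $kd$ the relevant $\Qb_1$-class knows the preceding block $b_k$ and the $\Qb_0$-class of $b_1\cdots b_{k-1}$, which is fine, and position $0$ necessarily emits $\varepsilon$ since the $\Qb_1$-class there is the class of $\varepsilon$ and carries no information. But when $|w| = qd$ exactly (the $d$-cut is $(w,\varepsilon)$), the $P_0$-tagged positions of $w$ are $0, d, \ldots, (q-1)d$: position $qd$ does not exist. So $\psi(\tau_{\Qb_1}(\tau_\Pb(w)))$ has length $q-1$, whereas $\tau_{\Qb_0}(\mu_d(w))$ has length $q$; the last full block $b_q$ is never emitted. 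Your quotient/case-split patch is explicitly restricted to $i>0$ (a nonempty final partial block), so it cannot absorb the missing full block in the $i=0$ case, and quotients only remove letters rather than supply the missing one. The paper avoids this exactly by emitting at positions tagged with the residue $d-1$ (see Lemma~\ref{lem:themorph}): at position $kd-1$ the tag's letter component supplies the last letter of block $k$ and the $\su$-class of the prefix supplies the rest, and such a position exists for every $k\le q$ whatever the remainder $r$ is. Switching your emission points from $P_0$ to $P_{d-1}$ (and then using the current letter together with the $\Qb_1$-class, as the paper does) fixes the gap; the rest of your argument then goes through.
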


The rest of this section is devoted to the proof of Theorem~\ref{thm:blockenr}. We fix a \pvarie \Cs and an alphabet $A$, and we prove separately that $((\Cs \circ \su) \circ \md)(A) \subseteq \benr{\Cs \circ \su}(A)$ and $\benr{\Cs \circ \su}(A) \subseteq ((\Cs \circ \su) \circ \md)(A)$.

\subsection{From modulo enrichment to block abstraction}

We start with the containment $(\Cs \circ \su) \circ \md \subseteq \benr{\Cs \circ \su}$. We actually prove the following stronger result.

\begin{proposition}\label{prop:mod2enr}
If \Ds is a lattice closed under inverse length increasing morphisms and containing \emph{\su}, then $\Ds \circ \emph{\md} \subseteq \benr{\Ds}$.
\end{proposition}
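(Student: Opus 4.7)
The plan is to take an arbitrary $L \in (\Ds \circ \md)(A)$ and show that $L \in \benrd{\Ds}(A)$ for a suitably chosen $d$. By definition there exist a $\md$-partition $\Pb$ of $A^*$ and languages $L_P \in \Ds(\Pb \times A)$ for each $P \in \Pb$ such that $L = \bigcup_{P \in \Pb}(\tau_\Pb\inv(L_P) \cap P)$. First I would apply Lemma~\ref{lem:mdnormalform} to each member of the finite partition $\Pb$ and pick a common $d \geq 1$ such that, for every $u \in A^*$, the class $\ppart{u}$ is entirely determined by $|u| \bmod d$.

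The crucial step is to observe that the tagging map $\tau_\Pb$ factors through $\mu_d$. Concretely, I would define a morphism $\alpha \colon A_d^* \to (\Pb \times A)^*$ mapping each letter $(a_1 \cdots a_c) \in A_d$ (with $1 \leq c \leq d$) to $(\ppart{\varepsilon}, a_1)(\ppart{a_1}, a_2) \cdots (\ppart{a_1 \cdots a_{c-1}}, a_c)$, and then verify the identity $\alpha \circ \mu_d = \tau_\Pb$. This identity is the heart of the argument: within any block of $d$ consecutive letters of $w$, the length of the prefix of $w$ ending just before a given position differs from the length of the corresponding in-block prefix only by a multiple of $d$, so by the choice of $d$ both lie in the same $\Pb$-class. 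Moreover $\alpha$ is length increasing, since each letter of $A_d$ is sent to a word of length at least $1$, and hence the closure hypothesis on $\Ds$ yields $\alpha\inv(L_P) \in \Ds(A_d)$ for every $P \in \Pb$.

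Next I would encode the languages in $\Pb$ on the $A_d^*$ side. For each $r \in \{1, \ldots, d-1\}$, define $Q_r \subseteq A_d^*$ as the set of words whose last letter belongs to $A^r$, and let $Q_0 = A_d^* \setminus \bigcup_{0 < r < d} Q_r$ (so that $\varepsilon \in Q_0$). Each $Q_r$ is a finite Boolean combination of suffix languages of the form $A_d^*(u)$ with $(u) \in A_d$, hence $Q_r \in \su(A_d) \subseteq \Ds(A_d)$ using the hypothesis $\su \subseteq \Ds$. A direct check on the definition of $\mu_d$ gives $\mu_d\inv(Q_r) = \{w \in A^* \mid |w| \equiv r \bmod d\}$ for each $r$. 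Since each $P \in \Pb$ depends only on length modulo $d$, it follows that $P = \mu_d\inv(R_P)$ where $R_P$ is a finite union of the $Q_r$, and therefore $R_P \in \Ds(A_d)$.

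Finally I would set $K = \bigcup_{P \in \Pb}(\alpha\inv(L_P) \cap R_P)$, which lies in $\Ds(A_d)$ because $\Ds$ is a lattice. Using $\alpha \circ \mu_d = \tau_\Pb$ together with $\mu_d\inv(R_P) = P$, one obtains
\[
\mu_d\inv(K) = \bigcup_{P \in \Pb}\bigl((\alpha\circ\mu_d)\inv(L_P) \cap \mu_d\inv(R_P)\bigr) = \bigcup_{P \in \Pb}(\tau_\Pb\inv(L_P) \cap P) = L,
\]
so $L = \mu_d\inv(K) \in \benrd{\Ds}(A) \subseteq \benr{\Ds}(A)$, as required. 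The main obstacle is establishing the factorization identity $\alpha \circ \mu_d = \tau_\Pb$; once that is in hand, the remainder is routine bookkeeping with the closure properties of $\Ds$.
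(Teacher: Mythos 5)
Your proof is correct and uses essentially the same approach as the paper: the same choice of $d$ via Lemma~\ref{lem:mdnormalform}, the same morphism $\alpha\colon A_d^*\to(\Pb\times A)^*$ sending a letter $(w)$ to $\tau_\Pb(w)$, and the same key factorization $\alpha\circ\mu_d=\tau_\Pb$. The only (harmless) organizational difference is that you assemble $K=\bigcup_P(\alpha\inv(L_P)\cap R_P)$ directly at the $A_d$ level for a single $d$ and invoke only lattice closure of $\Ds$, whereas the paper proves $P\in\benr{\Ds}$ and $\tau_\Pb\inv(L_P)\in\benr{\Ds}$ separately (Lemma~\ref{lm:mod in block enrichment} and the $\alpha$-argument) and then combines them via Corollary~\ref{cor:nopclos}.
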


With reference to Theorem~\ref{thm:blockenr}, we note that if \Cs is a \pvarie, then $\Ds = \Cs \circ \su$ satisfies the hypothesis of Proposition~\ref{prop:mod2enr}, see Proposition~\ref{prop:closureenrich}.

Let us now prove Proposition~\ref{prop:mod2enr}: let \Ds be a lattice closed under inverse length increasing morphisms and containing \su, and let $L \in (\Ds \circ \md)(A)$. Then there exists a \md-partition \Pb of $A^*$ and languages $L_P \in \Ds(\Pb\times A)$ for every $P \in \Pb$ such that
\[
L = \bigcup_{P \in \Pb} \left(\tau_\Pb\inv(L_P) \cap P\right).
\]
Recall that \benr{\Ds} is a lattice by Corollary~\ref{cor:nopclos}. To show that $L\in \benr{\Ds}(A)$, we only need to show that, for every $P \in \Pb$, both $P$ and $\tau_\Pb\inv(L_P)$ belong to $\benr{\Ds}(A)$.

The proof that $P \in \benr{\Ds}(A)$ is handled by the following lemma.

\begin{lemma}\label{lm:mod in block enrichment}
If \Ds is a lattice closed under inverse length increasing morphisms and containing \emph{\su}, then $\emph{\md} \subseteq \benr{\Ds}$.
\end{lemma}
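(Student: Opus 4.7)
My plan is to reduce to the generating modulo languages. Recall that $\md(A)$ is the Boolean algebra generated by the languages $L_{m,d} = \{w \in A^* : |w| \equiv m \pmod d\}$ with $d \ge 1$ and $0 \le m < d$. For any fixed $d$, the family $\{L_{m,d} : 0 \le m < d\}$ partitions $A^*$, and by replacing several moduli $d_1,\dots,d_k$ by their least common multiple, every Boolean combination of the generators can be rewritten as a finite union of languages $L_{m,d}$ sharing a common modulus $d$. By Corollary~\ref{cor:nopclos}, $\benr{\Ds}$ is a lattice, so it suffices to place each individual $L_{m,d}$ in $\benr{\Ds}$.

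To show $L_{m,d} \in \benrd{\Ds}(A) \subseteq \benr{\Ds}(A)$, the key observation is that the residue of $|w|$ modulo $d$ is encoded in the last letter of $\mu_d(w)$. Indeed, if $w \ne \varepsilon$ and $|w| = dq + r$ with $0 \le r < d$, the last letter of $\mu_d(w)$ lies in $A^d$ when $r = 0$ and in $A^r$ when $r > 0$. I would therefore take, for $m > 0$,
\[
K_{m,d} \;=\; \bigcup_{u \in A^m} A_d^* \cdot (u) \;\subseteq\; A_d^*,
\]
and, to handle the special status of $w = \varepsilon$ when $m = 0$,
\[
K_{0,d} \;=\; A_d^* \;\setminus\; \bigcup_{u \in A^{<d}} A_d^* \cdot (u).
\]
A direct case analysis, separating $w = \varepsilon$ from $w \ne \varepsilon$, then yields $L_{m,d} = \mu_d^{-1}(K_{m,d})$ in both cases.

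It remains to check that each $K_{m,d}$ belongs to $\Ds(A_d)$. Every building block $A_d^* \cdot (u)$, for $(u) \in A_d$, is by definition a basic definite language and thus lies in $\su(A_d)$. Since $\su$ is an \livari{} (Proposition~\ref{prop:suvar}), and in particular a Boolean algebra, the finite unions and the complement used in the definitions of $K_{m,d}$ and $K_{0,d}$ keep us inside $\su(A_d)$; the hypothesis $\su \subseteq \Ds$ then places each $K_{m,d}$ in $\Ds(A_d)$, and we conclude $L_{m,d} \in \benrd{\Ds}(A)$, as required. I expect no significant obstacle: the argument is essentially bookkeeping around the way $\mu_d$ chunks a word into blocks of length $d$ plus a final short remainder, the only minor subtlety being the isolated treatment of $w = \varepsilon$ in the residue-zero case.
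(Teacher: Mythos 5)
Your proof is correct and follows essentially the same route as the paper: reduce via least common multiple to a single modulus $d$ (this is Lemma~\ref{lem:mdnormalform} in the text), then recognize $L_{m,d}$ by testing the length of the last block of $\mu_d(w)$ using a definite language over $A_d$. The only cosmetic difference is in the $m=0$ case, where you write the target as a complement $A_d^* \setminus \bigcup_{u\in A^{<d}} A_d^*(u)$ while the paper writes it as $A_d^*K \cup \{\varepsilon\}$ with $K$ the set of full-length block letters; both are valid elements of $\su(A_d)$.
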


\begin{proof}
Let $P \in \md(A)$. In view of Lemma~\ref{lem:mdnormalform}, there exists a natural number $d \geq 1$ such that $P$ is a finite union of languages of the form $P_{m,d} = \{w \in A^* \mid |w| = m \mod d\}$, with $0 \le m < d$. Since \benr{\Ds} is a lattice, we only need to show that every $P_{m,d}$ is in $\benr{\Ds}(A)$.

If $m = 0$, let $K$ be the set of letters of $A_d$ of the form $(w)$ with $w \in A^d$. Then $P_{m,d} = \mu_d\inv(A_d^*K \cup \{\varepsilon\})$. Since $A_d^*K \cup \{\varepsilon\}$ belongs to $\su(A_d)$ and therefore to $\Ds(A_d)$ by hypothesis on \Ds. It follows that $P_{m,d} \in \benrd{\Ds}$ as desired.

If $m > 0$, we let similarly $K$ be the set of letters of $A_d$ of the form $(w)$ with $w\in A^m$. Then $P_{m,d} = \mu_d\inv(A_d^*K)$ and, since $A_d^*K\in \su(A_d)$, we have $P_{m,d} \in \benrd{\Ds}(A)$.
\end{proof}

To conclude the proof, we show that for every \md-partition $\Pb$ and every language $K \in \Ds(\Pb\times A)$, we have $\tau_\Pb\inv(K) \in \benr{\Ds}(A)$. We first establish the following lemma.

\begin{lemma} \label{lem:choiced}
	There exists a natural number $d \geq 1$ such that for any two words $w,w' \in A^*$, if $|w| = 0 \mod d$, then $\tau_\Pb(ww') = \tau_\Pb(w) \tau_\Pb(w')$.
\end{lemma}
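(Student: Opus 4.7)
The key observation is that a \md-partition $\Pb$ is entirely determined, up to a period, by the length of a word modulo some integer. The plan is to find a period $d$ common to all classes of $\Pb$, so that prepending a word of length divisible by $d$ does not affect $\Pb$-membership.

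First, I apply Lemma~\ref{lem:mdnormalform} to each $P \in \Pb$. Since $\Pb$ is a \md-partition, each $P$ lies in $\md(A)$, yielding an integer $k_P \geq 1$ such that the membership of a word in $P$ depends only on its length modulo $k_P$. As $\Pb$ is finite, I set $d$ to be the least common multiple of the finitely many $k_P$'s. With this choice, for any word $v \in A^*$, the class $\ppart{v}$ depends only on $|v| \mod d$. In particular, whenever $|w| \equiv 0 \mod d$ and $u \in A^*$ is arbitrary, we have $|wu| \equiv |u| \mod d$, so $\ppart{wu} = \ppart{u}$.

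Next, I show by induction on $|w'|$ that $\tau_\Pb(ww') = \tau_\Pb(w)\tau_\Pb(w')$ whenever $|w| \equiv 0 \mod d$. The base case $w' = \varepsilon$ is trivial. For the inductive step, write $w' = u a$ with $a \in A$ and $u \in A^*$. Using Fact~\ref{fct:tagdecomp} twice and the induction hypothesis, we have
\[
\tau_\Pb(wua) = \tau_\Pb(wu) \cdot (\ppart{wu}, a) = \tau_\Pb(w)\tau_\Pb(u) \cdot (\ppart{wu}, a).
\]
By the choice of $d$, $\ppart{wu} = \ppart{u}$, and applying Fact~\ref{fct:tagdecomp} once more gives $\tau_\Pb(u)\cdot(\ppart{u},a) = \tau_\Pb(ua) = \tau_\Pb(w')$. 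Substituting yields $\tau_\Pb(ww') = \tau_\Pb(w)\tau_\Pb(w')$, as required.

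There is no real obstacle here: the content of the lemma is simply that \md-partitions are ultimately periodic in length, and the whole proof is a routine combination of Lemma~\ref{lem:mdnormalform} with iterated use of Fact~\ref{fct:tagdecomp}.
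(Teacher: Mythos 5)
Your proof is correct and follows essentially the same route as the paper: choose $d$ as the least common multiple of the periods given by Lemma~\ref{lem:mdnormalform} applied to each $P\in\Pb$, then unfold $\tau_\Pb(ww')$ letter by letter via Fact~\ref{fct:tagdecomp} and use $\ppart{wu}=\ppart{u}$. The paper phrases the unfolding as a direct recursive expansion rather than a formal induction, but the argument is the same.
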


\begin{proof}
By Lemma~\ref{lem:mdnormalform}, for each $P \in \Pb$, there exists $d_P \geq 1$ such that for any $u,v \in A^*$, if $|u| = |v| \mod d_P$, then $u \in P$ if and only if $u \in P$. Using the finiteness of \Pb, we let $d = \lcm\{d_P \mid P \in \Pb\}$. Then if two words $u$ and $v$ satisfy $|u| = |v| \mod d$, we have $\ppart{u} = \ppart{v}$.

Now suppose that $|w| = 0 \mod d$ and that $w' = a_1 \cdots a_n$ with each $a_i \in A$. Using Fact~\ref{fct:tagdecomp} recursively, we obtain,
\[
\tau_\Pb(ww') = \tau_\Pb(w) (\ppart{w},a_1)(\ppart{wa_1},a_2) \cdots  (\ppart{wa_1 \cdots a_{n-1}},a_n).
\]
Moreover, since $|w| = 0 \mod d$, our choice of $d$ implies that $\ppart{wa_1 \cdots a_{i}} = \ppart{a_1 \cdots a_{i}}$ for every $i \leq n-1$. Therefore, we have
\[
\tau_\Pb(ww') = \tau_\Pb(w) (\ppart{\varepsilon},a_1)(\ppart{a_1},a_2) \cdots  (\ppart{a_1 \cdots a_{n-1}},a_n) = \tau_\Pb(w) \tau_\Pb(w'),
\]
which concludes the proof.
\end{proof}

We now show that if $K \in \Ds(\Pb\times A)$ and $d$ is given by Lemma~\ref{lem:choiced}, then $\tau_\Pb\inv(K) \in \benrd{\Ds}(A) \subseteq \benr{\Ds}(A)$. Let $\alpha\colon A_d^* \to (\Pb \times A)^*$ be the morphism given as follows: for each letter $b = (w)\in A_d$ (where $w \in A^+$ is a nonempty word of length at most $d$), let $\alpha(b) = \tau_\Pb(w)$.

This morphism satisfies $\tau_\Pb = \alpha \circ \mu_d$. Indeed, if $u\in A^*$, we consider the decomposition $u = w_1 \cdots w_n$ such that $|w_i| = d$ for every $i\le n-1$ and $|w_n| < d$. Then $\mu_d(w) = (w_1) \cdots (w_n)$ and $\alpha\circ\mu_d(u) = \tau_\Pb(w_1)\cdots\tau_\Pb(w_{n-1})\tau_\Pb(w_n)$. Lemma~\ref{lem:choiced} shows then that $\alpha\circ\mu_d(u) = \tau_\Pb(u)$.

Note now that $|\tau_\Pb(w)| = |w|$ by definition, so $\alpha$ is length increasing. It follows that $\alpha\inv(K) \in\Ds(A_d)$ and hence $\tau_\Pb\inv(K) = \mu_d\inv(\alpha\inv(K)) \in \benrd{\Ds}(A)$, as announced.

\subsection{From block abstraction to modulo enrichment}

The proof that $\benr{\Cs \circ \su} \subseteq (\Cs \circ \su) \circ \md$ is more involved. Consider a language $L \in \benr{\Cs \circ \su}(A)$.  By definition, $L \in \benrd{\Cs \circ \su}(A)$ for some integer $d \geq 1$, which we fix for the remainder of the proof. In particular, we let $\Mb = \{M_i \mid 0\le i < d\}$ be the \md-partition of $A^*$ which counts the length of words modulo $d$. That is, for any integer $0 \le i < d$,
$$M_i = \{w \in A^* \mid |w| = i \mod d\}.$$

We reduce the problem to proving the following proposition.

\begin{proposition} \label{prop:whattodo}
	There exists a language $H \in (\Cs \circ \emph{\su})(\Mb \times A)$ such that $L =  \tau_\Mb\inv(H)$.
\end{proposition}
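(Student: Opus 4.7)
I would deduce this via a structural decomposition of $K$ together with a ``block-reading'' morphism that lives at the $(\Pb \times B)$-level. First, by Lemma~\ref{lem:partmult} I may assume $K = \bigcup_{Q \in \Qb}(\tau_\Qb\inv(K_Q) \cap Q)$ for a \su-congruence $\Qb$ of $A_d^*$ refining the $\sim_k$-partition for some $k$ large enough, with $K_Q \in \Cs(\Qb \times A_d)$ for each $Q$. Since $(\Cs \circ \su)(B)$ is closed under finite unions by Proposition~\ref{prop:closureenrich}, it suffices to build, for each $Q$, a language $H_Q \in (\Cs \circ \su)(B)$ with $\tau_\Mb\inv(H_Q) = \mu_d\inv(\tau_\Qb\inv(K_Q) \cap Q)$, and then set $H = \bigcup_Q H_Q$.

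The main step is the choice of an appropriate \su-partition $\Pb$ of $B^*$ together with a morphism $\alpha\colon (\Pb \times B)^* \to (\Qb \times A_d)^*$. Concretely, I would take $\Pb$ to be the $\sim_{(k+1)d}$-partition of $B^*$. Its key property is that, for any $w \in A^*$ of length at least $(k+1)d$ divisible by $d$, the $\Pb$-class produced by $\tau_\Pb$ at position $jd$ of $\tau_\Mb(w)$ (for $j \geq k+1$) simultaneously encodes the $k$ blocks of $\mu_d(w)$ preceding block $j$ (hence the $\sim_k$-class $Q_j$) and the first $d-1$ letters of block $j$ itself. This lets me define $\alpha$ letter-by-letter as follows: on a ``block-end'' letter $(P, (M_{d-1}, a))$ whose $\Pb$-component $P$ is consistent, output $(Q_P, (u_P a))$, where both the tag $Q_P \in \Qb$ and the length-$(d-1)$ prefix $u_P$ of the current block are read off from $P$; on every other letter, output $\varepsilon$. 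A straightforward inductive verification then yields the identity $\alpha(\tau_\Pb(\tau_\Mb(w))) = \tau_\Qb(\mu_d(w))$ for every $w \in A^*$ with $|w| \geq (k+1)d$ and $|w| \equiv 0 \pmod d$.

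From this identity, the closure of $\Cs$ under inverse morphisms gives $\alpha\inv(K_Q) \in \Cs(\Pb \times B)$, so defining
\[
H_Q \;=\; \bigcup_{P \in \Pb_Q}\bigl(\tau_\Pb\inv(\alpha\inv(K_Q)) \cap P\bigr),
\]
where $\Pb_Q \subseteq \Pb$ collects the classes consistent with $\mu_d(w) \in Q$ (which is recoverable from the $\sim_{(k+1)d}$-suffix alone), places $H_Q$ in $(\Cs \circ \su)(B)$. The step I expect to be the main obstacle is selecting $\Pb$ and $\alpha$ in lockstep so that the $\Pb$-component of a single block-end letter carries exactly enough information to recover both the tag $Q_j$ and the first $d-1$ letters of the current block, while keeping $\Pb$ an \su-partition. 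A secondary technical nuisance is the treatment of words $w$ with $|w| \not\equiv 0 \pmod d$ or $|w| < (k+1)d$: in the first case I would handle the partial last block by fixing its content $v \in A^r$ and the $\sim_k$-class $Q'$ of the length-$(qd)$ prefix, replacing $K_Q$ by the quotient $K_Q \cdot (Q', (v))\inv$ (which lies in $\Cs$ by quotient-closure) and reusing the $r=0$ construction on the prefix; in the second, the finitely many exceptional short words can be absorbed into $H$ since singletons belong to $\su(B) \subseteq (\Cs \circ \su)(B)$.
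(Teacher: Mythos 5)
Your proposal follows essentially the same strategy as the paper's proof: choose a sufficiently fine \su-partition of $(\Mb \times A)^*$ (you pick the $\sim_{(k+1)d}$-partition; the paper uses the $\sim_{d(\ell+1)}$-partition, which is the same thing up to renaming), build an erasing morphism from the enriched alphabet over $\Mb\times A$ into $(\Pb \times A_d)^*$ that emits at each block-end position a $(\Pb\times A_d)$-letter recovered from the tag and the preceding $d-1$ positions, and handle the residual partial block via the right quotient $K_\bullet \cdot (\cdot)\inv$. These are precisely the paper's Lemmas~\ref{lem:thecut} and~\ref{lem:themorph} and the subsequent definition of $H_U$. Two remarks. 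First, your restriction of the key identity to words of length $\geq (k+1)d$ and the ensuing ``absorb short words as singletons'' workaround is unnecessary: for a prefix of length $< (k+1)d$ its $\sim_{(k+1)d}$-class is that prefix itself, so the tag at a block-end position still carries all the information needed; this is why the paper's Lemma~\ref{lem:themorph} holds uniformly (with base case $w = \varepsilon$) and no case split on word length is required. Indeed the singleton absorption, as you phrase it, is slightly incomplete anyway --- you would need to both add the correct short words and exclude from the ``long-word $H$'' any short words of $\tau_\Mb(A^*)$ it accidentally contains, which requires intersecting with the \su-definable set of long words --- so avoiding it is preferable. Second, your appeal to Lemma~\ref{lem:partmult} to upgrade the \su-partition of $A_d^*$ to a congruence is also not needed; the argument only requires knowing that every block lies in some $\su_\ell(A_d)$, which the paper gets for free from finiteness of the partition.
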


Indeed, assuming temporarily Proposition~\ref{prop:whattodo}, we see that
\[
L = \bigcup_{M \in \Mb}  \left(\tau_\Mb\inv(H) \cap M\right)
\]
since \Mb is a partition of $A^*$. It then follows from the definition that $L \in ((\Cs \circ \su) \circ \md)(A)$ as desired.

We now focus on proving Proposition~\ref{prop:whattodo}. We start by introducing a convenient definition: given a word $w \in A^*$, the \emph{$d$-cut of $w$} is the unique pair of words $(u,v) \in (A^*)^2$ such that $w = uv$, $|u| = 0 \mod d$ and $0 \leq |v| \leq d-1$ (in particular, $u$ is the longest prefix of $w$ with length a multiple of $d$). The following facts hold by definition of the maps $\mu_d\colon A^* \to A_d$ and $\tau_\Mb\colon A^* \to (\Mb \times A)^*$.

\begin{fct} \label{fct:cuts}
	Let $w \in A^*$ and let $(u,v)$ be its $d$-cut. Then either $v = \varepsilon$ and $\mu_d(w) = \mu_d(u)$, or $v \neq \varepsilon$ and $\mu_d(w) = \mu_d(u) \ (v)$. Moreover, $\tau_{\Mb}(w) = \tau_{\Mb}(u) \ \tau_{\Mb}(v)$.
\end{fct}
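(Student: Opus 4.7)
The statement has two independent parts, both of which will follow directly from the definitions of $\mu_d$ and $\tau_{\Mb}$ together with facts already in the excerpt.

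For the first part, I would start from the $d$-cut decomposition $w = uv$ with $|u| = dq$ a multiple of $d$ and $0 \le |v| < d$. Since $|u|$ is a multiple of $d$, Fact~\ref{fct:blockcut} gives $\mu_d(w) = \mu_d(u)\,\mu_d(v)$. It then remains to evaluate $\mu_d(v)$: by the definition of $\mu_d$, if $v = \varepsilon$ then $\mu_d(v) = \varepsilon$ and so $\mu_d(w) = \mu_d(u)$; otherwise $0 < |v| < d$ so the only block in the decomposition of $v$ is a single short block, yielding $\mu_d(v) = (v)$ and hence $\mu_d(w) = \mu_d(u)\,(v)$.

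For the second part, I would unfold the definition of $\tau_{\Mb}$ on $w = uv$. Writing $v = a_1 \cdots a_n$ with $a_i \in A$ and applying Fact~\ref{fct:tagdecomp} iteratively (to extend $\tau_{\Mb}(u)$ one letter at a time), I obtain
\[
\tau_{\Mb}(uv) = \tau_{\Mb}(u)\,(\mpart{u},a_1)(\mpart{ua_1},a_2)\cdots(\mpart{ua_1\cdots a_{n-1}},a_n),
\]
where $\mpart{x}$ denotes the class of $x$ in $\Mb$. The key observation is that $\Mb$ is defined by length modulo $d$ and $|u|$ is a multiple of $d$, so $|ua_1\cdots a_i| \equiv |a_1\cdots a_i| \pmod d$ for every $i$, hence $\mpart{ua_1\cdots a_i} = \mpart{a_1\cdots a_i}$. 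Substituting these equalities in the expression above exactly yields $\tau_{\Mb}(u)\,\tau_{\Mb}(v)$, by the definition of $\tau_{\Mb}(v)$.

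There is no real obstacle: both parts are bookkeeping from the definitions. In fact the second part is essentially the specialization to $\Pb = \Mb$ of the general Lemma~\ref{lem:choiced}, which has already been established, so one could alternatively just cite that lemma. I would opt for the short direct argument above since it keeps the fact self-contained.
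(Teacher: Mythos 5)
Your proof is correct. The paper states this fact without argument, merely asserting that it ``holds by definition of the maps $\mu_d$ and $\tau_\Mb$''; your derivation of the first part from Fact~\ref{fct:blockcut} and of the second part from an iterated application of Fact~\ref{fct:tagdecomp} combined with the observation that $|ua_1\cdots a_i|\equiv|a_1\cdots a_i|\pmod d$ is exactly the natural way to fill in those details, and your remark that the second part is the specialization of Lemma~\ref{lem:choiced} to $\Pb=\Mb$ is also accurate.
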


By definition of $\Cs \circ \su$, the construction of the language $H \in (\Cs \circ \su)(\Mb \times A)$ announced in Proposition~\ref{prop:whattodo} requires first choosing an \su-partition of $(\Mb\times A)^*$.

Since $L \in \benrd{\Cs \circ \su}(A)$, there exists $K \in (\Cs \circ \su)(A_d)$ such that $L = \mu_d\inv(K)$ and, by definition of \su-enrichment, there exists an \su-partition \Pb of $A_d^*$ and languages $K_P\in \Cs(\Pb \times A_d)$ for every $P \in \Pb$ such that
\[
K = \bigcup_{P \in \Pb} \left(\tau_\Pb\inv(K_P) \cap P\right) \subseteq A_d^*.
\]
\Pb contains finitely many languages in $\su(A_d)$, so there exists $\ell \in \nat$ such that every $P \in \Pb$ belongs to $\su_{\ell}(A_d)$. We then let $k = d (\ell+1)$ and \Ub be the partition of $(\Mb \times A)^*$ into \keqsu-classes. By Lemma~\ref{lem:canoeq}, \Ub is an \su-partition.

To construct $H \in (\Cs \circ \su)(\Mb \times A)$ such that $L =  \tau_\Mb\inv(H)$, we use the following two lemmas, whose proofs are deferred to Sections~\ref{proof of 5.12} and~\ref{proof of 5.13}.

\begin{lemma} \label{lem:thecut}
	Let $w_1,w_2 \in A^*$ and $(u_1,v_1),(u_2,v_2) \in (A^*)^2$ be their $d$-cuts. Assume that $\tau_\Mb(w_1) \keqsu \tau_\Mb(w_2)$. Then $\ppart{\mu_d(w_1)} = \ppart{\mu_d(w_2)}$, $\ppart{\mu_d(u_1)} = \ppart{\mu_d(u_2)}$ and $v_1 = v_2$
\end{lemma}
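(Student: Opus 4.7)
The plan is to first reduce to the substantive case. By definition of $\keqsu$, either $\tau_\Mb(w_1) = \tau_\Mb(w_2)$, or both have length at least $k$ and share the same length-$k$ suffix. Since the $A$-component of each letter of $\tau_\Mb(w)$ recovers the corresponding letter of $w$, the map $\tau_\Mb$ is injective, so the first alternative gives $w_1 = w_2$ and all three conclusions hold trivially. So I would assume $|w_1|,|w_2| \ge k$ and that the length-$k$ suffixes of $\tau_\Mb(w_1)$ and $\tau_\Mb(w_2)$ coincide.

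Second, I would translate this suffix equality into concrete data about the $w_j$. For position $i$ in $w_j$, the $\Mb$-component of the $i$-th letter of $\tau_\Mb(w_j)$ is $\wpart{a_1\cdots a_{i-1}}{\Mb} = M_{(i-1)\bmod d}$, since $\Mb$ records length modulo $d$. Matching $A$-components on the last $k$ positions gives that $w_1$ and $w_2$ agree on their length-$k$ suffixes; matching $\Mb$-components forces $|w_1| \equiv |w_2| \pmod d$. Since the $d$-cut satisfies $|v_j| = |w_j| \bmod d < d$, this yields $|v_1| = |v_2|$ at once, and since $|v_j| < d \le k$, the shared length-$k$ suffix of the $w_j$ forces $v_1 = v_2$, establishing claim 3.

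Finally, for claims 1 and 2 I would use the choice $k = d(\ell+1)$, together with the fact that $\Pb \subseteq \su_\ell(A_d)$, so by Lemma~\ref{lem:canoeq} the $\Pb$-class of any word of length at least $\ell$ in $A_d^*$ is determined by its length-$\ell$ suffix. For claim 1, Fact~\ref{fct:cuts} describes the last $\ell$ letters of $\mu_d(w_j)$: at most the very last one is the short block $(v_j)$, the others being consecutive $d$-blocks of $w_j$. Together they span at most $(\ell-1)d + |v_j| \le \ell d - 1 < k$ trailing positions of $w_j$ (and exactly $\ell d < k$ positions if $v_j = \varepsilon$); these positions lie inside the shared length-$k$ suffix of $w_1$ and $w_2$, and the last block agrees as well because $v_1 = v_2$. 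Combined with $|\mu_d(w_j)| \ge \ell + 1$ (immediate from $|w_j| \ge k$), Lemma~\ref{lem:canoeq} delivers $\ppart{\mu_d(w_1)} = \ppart{\mu_d(w_2)}$. Claim 2 follows by the same argument applied to $u_j$, where $\mu_d(u_j)$ is a concatenation of $|u_j|/d$ complete $d$-blocks, and the last $\ell$ of them cover positions in $w_j$ at distance at most $|v_j| + \ell d < k$ from the end. The main obstacle is not conceptual but bookkeeping: threading the arithmetic through the definitions of $\tau_\Mb$, $\mu_d$ and the $d$-cut, and verifying that the chosen constant $k = d(\ell+1)$ is just large enough to guarantee both the $\ell$-suffix stability in alphabet $A_d$ and the $A$-level agreement that forces $v_1 = v_2$.
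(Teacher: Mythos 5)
Your proof is correct and follows essentially the same route as the paper's. The paper also derives exactly two facts from $\tau_\Mb(w_1)\keqsu\tau_\Mb(w_2)$: that $|w_1|\equiv|w_2|\pmod d$ (from the $\Mb$-components, its Fact~\ref{fct:congtau}) and that $w_1,w_2$ share a long $A$-suffix, which it then packages as $\mu_d(w_1)\eqsu{\ell+1}\mu_d(w_2)$ (Fact~\ref{fct:eqmu}); from there it reads off all three conclusions. You simply reorder things slightly — establishing $v_1=v_2$ directly at the word level from $|v_j|<d\le k$ rather than by inspecting the last $A_d$-letter of $\mu_d(w_j)$, and then doing a direct position count for $\mu_d(u_j)$ instead of dropping one letter from the $\eqsu{\ell+1}$ statement — but the key ideas (reading the length class off the last letter's $\Mb$-component, injectivity of $\tau_\Mb$ via its $A$-component, and that $k=d(\ell+1)$ is chosen so that the last $\ell$ blocks of $\mu_d(w_j)$ lie inside the shared suffix) are identical.
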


\begin{lemma} \label{lem:themorph}
	There exists a morphism $\alpha\colon (\Ub \times (\Mb \times A))^* \to (\Pb \times A_d)^*$ such that for any word $w \in A^*$ with $d$-cut $(u,v)\in (A^*)^2$, we have $\tau_\Pb(\mu_d(u))  = \alpha(\tau_\Ub(\tau_\Mb(w)))$.
\end{lemma}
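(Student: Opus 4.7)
My approach is to define $\alpha$ so that it collapses most positions to $\varepsilon$ and emits one letter of $\Pb \times A_d$ only at positions corresponding to the ends of the $d$-blocks in $u$. The starting observation is that, inside $\tau_\Mb(w)$, the positions whose $\Mb$-tag equals $M_{d-1}$ are exactly positions $d, 2d, \ldots, qd$ (where $|u|=qd$), and these are in natural bijection with the $q$ letters of $\mu_d(u) = (u_1)\cdots(u_q)$. I would therefore set $\alpha((U,(M,a))) = \varepsilon$ whenever $M \neq M_{d-1}$ and focus on the remaining case.

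For $M = M_{d-1}$, I would build the target letter $(P,(b)) \in \Pb \times A_d$ from a canonical word associated with the class $U$. By Lemma~\ref{lem:canoeq}, $U$ is either a singleton (for words shorter than $k$) or entirely characterised by a common length-$k$ suffix in $(\Mb \times A)^*$, which in either case gives me a well-defined finite word to inspect. Peeling off the last $d-1$ letters of that word and extracting their $A$-components, then appending $a$, yields the block $(b) \in A^d$. For $P$ I would use the $\ell$ preceding $d$-blocks (or the whole word if fewer than $\ell$ are available) and take the $\ppart{\cdot}$-class of the resulting word in $A_d^*$; this is legitimate because $\Pb \subseteq \su_\ell(A_d)$ and so $P$ is entirely determined by length-$\ell$ suffixes. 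The arithmetic choice $k = d(\ell+1)$ is exactly what makes this extraction possible: since $k$ is a multiple of $d$, the $\Mb$-tag sequence of the length-$k$ suffix of $\tau_\Mb(w_1 \cdots w_{jd-1})$ is forced to be $M_{d-1}, M_0, \ldots, M_{d-2}, M_{d-1}, \ldots, M_{d-2}$, which unambiguously marks the $d$-block boundaries and packs in precisely one trailing letter of $u_{j-\ell-1}$, the $\ell$ full blocks $u_{j-\ell}, \ldots, u_{j-1}$, and the first $d-1$ letters of $u_j$.

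Once $\alpha$ is in place, verification is a routine position-by-position check: at each position $i = jd$ with $1 \leq j \leq q$, the extraction applied to $(U_{jd},(M_{d-1},w_{jd}))$ produces exactly $(P_j,(u_j))$ with $P_j = \ppart{(u_1)\cdots(u_{j-1})}$, while every other position contributes $\varepsilon$, so the product of all single-letter contributions assembles to $\tau_\Pb(\mu_d(u))$. The main delicate step is to check that the extraction is genuinely representative-independent; for letters $(U,(M_{d-1},a))$ that do not arise from a true $\tau_\Mb$-image one can simply assign an arbitrary default value since such letters never occur in $\tau_\Ub(\tau_\Mb(w))$, so the substantive content reduces to the $d$-periodic structure of $\Mb$-tags inside the length-$k$ suffix, which is the same structural fact that already underlies Lemma~\ref{lem:thecut}.
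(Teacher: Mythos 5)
Your proposal is correct and takes essentially the same approach as the paper: both define $\alpha$ to collapse every letter to $\varepsilon$ except those whose $\Mb$-component is $M_{d-1}$, where it emits a single $(\Pb \times A_d)$-letter determined by the $\Ub$-class. The only organizational differences are that you re-derive the well-definedness directly from the $d$-periodic structure of the length-$k$ suffix (where the paper simply invokes Lemma~\ref{lem:thecut}, whose proof encapsulates the same arithmetic), and you verify the identity $\tau_\Pb(\mu_d(u)) = \alpha(\tau_\Ub(\tau_\Mb(w)))$ position-by-position rather than by the paper's induction on $|w|$.
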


For each $U \in \Ub$, we construct a language $H_U \in \Cs(\Ub \times (\Mb \times A))$ as follows. If $U$ does not intersect $\tau_\Mb(A^*)$, we let $H_U = \emptyset$. Otherwise, it follows from Lemma~\ref{lem:thecut} that there exists $P,Q \in \Pb$ and $x \in A^*$ of length at most $d-1$ such that for any $w \in A^*$ whose $d$-cut is $(u,v) \in (A^*)^2$, if $\tau_\Mb(w) \in U$, then $\ppart{\mu_d(w)} = P$, $\ppart{\mu_d(u)} = Q$ and $v = x$. The definition of $H_U$ in that case uses the morphism $\alpha$ in Lemma~\ref{lem:themorph} and depends on whether $x = \varepsilon$ or $x \neq \varepsilon$. Note that if $x \neq \varepsilon$, then $(x)$ is a letter of $A_d$ and $(Q,(x))$ is a letter of $\Pb \times A_d$. We let
\[
H_U = \left\{\begin{array}{ll}
\alpha\inv(K_P \cdot (Q,(x))\inv) & \text{when $x \neq \varepsilon$} \\
\alpha\inv(K_P) & \text{when $x = \varepsilon$}
\end{array}\right.
\]  
Since \Cs is closed under right quotients and inverse morphisms, and since $K_P \in \Cs(\Pb \times A_d)$, we have $H_U \in \Cs(\Ub \times (\Mb \times A))$. We then define a language $H$ on alphabet $\Mb \times A$ as follows:
\[
H = \bigcup_{U \in \Ub} \left(\tau_\Ub\inv(H_U) \cap U\right).
\]
Since \Ub is an \su-partition of $(\Mb \times A)^*$, the language $H$ is in $(\Cs \circ \su)(\Mb \times A)$. 

We now show that $L = \tau_\Mb\inv(H)$. Since $L = \mu_d\inv(K)$, this amounts to proving that for any $w \in A^*$, the following equivalence holds:
\begin{equation} \label{eq:eqgoal}
\mu_d(w) \in K \quad  \text{if and only if} \quad \tau_\Mb(w) \in H
\end{equation}

So let us fix $w \in A^*$ and let $(u,v) \in (A^*)^2$ be its $d$-cut. Since \Ub is a partition, let $U$ be the unique element of $\Ub$ such that $\tau_\Mb(w) \in U$, and let $P = \ppart{\mu_d(w)}$ and $Q = \ppart{\mu_d(u)}$. By Lemma~\ref{lem:thecut}, $P$, $Q$ and $v$ are entirely determined by $U$. By definition of $K$ and $H$, proving~\eqref{eq:eqgoal} is equivalent to proving
$$\tau_\Pb(\mu_d(w)) \in K_P \quad  \text{if and only if} \quad \tau_\Ub(\tau_\Mb(w)) \in H_U.$$
There are two cases depending on whether $v$ is empty or not. We handle the case when $v \neq \varepsilon$, the other case is similar. By definition of $H_U$, we want to prove
\begin{equation} \label{eq:eqgoal2}
\tau_\Pb(\mu_d(w)) \in K_P \quad  \text{if and only if} \quad \alpha(\tau_\Ub(\tau_\Mb(w))) \in K_P \cdot (Q,v)\inv.
\end{equation}
By Fact~\ref{fct:cuts}, we have $\tau_\Pb(\mu_d(w))= \tau_\Pb(\mu_d(u) \cdot (v))$ and Fact~\ref{fct:tagdecomp} shows that 
\[
\tau_\Pb(\mu_d(w)) = \tau_\Pb(\mu_d(u)) \cdot (\ppart{\mu_d(u)},(v)) = \tau_\Pb(\mu_d(u)) \cdot (Q,(v)).
\]
Therefore $\tau_\Pb(\mu_d(w)) \in K_P$ if and only if $\tau_\Pb(\mu_d(u)) \in K_P \cdot (Q,(v))\inv$. Finally, Lemma~\ref{lem:themorph} allows us to conclude since it shows that $\tau_\Pb(\mu_d(u)) = \alpha(\tau_\Ub(\tau_\Mb(w)))$. This completes the proof of Proposition~\ref{prop:whattodo}.

\subsection{Proof of Lemma~\ref{lem:thecut}}\label{proof of 5.12}

Let $w_1,w_2 \in A^*$ and $(u_1,v_1),(u_2,v_2) \in (A^*)^2$ be their $d$-cuts. Assume that $\tau_\Mb(w_1) \keqsu \tau_\Mb(w_2)$. We want to show that $\ppart{\mu_d(w_1)} = \ppart{\mu_d(w_2)}$, $\ppart{\mu_d(u_1)} = \ppart{\mu_d(u_2)}$ and $v_1 = v_2$. We first establish two facts.

\begin{fct} \label{fct:congtau}
	The lengths $|w_1|$ and $|w_2|$ are congruent modulo $d$.
\end{fct}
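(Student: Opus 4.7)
The plan is to unpack the definition of $\tau_{\Mb}$ and observe that, because $\Mb$ partitions $A^{*}$ strictly by length modulo $d$, the first component of each letter of $\tau_{\Mb}(w)$ records the position index modulo $d$. Concretely, I would first note that if $w = a_1 \cdots a_n$, then by definition the $i$-th letter of $\tau_\Mb(w)$ is $(\mpart{a_1 \cdots a_{i-1}}, a_i)$, where $\mpart{a_1\cdots a_{i-1}} = M_{(i-1)\bmod d}$. In particular, the final letter of $\tau_\Mb(w)$ has first coordinate $M_{(|w|-1)\bmod d}$, so the congruence class of $|w|$ modulo $d$ is entirely determined by the last letter of $\tau_\Mb(w)$.

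From this observation the fact falls out by a case split on the definition of $\keqsu$. Either $\tau_\Mb(w_1) = \tau_\Mb(w_2)$, in which case the two taggings have equal length and hence $|w_1| = |\tau_\Mb(w_1)| = |\tau_\Mb(w_2)| = |w_2|$, which trivially gives $|w_1| \equiv |w_2| \pmod d$. Or the two taggings both have length at least $k$ and share a common suffix of length $k$; since $k = d(\ell+1) \geq 1$, their last letters coincide, so the first components of those last letters coincide, yielding $(|w_1|-1) \equiv (|w_2|-1) \pmod d$ and therefore $|w_1| \equiv |w_2| \pmod d$.

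There is no real obstacle here: the entire content of the fact is that the $\Mb$-tagging faithfully records, letter by letter, the position modulo $d$, and \keqsu{} retains at least the last letter. I would just write the two-line argument above directly in place of a sketch.
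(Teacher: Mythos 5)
Your proof is correct and takes the same approach as the paper's: the last letter of $\tau_{\Mb}(w)$ records $|w| \bmod d$, and $\keqsu$ (for $k\ge 1$) forces the last letters to agree. You are slightly more careful than the paper in spelling out the case split in the definition of $\keqsu$ (which cleanly handles the case where $\tau_{\Mb}(w_1)$ is too short to have a length-$k$ suffix, including the empty word), but the underlying argument is identical.
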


\begin{proof}
	By definition of \Mb, the lengths modulo $d$ of $w_1$ and $w_2$ are encoded in the rightmost letters of their $\tau_\Mb$-images, which are identical since $\tau_\Mb(w_1) \keqsu \tau_\Mb(w_2)$.
\end{proof}

Recall that $\ell \in \nat$ was defined so that every $P \in \Pb$ belongs to $\su_\ell(A_s)$, and that $k = d(\ell+1)$.

\begin{fct} \label{fct:eqmu}
	We have $\mu_d(w_1) \eqsu{\ell+1} \mu_d(w_2)$.
\end{fct}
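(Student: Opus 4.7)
The plan is to unwind the assumption $\tau_\Mb(w_1) \keqsu \tau_\Mb(w_2)$ into a statement about $w_1,w_2$ themselves, and then transfer this back through $\mu_d$. Recall that $k = d(\ell+1)$ and that $\keqsu$ either forces equality of the two tagged words, or forces both to have length at least $k$ with a common suffix of length $k$.

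In the first case, since $\tau_\Mb$ is injective, $w_1 = w_2$ and $\mu_d(w_1) = \mu_d(w_2)$, so we are done. In the second case, the label component of each letter of $\tau_\Mb(w_i)$ is exactly the corresponding letter of $w_i$, hence $w_1$ and $w_2$ have length at least $k = d(\ell+1)$ and share their last $k$ letters in $A$. Moreover, by Fact~\ref{fct:congtau} we can write $|w_1| = dq_1 + r$ and $|w_2| = dq_2 + r$ with a common remainder $0 \leq r < d$, and from $|w_i| \geq d(\ell+1)$ it follows that $q_i \geq \ell+1$ when $r = 0$ and $q_i \geq \ell$ when $r > 0$; in either case $\mu_d(w_i)$ has length at least $\ell+1$.

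The core step is to observe that the suffix of $\mu_d(w_i)$ of length $\ell+1$ is determined by the last $k$ letters of $w_i$. Indeed, writing the $d$-cut $w_i = u_i v_i$ with $|v_i| = r$, and decomposing $u_i = b_1^{(i)} \cdots b_{q_i}^{(i)}$ into blocks of length $d$, we have $\mu_d(w_i) = (b_1^{(i)}) \cdots (b_{q_i}^{(i)})$ if $r = 0$, and $\mu_d(w_i) = (b_1^{(i)}) \cdots (b_{q_i}^{(i)}) (v_i)$ otherwise. Its suffix of length $\ell+1$ therefore reads, respectively, $(b_{q_i-\ell}^{(i)}) \cdots (b_{q_i}^{(i)})$ or $(b_{q_i-\ell+1}^{(i)}) \cdots (b_{q_i}^{(i)}) (v_i)$, which is determined by the last $d(\ell+1)$ letters of $w_i$ (when $r = 0$) or by the last $d\ell + r < d(\ell+1)$ letters of $w_i$ (when $r > 0$).

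Since in both situations the relevant number of trailing letters is at most $k = d(\ell+1)$, and since $w_1$ and $w_2$ agree on their length-$k$ suffix, the suffixes of $\mu_d(w_1)$ and $\mu_d(w_2)$ of length $\ell+1$ coincide. Together with the length bound above, this yields $\mu_d(w_1) \eqsu{\ell+1} \mu_d(w_2)$. The only real subtlety is the bookkeeping across the two cases $r = 0$ and $r > 0$, which I expect to be the main (minor) obstacle: one must be careful that the trailing block $(v_i)$ of $\mu_d(w_i)$ is common to both words, and this is ensured by Fact~\ref{fct:congtau} together with the shared length-$k$ suffix.
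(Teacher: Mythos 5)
Your proposal is correct and follows essentially the same route as the paper's proof: in both, the crux is that the length-$(\ell+1)$ suffix of $\mu_d(w_i)$ is determined by the last at most $k = d(\ell+1)$ letters of $w_i$ together with $|w_i| \bmod d$ (which is common to $w_1$ and $w_2$ by Fact~\ref{fct:congtau}), so the shared length-$k$ suffix of $w_1$ and $w_2$ forces the suffixes of their $\mu_d$-images to agree. The only difference is presentational: you split off the degenerate case $w_1 = w_2$ and spell out the block indexing, whereas the paper argues uniformly that every length-$\le \ell+1$ suffix of $\mu_d(w_1)$ is also a suffix of $\mu_d(w_2)$ and concludes by symmetry.
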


\begin{proof}
Let $\mu_d(w_1) = (b_1)\cdots (b_m)$ and $\mu_d(w_2) = (c_1)\cdots (c_n)$. By definition, $b_i$ and $c_j$ are length $d$ words in $A^*$ for all $1\le i < m$ and $1\le j < n$, and $b_m$ and $c_n$ are non empty words of length at most $d$.

Let $x \in A_d^*$ be a suffix of $\mu_d(w_1)$ of length $h \le \ell+1$. Then $x = (b_{m-h+1})\cdots (b_m)$. Note that the word $b_{m-h+1}\cdots b_m$ in $A^*$ has length $q \le hd \le k$, and that $x$ is its $\mu_d$-image. Note also that each suffix of $w_1$ is uniquely determined by the suffix of $\tau_\Mb(w_1)$ of the same length. 

Since $\tau_\Mb(w_1) \keqsu \tau_\Mb(w_2)$, the length $q$ suffix of $\tau_\Mb(w_1)$ is also a suffix of $\tau_\Mb(w_2)$, and hence the suffix of length $q$ of $w_1$, namely $b_{m-h+1}\cdots b_m$, is also a suffix of $w_2$. In addition, the lengths of $w_1$ and $w_2$ are congruent modulo $d$ by Fact~\ref{fct:congtau}, so we have $b_{m-i} = c_{n-i}$ for every $0\le i < h$. It follows that $x$ is also a suffix of $\mu_d(w_2)$.

This proves the announced fact by symmetry.
\end{proof}

We now conclude the proof of Lemma~\ref{lem:thecut}. Since every $P \in \Pb$ belongs to $\su_\ell(A_d)$ and since $\mu_d(w_1) \eqsu{\ell+1} \mu_d(w_2)$ (Fact~\ref{fct:eqmu}), we have $\ppart{\mu_d(w_1)} = \ppart{\mu_d(w_2)}$.

Since $|w_1|$ and $|w_2|$ are congruent modulo $d$ (Fact~\ref{fct:congtau}), we know that $v_1 = \varepsilon$ if and only if $v_2 = \varepsilon$. If $v_1 = v_2 = \varepsilon$, then $w_1 = u_1$ and $w_2 = u_2$ and we already showed that $\ppart{\mu_d(w_1)} = \ppart{\mu_d(w_2)}$. If instead $v_1 \neq \varepsilon$ and $v_2 \neq \varepsilon$, then Fact~\ref{fct:cuts} shows that $\mu_d(w_1) = \mu_d(u_1) \cdot (v_1)$ and $\mu_d(w_2) = \mu_d(u_2) \cdot (v_2)$. Using the fact that $\mu_d(w_1) \eqsu{\ell+1} \mu_d(w_2)$, we find that $v_1 = v_2$ and $\mu_d(u_1) \eqsu{\ell} \mu_d(u_2)$. It follows that $\ppart{\mu_d(u_1)} = \ppart{\mu_d(u_2)}$ since every $P \in \Pb$ belong to $\su_\ell(A_d)$.

\subsection{Proof of Lemma~\ref{lem:themorph}}\label{proof of 5.13}

We define a morphism $\alpha\colon (\Ub \times (\Mb \times A))^* \to (\Pb \times A_d)^*$ as follows. Let $(U,(M,a))$ be a letter in $\Ub \times (\Mb \times A)$. By definition of \Mb, $M = M_i$ for some $i < d$.

Suppose first that $i= d - 1$ and $U$ contains an element of $\tau_\Mb(A^*)$. By Lemma~\ref{lem:thecut}, there exist $P\in\Pb$ and $v\in A_{d-1}$ such that, for every word $w \in A^*$ such that $\tau_\Mb(w) \in U$, the $d$-cut of $w$ is of the form $(u,v)$ with $\ppart{u} = P$. Note also that $(va)$ is a letter of $A_d$ since $|v| \leq d-1$. We then let
\[
\alpha((U,(M,a))) = (P,(va)).
\]
Otherwise (that is: if $i \neq d-1$ or $U \cap \tau_\Mb(A^*) = \emptyset$), we let $\alpha((U,(M,a))) = \varepsilon$. Note in particular that $\alpha$ is not length increasing.

Let us now verify that $\alpha$ satisfies the desired property: we show that if $w \in A^*$ has $d$-cut $(u,v)$, then
%
$$\tau_\Pb(\mu_d(u))  = \alpha(\tau_\Ub(\tau_\Mb(w)).$$
%
We proceed by induction on the length of $w$. If $w = \varepsilon$, it is immediate that $\tau_\Pb(\mu_d(u)) =\alpha(\tau_\Ub(\tau_\Mb(w))) = \varepsilon$. Now assume that $|w| \geq 1$. There are two cases depending on whether $|w| = 0 \mod d$ (i.e. $w = u$ and $v = \varepsilon$) or not.
	
If $w \neq 0 \mod d$, then $v = v'a$ for some $v' \in A^*$ and $a \in A$. Note that $w = uv = uv'a$. Applying Fact~\ref{fct:tagdecomp} to $\tau_\Mb$ and again to $\tau_\Ub$, we get
\begin{align*}
\tau_\Mb(w) &= \tau_{\Mb}(uv') \ (\wpart{uv'}{\Mb},a) \\
\tau_\Ub(\tau_\Mb(w)) &= \tau_\Ub(\tau_\Mb(uv')) \ (\wpart{\tau_{\Mb}(uv')}{\Ub},(\wpart{uv'}{\Mb},a)).
\end{align*}
By definition, $|u| =  0 \mod d$ and $|v| < d$, so $|v'| < d-1$ and $|uv'| \neq d-1 \mod d$ and hence $\wpart{uv'}{\Mb} \neq M_{d-1}$. In view of the definition of $\alpha$, it follows that $\alpha((\wpart{\tau_{\Mb}(uv')}{\Ub},(\wpart{uv'}{\Mb},a))) = \varepsilon$ and therefore
\[
\alpha(\tau_\Ub(\tau_\Mb(w))) = \alpha(\tau_\Ub(\tau_\Mb(uv'))).
\]
We now use the induction hypothesis on $uv'$, whose length is $|w|-1$ and whose $d$-cut is $(u,v')$: this yields the equalities $\alpha(\tau_\Ub(\tau_\Mb(uv'))) = \tau_\Pb(\mu_d(u))$ and hence $\tau_\Pb(\mu_d(u)) =\alpha(\tau_\Ub(\tau_\Mb(w)))$ as desired.

Suppose finally that $|w| = 0 \mod d$, so $w = u$ and $v = \varepsilon$. Let $w' \in A^*$ and $a \in A$ such that $w = w'a$. In particular, $|w'| = d-1 \mod d$ and hence $\wpart{w'}{\Mb} = M_{d-1}$. Two successive applications of Fact~\ref{fct:tagdecomp} show that 
\begin{align*}
\tau_\Mb(w) &= \tau_\Mb(w') \ (\wpart{w'}{\Mb},a) = \tau_\Mb(w') \ (M_{d-1},a),\\
\tau_\Ub(\tau_\Mb(w)) &=  \tau_\Ub(\tau_\Mb(w')) \  (\wpart{\tau_\Mb(w')}{\Ub},(M_{d-1},a)).
\end{align*}
Let $(u',v')$ be the $d$-cut of $w'$. By induction, we get $\tau_\Pb(\mu_d(u')) =\alpha(\tau_\Ub(\tau_\Mb(w')))$ and hence
\[
\alpha(\tau_\Ub(\tau_\Mb(w))) =  \tau_\Pb(\mu_d(u')) \cdot \alpha((\wpart{\tau_\Mb(w')}{\Ub},(M_{d-1},a))).
\]
By definition of $\alpha$, we have $\alpha((\wpart{\tau_\Mb(w')}{\Ub},(M_{d-1},a))) = (\ppart{\mu_d(u')},(v'a))$ and hence
\begin{align*}
\alpha(\tau_\Ub(\tau_\Mb(w))) &= \tau_\Pb(\mu_d(u')) \cdot (\ppart{\mu_d(u')},(v'a)) \\
&= \tau_\Pb(\mu_d(u') \cdot (v'a))\textrm{ by Fact~\ref{fct:tagdecomp}.}
\end{align*}
Finally since $w = w'a = u'v'a$ and $|u'| = 0 \mod d$, Fact~\ref{fct:blockcut} yields $\mu_d(w) = \mu_d(u') \ \mu_d(v'a)$. Moreover, since $|v'a| = d$, $\mu_d(v'a)$ is the single letter $(v'a) \in A_d$. Thus $\mu_d(w) = \mu_d(u') \ (v'a)$ and we conclude that $\alpha(\tau_\Ub(\tau_\Mb(w)))=  \tau_\Pb(\mu_d(w))$. Since $u = w$, this is the desired result.

\section{Transfer theorem}
\label{sec:transfer}

The main theorem of this section, Theorem~\ref{thm:transfer} below, states that, for any class \Cs closed under inverse length increasing morphisms, \benr{\Cs}-covering reduces to \Cs-covering. As explained at the beginning of Section~\ref{sec: main theorem}, this is the last ingredient for the proof of Theorem~\ref{thm:main}.

Let us start with an outline of the steps involved in this reduction. Since an input pair $(L,\Lb)$ for the \benr{\Cs}-covering problem in $A^*$ involves only finitely many regular languages in $A^*$, we can compute a monoid morphism $\eta\colon A^* \to M$ which recognizes every one of them. Once this morphism is fixed, we construct (Section~\ref{sec: sfwords}) a new alphabet \sfa called the \emph{alphabet of stably-formed words} and a map $L \mapsto \sfl{L}$ which associates a language \sfl{L} over \sfa with every language $L \subseteq A^*$ recognized by $\eta$.

Theorem~\ref{thm:transfer} (in Section~\ref{sec: transfer thm}) then states that for any pair $(L,\Lb)$ such that $L$ and the languages in \Lb are recognized by $\eta$, the announced reduction is given by the following equivalence:
\[
\text{$(L,\Lb)$ is \benr{\Cs}-coverable} \quad  \text{if and only if} \quad \text{$(\sfl{L},\sfl{\Lb})$ is \Cs-coverable}.
\]
The proof of Theorem~\ref{thm:transfer} is given in Sections~\ref{sec: half1} and~\ref{sec: half2}.

\subsection{Stably-formed words}\label{sec: sfwords}

Let $\eta\colon A^* \to M$ be a morphism into a finite monoid $M$. The \emph{alphabet of stably-formed words} associated to $\eta$ is the finite set $\sfa = \eta(A^+)$. To resolve ambiguity, an element $x\in \eta(A^+)$ will be written $(x)$ if we view it as a letter in \sfa. We denote by $\eval$ the natural \emph{evaluation} morphism $\eval\colon (\sfa)^* \to M$, given by $\eval((x))  = x$ for every letter $(x) \in \sfa$ (that is: for every $x\in \eta(A^+)$).

We will be interested in only certain words of $\sfa^*$, which we call \emph{stably-formed}. To define this class of words, we remind the reader of the notion of \emph{stability index} associated to the morphism $\eta$.

\begin{fct} \label{fct:stabind}
	If $\eta\colon A^* \to M$ is a morphism into a finite monoid $M$, there exists an integer $s \geq 1$ such that $\eta(A^s) = \eta(A^{2s})$. 
\end{fct}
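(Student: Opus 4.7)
The plan is to use the fact that every element of a finite semigroup has an idempotent power, applied to the semigroup of subsets of $M$ under pointwise multiplication.

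First I would observe that for every $m,n \geq 1$, we have the equality $\eta(A^{m+n}) = \eta(A^m) \cdot \eta(A^n)$, where the right-hand side denotes the pointwise product of subsets of $M$. This is because $\eta$ is a morphism and $A^{m+n} = A^m \cdot A^n$. In particular, if we write $T = \eta(A) \subseteq M$ and view $T$ as an element of the semigroup $\Ps(M)$ of subsets of $M$ equipped with pointwise multiplication, then $\eta(A^n) = T^n$ for every $n \geq 1$.

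Next, since $M$ is finite, the semigroup $\Ps(M)$ is finite as well. It is a classical fact that in every finite semigroup, every element admits an idempotent power: applied to $T \in \Ps(M)$, this produces an integer $s \geq 1$ such that $T^s \cdot T^s = T^s$, i.e.\ $T^{2s} = T^s$. Translating back via the observation of the previous paragraph, this yields $\eta(A^{2s}) = \eta(A^s)$, which is the desired conclusion.

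There is no real obstacle: the argument rests only on the fact that the map $n \mapsto \eta(A^n)$ is multiplicative with values in the finite semigroup $\Ps(M)$, together with the standard existence of idempotent powers. If desired, the latter can be proved directly by noting that the sequence $T, T^2, T^3, \ldots$ is eventually periodic in $\Ps(M)$, say with $T^{i+p} = T^i$ for some $i,p \geq 1$, and then choosing $s$ to be any multiple of $p$ with $s \geq i$, which gives $T^{2s} = T^s$.
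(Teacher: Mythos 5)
Your argument is correct and follows essentially the same route as the paper: both proofs pass to the powerset of $M$ under pointwise multiplication, identify $\eta(A^n)$ with the $n$th power of $\eta(A)$ in that finite semigroup, and invoke the existence of idempotent powers to obtain $s$. The only cosmetic difference is that the paper takes $s=\omega(2^M)$, a uniform exponent that makes every element idempotent, whereas you apply the idempotent-power fact directly to the single element $T=\eta(A)$; both are fine.
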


\begin{proof}
Every element $x$ of a finite monoid $S$ has a positive idempotent power, and we let $n_x$ be the least such power. Now let $\omega(S) = \lcm\{n_x\mid x \in S\}$: then every $\omega(S)$-power is an idempotent of $S$. The powerset $2^M$ is a finite monoid for the following product operation: given $X_1,X_2 \in 2^M$,  $X_1 \cdot X_2 = \{x_1x_2 \mid w_1 \in X_1 \text{ and } x_2 \in X_2\}$. Letting $s = \omega(2^M)$ establishes Fact~\ref{fct:stabind}.
\end{proof}

We call \emph{stability index} of $\eta$ the least natural number $s \geq 1$ such that $\eta(A^s) = \eta(A^{2s})$. By definition of $s$, $\eta(A^s)$ is a sub-semigroup of $M$ which we call the \emph{stable semigroup} of $\eta$. Note that the stable semigroup of $\eta$ can also be viewed as a subset of the alphabet \sfa.

Now a word $w \in \sfa^*$ is said to be \emph{stably-formed} if either $w = \varepsilon$, or $w = (x_1)(x_2)\cdots (x_n)$ with $n \geq 1$ and the letters $(x_1), \ldots, (x_{n-1})$ belong to the stable semigroup of $\eta$ (the rightmost letter $(x_n)$ can be any letter in \sfa). The following fact is immediate from the definition.

\begin{fct} \label{fct:itisreg}
	For any morphism $\eta\colon A^* \to M$, the language of stably-formed words in $(\sfa)^*$ is regular.
\end{fct}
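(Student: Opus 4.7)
The plan is to observe that ``stably-formed'' is a very simple local condition on words of $\sfa^*$: it asks only that every letter except possibly the final one belong to a fixed sub-alphabet, namely the stable semigroup $S = \eta(A^s) \subseteq \sfa$ (where $s$ is the stability index of $\eta$). Since $S$ is a finite set, this constraint can be expressed by an obvious rational expression, which immediately yields regularity.

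More concretely, I would first unfold the definition to write the set of stably-formed words explicitly as
\[
\{\varepsilon\} \;\cup\; S^* \cdot \sfa,
\]
where $S^*$ is interpreted as the Kleene star of the finite sub-alphabet $S$ of $\sfa$ (equivalently, the set of words of $\sfa^*$ using only letters from $S$). The inclusion from left to right is a direct translation of the definition: a non-empty stably-formed word $(x_1)\cdots(x_n)$ decomposes as $(x_1)\cdots(x_{n-1}) \in S^*$ followed by the single letter $(x_n) \in \sfa$. The converse inclusion is equally immediate, since any word of $S^*\cdot\sfa$ has all its letters in $S$ except possibly the last one.

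It then remains to observe that $\{\varepsilon\} \cup S^* \cdot \sfa$ is regular: it is given by a rational expression over the finite alphabet $\sfa$. If a concrete finite automaton is preferred, one can describe a four-state deterministic automaton (an initial/accepting state, a ``still reading $S$-letters'' accepting state, a ``just read a non-$S$ letter'' accepting state, and a rejecting sink), whose correctness is a one-line check.

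There is no real obstacle here: the only point to be careful about is the bookkeeping convention that elements of the stable semigroup $\eta(A^s) \subseteq M$ are also viewed as letters of $\sfa = \eta(A^+)$, so that the set $S$ makes sense simultaneously as a sub-semigroup of $M$ and as a sub-alphabet of $\sfa$. Once this identification is made, the regularity of the language of stably-formed words is immediate.
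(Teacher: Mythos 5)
Your proof is correct and matches the paper's intent: the paper simply declares this fact ``immediate from the definition,'' and your rational expression $\{\varepsilon\} \cup S^*\cdot\sfa$ (with $S = \eta(A^s)$ the stable semigroup, viewed as a sub-alphabet of $\sfa$) is exactly the explicit unpacking of that observation. Nothing is missing.
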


We now associate with every language $L \subseteq A^*$ recognized by $\eta$ a language $\sfl{L} \subseteq (\sfa)^*$ called the \emph{language of stably-formed words associated to $L$}:
$$\sfl{L} = \{w \in (\sfa)^* \mid \text{$w$ is stably-formed and $\eval(w) \in \eta(L)$}\}.$$
Note that $\sfl{L}$ is the intersection of $\eval\inv(\eta(L))$ and the set of stably-formed words, both regular languages: it follows that $\sfl{L}$ is regular as well.

Let us emphasize the fact that \sfl{L} is only defined when the language $L$ is recognized by $\eta$. We extend our definition to multisets of languages recognized by $\eta$: given such a multiset \Lb, we let $\sfl{\Lb} = \{\sfl{L} \mid L \in \Lb\}$, a multiset as well.

\subsection{The transfer theorem}\label{sec: transfer thm}

We now state the main theorem of this section, namely the reduction between the \benr{\Cs}-covering and the \Cs-covering problems.

\begin{theorem}\label{thm:transfer}
Let \Cs be a class closed under inverse length increasing morphisms, let $\eta\colon A^* \to M$ be a morphism into a finite monoid $M$ and let $s$ be the stability index of $\eta$. For every language $L_1$ and finite multiset of languages $\Lb_2$, all recognized by $\eta$, the following properties are equivalent:
\begin{enumerate}
	\item $(L_1,\Lb_2)$ is \benr{\Cs}-coverable;
	\item $(L_1,\Lb_2)$ is \benrp{\Cs}{s}-coverable;
	\item $(\sfl{L_1},\sfl{\Lb_2})$ is \Cs-coverable.
\end{enumerate}
\end{theorem}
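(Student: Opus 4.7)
The plan is to prove the cycle $(3) \Rightarrow (2) \Rightarrow (1) \Rightarrow (3)$; the step $(2) \Rightarrow (1)$ is immediate from $\benrp{\Cs}{s} \subseteq \benr{\Cs}$. Both remaining implications follow the same pattern: build a length-preserving morphism between a suitable blocked alphabet and $\sfa^*$ and pull back covers along it, using the stability of $\eta$ to make the block structure align.

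For $(3) \Rightarrow (2)$, I will use the length-preserving morphism $\rho\colon A_s^* \to \sfa^*$ defined by $\rho((u)) = (\eta(u))$ for every $(u) \in A_s$. For any $w \in A^*$, the word $\rho(\mu_s(w))$ is stably-formed (its non-final letters correspond to elements of the stable semigroup $\eta(A^s)$) and satisfies $\eval(\rho(\mu_s(w))) = \eta(w)$. Hence, for any language $L$ recognized by $\eta$, $w \in L$ if and only if $\rho(\mu_s(w)) \in \sfl{L}$. Given a $\Cs$-cover $\tilde{\Kb}$ of $\sfl{L_1}$ separating $\sfl{\Lb_2}$, I set $K_{\tilde K} = \mu_s^{-1}(\rho^{-1}(\tilde K))$ for each $\tilde K \in \tilde{\Kb}$. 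Closure of $\Cs$ under inverse length-increasing morphisms ensures $\rho^{-1}(\tilde K) \in \Cs(A_s)$, whence $K_{\tilde K} \in \benrp{\Cs}{s}$ by definition; the cover and separation properties transfer directly from the equivalence above.

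For the main implication $(1) \Rightarrow (3)$, fix a $\benr{\Cs}$-cover $\Kb$ of $L_1$ separating $\Lb_2$. By Corollary~\ref{cor:commonindex} we have $\Kb \subseteq \benrp{\Cs}{d}$ for some $d$, and by Lemma~\ref{lem:blockinc} I may enlarge $d$ so as to assume that $s$ divides $d$ and that every $x \in \sfa$ admits a representative $u \in A^+$ with $\eta(u) = x$ and $|u| \leq d$ (a shortest such $u$ has length at most $|M|$). Write each $K \in \Kb$ as $\mu_d^{-1}(J_K)$ with $J_K \in \Cs(A_d)$. For each $x \in \sfa$, I choose a specific representative $v_x \in A^+$ with $\eta(v_x) = x$, \emph{requiring} that $v_x \in A^d$ whenever $x$ lies in the stable semigroup $\eta(A^s)$; this is possible since $\eta(A^s) = \eta(A^d)$ by the stability property, as $d$ is a positive multiple of $s$. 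Define the length-preserving morphism $\phi\colon \sfa^* \to A_d^*$ by $\phi((x)) = (v_x)$ and set $\tilde K_K = \phi^{-1}(J_K) \in \Cs(\sfa)$. The key identity is that for every stably-formed word $w = (x_1)\cdots(x_n)$, the lift $\bar w = v_{x_1} \cdots v_{x_n} \in A^*$ satisfies $\mu_d(\bar w) = \phi(w)$: since $|v_{x_i}| = d$ for each $i < n$, successive applications of Fact~\ref{fct:blockcut} let $\mu_d$ split $\bar w$ precisely at the boundaries prescribed by $\phi$. Combined with $\eta(\bar w) = \eval(w)$, this yields $w \in \sfl{L_1}$ if and only if $\bar w \in L_1$, so any $\bar w \in K$ produces $\phi(w) \in J_K$, i.e.\ $w \in \tilde K_K$; separation for $\sfl{\Lb_2}$ follows by the same transfer, since any $w \in \phi^{-1}(J_K) \cap \sfl L$ would yield $\bar w \in K \cap L$.

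The main technical obstacle is ensuring the identity $\mu_d(\bar w) = \phi(w)$, which would fail if $\mu_d$ regrouped consecutive letters of $\bar w$ across the boundaries between the $v_{x_i}$. The stability index enters precisely here: the equality $\eta(A^s) = \eta(A^d)$ is what allows every non-terminal representative $v_{x_i}$ to be chosen of length exactly $d$, forcing the $\mu_d$-decomposition of $\bar w$ to respect the factorization imposed by $\phi$ and match it letter by letter. The rest of the proof is bookkeeping: translating the cover/separation condition in $A^*$ via the equivalences $w \in L \iff \bar w \in L$ (and $\rho(\mu_s(w)) \in \sfl L$) that hold for any language $L$ recognized by $\eta$.
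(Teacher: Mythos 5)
Your proof is correct and takes essentially the same route as the paper: the same trivial implication $(2) \Rightarrow (1)$, the same morphism $A_s^* \to \sfa^*$ sending $(u) \mapsto (\eta(u))$ (the paper's $\sigma$, your $\rho$) for $(3) \Rightarrow (2)$, and the same lifting idea for $(1) \Rightarrow (3)$ — pick representatives of stable letters whose length is a multiple of $d$ so that $\mu_d$ respects the block boundaries induced by $\sfa$. The only cosmetic difference is in $(1) \Rightarrow (3)$: you first enlarge $d$ (via Lemma~\ref{lem:blockinc}) to a multiple of $s$ bounding $|M|$ so that every letter of $\sfa$ lifts to a single letter of $A_d$, yielding a length-preserving morphism $\phi\colon \sfa^* \to A_d^*$, whereas the paper keeps $d$ as given and maps stable letters to representatives of length $ds$, yielding a length-increasing morphism $\gamma\colon \sfa^* \to A_d^*$ through the intermediate $\alpha\colon \sfa^* \to A^*$; both choices make the key identity $\mu_d(\bar w) = \phi(w)$ (resp.\ $\gamma(w) = \mu_d(\alpha(w))$) hold for stably-formed $w$ by the same use of Fact~\ref{fct:blockcut}.
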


Note that one may adapt this statement to handle the weaker separation problem (i.e. the special case of inputs $(L_1,\Lb_2)$ where $\Lb_2$ is a singleton by Fact~\ref{fct:septocove}).

\begin{corollary}\label{cor:septransfer}
	Let \Cs be a class closed under inverse length increasing morphisms, let $\eta\colon A^* \to M$ be a morphism into a finite monoid $M$ and let $s$ be the stability index of $\eta$. $s \geq 1$. For all languages $L_1, L_2$, both recognized by $\eta$, the following properties are equivalent:
	\begin{enumerate}
		\item $L_1$ is \benr{\Cs}-separable from $L_2$;
		\item $L_1$ is \benrp{\Cs}{s}-separable from $L_2$;
		\item \sfl{L_1} is \Cs-separable from \sfl{L_2}.
	\end{enumerate}
\end{corollary}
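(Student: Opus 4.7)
The plan is to prove the three-way equivalence cyclically, (2)$\Rightarrow$(1)$\Rightarrow$(3)$\Rightarrow$(2), specializing the anticipated arguments of Theorem~\ref{thm:transfer} to the setting of a single separator rather than a cover. (When $\Cs$ is a lattice, the corollary can alternatively be deduced from Theorem~\ref{thm:transfer} applied to the singleton $\Lb_2=\{L_2\}$ via Fact~\ref{fct:septocove} and Corollary~\ref{cor:nopclos}; the direct approach sketched below however makes no use of this hypothesis.)

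The implication (2)$\Rightarrow$(1) is immediate since $\benrp{\Cs}{s}\subseteq\benr{\Cs}$. For (3)$\Rightarrow$(2), I would pull back a $\Cs$-separator. The idea is to define the length-preserving morphism $\rho\colon A_s^*\to\sfa^*$ by $\rho((u))=(\eta(u))$, so that the composite $\sigma := \rho\circ\mu_s\colon A^*\to\sfa^*$ sends each word $w$ to a stably-formed word with $\eval(\sigma(w))=\eta(w)$; since $L_1$ and $L_2$ are recognized by $\eta$, this yields the equivalence $w\in L_i\Leftrightarrow\sigma(w)\in\sfl{L_i}$ for $i=1,2$. Given a separator $K''\in\Cs(\sfa)$ of $\sfl{L_1}$ from $\sfl{L_2}$, closure of $\Cs$ under inverse length-increasing morphisms gives $\rho^{-1}(K'')\in\Cs(A_s)$, so $K:=\mu_s^{-1}(\rho^{-1}(K''))$ lies in $\benrp{\Cs}{s}$ and separates $L_1$ from $L_2$.

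For the harder direction (1)$\Rightarrow$(3), I would start from a separator $K\in\benr{\Cs}$ with $K=\mu_d^{-1}(K')$ for some $d\geq 1$ and $K'\in\Cs(A_d)$, and then build a length-increasing morphism $\gamma_d\colon\sfa^*\to A_d^*$ such that, for any stably-formed $v\in\sfa^*$, $\gamma_d(v)$ coincides with $\mu_d(w)$ for some $w\in A^*$ with $\eta(w)=\eval(v)$. To this end, for each $x\in\sfa$ I would pick a representative $u_x\in A^+$ with $\eta(u_x)=x$, taking $|u_x|$ to be a multiple of $d$ whenever $x\in\eta(A^s)$---possible because the stability property $\eta(A^s)=\eta(A^{2s})$ entails $\eta(A^{\lcm(s,d)})=\eta(A^s)$---and any representative otherwise, and then set $\gamma_d((x)):=\mu_d(u_x)\in A_d^+$. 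For stably-formed $v=(x_1)\cdots(x_n)$, the inner letters $x_i\in\eta(A^s)$ have $|u_{x_i}|$ a multiple of $d$, so iterated application of Fact~\ref{fct:blockcut} gives $\gamma_d(v)=\mu_d(u_{x_1}\cdots u_{x_n})$. Setting $w:=u_{x_1}\cdots u_{x_n}$, one has $\eta(w)=\eval(v)$, hence $v\in\sfl{L_i}\Leftrightarrow w\in L_i$, and it follows that $\gamma_d^{-1}(K')\in\Cs(\sfa)$ separates $\sfl{L_1}$ from $\sfl{L_2}$.

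The main technical step is the construction of $\gamma_d$ in (1)$\Rightarrow$(3): the divisibility condition on $|u_x|$ when $x\in\eta(A^s)$ is exactly what allows the piecewise-defined, non-morphism map $\mu_d$ to respect concatenation on realizations of stably-formed words via Fact~\ref{fct:blockcut}, and it is here that the defining property of the stability index plays an essential role. Everything else is bookkeeping around the fact that $\sigma$ and the natural realization map associated with $\gamma_d$ convert the recognition of $L_i$ by $\eta$ into membership in $\sfl{L_i}$.
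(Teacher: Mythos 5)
Your proposal is correct, and it takes essentially the same route as the paper, namely specializing the proof of Theorem~\ref{thm:transfer} to singleton covers. Your implication $(3)\Rightarrow(2)$ is exactly the argument behind Lemma~\ref{lm:seconddir} (with your $\rho$ and $\sigma$ playing the roles of the paper's $\sigma$ and $\rho$, respectively — the names are swapped); and your implication $(1)\Rightarrow(3)$ packages into the single morphism $\gamma_d$ what the paper expresses through $\alpha$ and then $\gamma(x)=\mu_d(\alpha(x))$ in Proposition~\ref{prop:firstdir} and Fact~\ref{fct: createmorph}. Your choice of $|u_x|=\lcm(s,d)$ where the paper uses $|u_x|=ds$ is an equally valid choice; only the facts that $|u_x|$ is a multiple of $d$ (so Fact~\ref{fct:blockcut} applies) and that a preimage of that length exists (because $\eta(A^s)=\eta(A^{ks})$ for all $k\ge 1$) are used.

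Your parenthetical remark about the lattice hypothesis is a genuinely useful observation. The paper's one-sentence justification ("adapt \ldots by Fact~\ref{fct:septocove}") would, read literally, invoke the backward direction of Fact~\ref{fct:septocove}, which requires the ambient class to be a lattice; but neither Theorem~\ref{thm:transfer} nor Corollary~\ref{cor:septransfer} imposes that hypothesis on $\Cs$. Your direct adaptation of the covering argument to single separators sidesteps this entirely and shows the corollary holds under the stated hypotheses alone. (In all applications in the paper, $\Cs$ is of the form $\Ds\circ\su$ with $\Ds$ a positive variety, hence a lattice by Proposition~\ref{prop:closureenrich}, so the distinction is harmless in practice; still, your version is the cleaner justification.)
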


%
%
%

The remainder of this section is devoted to the proof of Theorem~\ref{thm:transfer}. \Cs, $\eta$ and $s$ are fixed throughout the proof, satisfying the hypotheses of the theorem. With reference to the statement of Theorem~\ref{thm:transfer}, we prove the implications $(1) \Rightarrow (3) \Rightarrow (2) \Rightarrow (1)$.

The implication $(2) \Rightarrow (1)$ is actually trivial since $\benrp{\Cs}{s} \subseteq \benr{\Cs}$ by definition. The implication $(1) \Rightarrow (3)$ is proved in Section~\ref{sec: half1}, and the implication $(3) \Rightarrow (2)$ is proved in Section~\ref{sec: half2}.

\subsection{\texorpdfstring{From \benr{\Cs}-covering to \Cs-covering}{From [C]-covering to C-covering}}
\label{sec: half1}

Our argument is based on the following technical proposition.

\begin{proposition} \label{prop:firstdir}
	Let $d \geq 1$. There exists a morphism $\alpha\colon (\sfa)^* \to A^*$ such that:
	\begin{enumerate}[label=(\roman*)]
		\item if $L \subseteq A^*$ is recognized by $\eta$ and $w \in (\sfa)^*$ is stably-formed, then $w \in \sfl{L}$ if and only if $\alpha(w) \in L$;
		\item for every $K \in \benrd{\Cs}(A)$, there exists $H_K \in \Cs(\sfa)$ such that, for any stably-formed word $w \in (\sfa)^*$, $w  \in H_K$ if and only if $\alpha(w) \in K$.
	\end{enumerate}
\end{proposition}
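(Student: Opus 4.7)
The plan is to construct $\alpha$ as a morphism defined letterwise by picking, for each $x \in \eta(A^+)$, a preimage $u_x \in A^+$ of $x$ under $\eta$, and setting $\alpha((x)) = u_x$. Property~(i) will then essentially come for free: for any stably-formed $w = (x_1) \cdots (x_n)$ we have $\eta(\alpha(w)) = x_1 \cdots x_n = \eval(w)$, so since $L$ is recognized by $\eta$, the conditions $\alpha(w) \in L$, $\eval(w) \in \eta(L)$, and $w \in \sfl{L}$ coincide.

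The subtle point, and the main obstacle I expect, is to choose the $u_x$ so that property~(ii) also goes through. What I will aim for is to exhibit a length-increasing morphism $\beta\colon (\sfa)^* \to A_d^*$ satisfying $\beta(w) = \mu_d(\alpha(w))$ for every stably-formed word $w$; then for $K = \mu_d\inv(H)$ with $H \in \Cs(A_d)$, setting $H_K = \beta\inv(H)$ will place $H_K$ in $\Cs(\sfa)$ by the closure hypothesis on $\Cs$, and for stably-formed $w$ we will have $\alpha(w) \in K \Leftrightarrow \mu_d(\alpha(w)) \in H \Leftrightarrow w \in H_K$. For $\beta$ to interact correctly with $\mu_d$, the $\mu_d$-block decomposition of $\alpha(w)$ must align with the letters of $w$, which forces the $\alpha$-image of each non-final letter of a stably-formed word to have length a multiple of $d$.

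The key tool for achieving this alignment is the stability of $\eta$: from $\eta(A^s) = \eta(A^{2s})$ one obtains $\eta(A^s) = \eta(A^{ks})$ for every $k \geq 1$ by a straightforward induction, and in particular $\eta(A^s) = \eta(A^{ds})$. I will therefore pick $u_x \in A^{ds}$ for every $x$ in the stable semigroup $\eta(A^s)$, and an arbitrary $u_x \in A^+$ for the remaining $x \in \eta(A^+) \setminus \eta(A^s)$ (which can only appear as the final letter of a stably-formed word). With this choice, for a stably-formed $w = (x_1) \cdots (x_n)$ the prefix $u_{x_1} \cdots u_{x_{n-1}}$ has length a multiple of $d$, so iterated application of Fact~\ref{fct:blockcut} yields $\mu_d(\alpha(w)) = \mu_d(u_{x_1}) \cdots \mu_d(u_{x_n})$. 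Defining $\beta$ by $\beta((x)) = \mu_d(u_x)$ then gives precisely $\beta(w) = \mu_d(\alpha(w))$ on stably-formed words, and $\beta$ is length-increasing since $\mu_d(u_x)$ is nonempty for every $x$. This completes the construction and gives both properties.
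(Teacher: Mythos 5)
Your proposal is correct and follows essentially the same route as the paper: you define $\alpha$ letterwise by choosing $\eta$-preimages of length $ds$ for letters in the stable semigroup and arbitrary nonempty preimages otherwise, and your auxiliary morphism $\beta$ is exactly the morphism the paper calls $\gamma$, with the same verification that $\beta(w) = \mu_d(\alpha(w))$ on stably-formed words via iterated application of Fact~\ref{fct:blockcut}. The only (minor) difference is that you explicitly justify $\eta(A^s) = \eta(A^{ds})$ by induction from $\eta(A^s) = \eta(A^{2s})$, a step the paper cites as ``by definition of the stability index.''
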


Before we establish Proposition~\ref{prop:firstdir}, let us prove the implication $(1) \Rightarrow (3)$ in Theorem~\ref{thm:transfer}. Let $L_1$ be a language and $\Lb_2$ be a finite multiset of languages, all recognized by $\eta$, such that $(L_1,\Lb_2)$ is \benr{\Cs}-coverable. We prove that $(\sfl{L_1},\sfl{\Lb_2})$ is \Cs-coverable.

Let \Kb be a \benr{\Cs}-cover of $L_1$ which is separating for $\Lb_2$. Since every $K \in \Kb$ belongs to \benr{\Cs}, Corollary~\ref{cor:commonindex} yields a natural number $d \geq 1$ such that $K \in \benrd{\Cs}$ for all $K \in \Kb$. Let $\alpha\colon (\sfa)^* \to A^*$ be the morphism given by Proposition~\ref{prop:firstdir} for this value of $d$ and let $\Hb = \{H_K \mid K \in \Kb\}$. It suffices to show that \Hb is a \Cs-cover of \sfl{L_1} which is separating for \sfl{\Lb_2}.

Let $w \in \sfl{L_1}$. By definition, $w$ is stably-formed and by Proposition~\ref{prop:firstdir}~(i), we have $\alpha(w) \in L_1$. Since \Kb is a cover of $L_1$, $\alpha(w) \in K$ for some $K \in \Kb$, and Proposition~\ref{prop:firstdir}~(ii) now shows that $w\in H_K$. Thus \Hb is a cover of $L_1$, and a \Cs-cover since every $H_K$ is in \Cs (Proposition~\ref{prop:firstdir}~(ii) again).

Let us now verify that \Hb is separating for \sfl{\Lb_2}. Let $H \in \Hb$: by definition, there exists $K\in \Kb$ such that $H =H_K$. Since \Kb is separating for $\Lb_2$, there exists $L_2 \in \Lb_2$ such that $K \cap L_2 = \emptyset$. We prove that $H \cap \sfl{L_2} = \emptyset$ by contradiction: if $w \in H \cap \sfl{L_2}$, then Proposition~\ref{prop:firstdir}~(i) yields $\alpha(w) \in L_2$ and Proposition~\ref{prop:firstdir}~(ii) yields $\alpha(w) \in K$. Thus $\alpha(w) \in K \cap L_2$, which is a contradiction. This concludes the proof of the implication $(1) \Rightarrow (3)$ in Theorem~\ref{thm:transfer}.

\proofof{Proposition~\ref{prop:firstdir}}
Let $d \geq 1$ and let $s \geq 1$ be the stability index of $\eta$. We define a morphism $\alpha\colon(\sfa)^*\to A^*$ as follows. Let $x \in \sfa = \eta(A^+)$. If $x \not\in \eta(A^s)$, we pick any non-empty word $w \in A^+$ such that $\eta(w) = x$ and we let $\alpha(x) = w$. If instead $x \in \eta(A^s)$, then we also have $x \in \eta(A^{ds})$ by definition of the stability index: we pick a word $w \in A^{ds}$ of length $ds$ such that $\eta(w) = x$ and we let $\alpha(x) = w$.

Let $L$ be a language recognized by $\eta$ and let $w = (x_1)\cdots (x_n) \in (\sfa)^*$ be a stably-formed word. We first want to show Proposition~\ref{prop:firstdir}~(1), namely that $w \in \sfl{L}$ if and only if $\alpha(w) \in L$. We have $\alpha(w) = \alpha(x_1) \cdots \alpha(x_n)$ and, by definition of $\alpha$, we have $\eta(\alpha(w)) = x_1 \cdots x_n = \eval(w)$. Since $L$ is recognized by $\eta$ and hence $\alpha(w) \in L$ if and only if $\eta(\alpha(w)) \in \eta(L)$, it follows that $\alpha(w) \in L$ if and only if $\eval(w) \in \eta(L)$. By definition of $\sfl{L}$, this is equivalent to $w\in \sfl{L}$, as announced.

To establish Proposition~\ref{prop:firstdir}~(2), we consider the morphism $\gamma\colon (\sfa)^* \to A_d^*$ defined by letting $\gamma(x) = \mu_d(\alpha(x))$ for each letter $x\in \sfa$. Since each $\alpha(x)$ is non-empty, the morphism $\gamma$ is length increasing.

\begin{fct}\label{fct: createmorph}
For every stably-formed word $w \in \sfa^*$, we have $\gamma(w) = \mu_d(\alpha(w))$.
\end{fct}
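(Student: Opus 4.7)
The plan is to verify the identity by a direct induction on the number of letters in the stably-formed word $w$, relying on Fact~\ref{fct:blockcut} (the fact that $\mu_d$ is compatible with concatenation whenever the left factor has length a multiple of $d$).

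More precisely, write $w = (x_1)(x_2)\cdots(x_n)$ where, by the definition of stably-formed words, each of $x_1,\dots,x_{n-1}$ lies in the stable semigroup $\eta(A^s)$. By the definitions of $\alpha$ and $\gamma$, both viewed as morphisms, we have
\[
\alpha(w) = \alpha(x_1)\alpha(x_2)\cdots\alpha(x_n) \quad\text{and}\quad \gamma(w) = \mu_d(\alpha(x_1))\,\mu_d(\alpha(x_2))\cdots\mu_d(\alpha(x_n)).
\]
So the claim reduces to showing that $\mu_d$ distributes over the $n$-fold concatenation $\alpha(x_1)\alpha(x_2)\cdots\alpha(x_n)$.

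The key observation is that, by the construction of $\alpha$, for every $i\in\{1,\dots,n-1\}$ the letter $x_i$ lies in the stable semigroup $\eta(A^s)$ and hence $\alpha(x_i)$ was chosen as a word in $A^{ds}$, so $|\alpha(x_i)| = ds \equiv 0 \pmod d$. Consequently, for every $k\le n-1$, the prefix $\alpha(x_1)\cdots\alpha(x_k)$ has length a multiple of $d$. Applying Fact~\ref{fct:blockcut} iteratively with $u = \alpha(x_1)\cdots\alpha(x_k)$ and $v = \alpha(x_{k+1})\cdots\alpha(x_n)$ for $k=1,2,\dots,n-1$ yields, by a straightforward induction on $n$,
\[
\mu_d\bigl(\alpha(x_1)\alpha(x_2)\cdots\alpha(x_n)\bigr) = \mu_d(\alpha(x_1))\,\mu_d(\alpha(x_2))\cdots\mu_d(\alpha(x_n)),
\]
which is exactly $\gamma(w) = \mu_d(\alpha(w))$.

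There is no real obstacle here: the proof is essentially a bookkeeping verification. The only subtle point is noting that it is crucial that the rightmost letter $(x_n)$ need \emph{not} lie in the stable semigroup, and that this is harmless because Fact~\ref{fct:blockcut} only requires the \emph{left} factor of each concatenation step to have length divisible by $d$. In particular, $\alpha(x_n)$ may have arbitrary positive length, and its $\mu_d$-image will contribute the (possibly shorter) last letter of $\gamma(w)$, matching the construction of $\mu_d(\alpha(w))$.
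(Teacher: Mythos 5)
Your proof is correct and essentially identical to the paper's: both argue by peeling the word $w$ from the left, using the fact that $\alpha$ maps each of the first $n-1$ letters (which lie in the stable semigroup) to a word of length $ds$, so Fact~\ref{fct:blockcut} applies at each step. The only cosmetic difference is that the paper writes this as an explicit induction on $|w|$ while you phrase it as an iterated application of Fact~\ref{fct:blockcut}.
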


\begin{proof}
The proof is by induction on the length of $w$. If $w = \varepsilon$, then $\gamma(w)= \mu_d(\alpha(w)) = \varepsilon$. We now assume that $|w| \geq 1$ and we let $(x) \in \sfa$ be the leftmost letter in $w$: $w = (x) \ w'$ for some $w' \in (\sfa)^*$. Then we have
\begin{align*}
\gamma(w) &= \gamma(x) \ \gamma(w') \\
& = \mu_d(\alpha(x)) \ \gamma(w') \textrm{ by definition of $\gamma$}\\
& = \mu_d(\alpha(x)) \ \mu_d(\alpha(w')) \textrm{ by induction}.
\end{align*}
If $w' = \varepsilon$, then $w = x$ and we get $\gamma(x) = \mu_d(\alpha(x))$ as desired. If $w' \ne \varepsilon$, then $(x)$ is not the rightmost letter of $w$ and by definition of stably-formed words, we have $x \in \eta(A^s)$ (i.e. the stable semigroup). By definition of $\alpha$, it follows that $\alpha(x) \in A^+$ has length $sd$ and we can use Fact~\ref{fct:blockcut} to show that
\[
\gamma(w) = \mu_d(\alpha(x)) \ \mu_d(\alpha(w')) = \mu_d(\alpha((x) \ w')) = \mu_d(\alpha(w)),
\]
as desired.
\end{proof}

We want to show that, for any language $K \in \benrd{\Cs}(A)$, there exists $H_K \in \Cs(\sfa)$ such that for any stably-formed word $w \in (\sfa)^*$, $w  \in H_K$ if and only if $\alpha(w) \in K$.

By definition of $\benrd{\Cs}(A)$, there exists $P \in \Cs(A_d)$ such that $K = \mu_d\inv(P)$ and we let $H_K = \gamma\inv(P)$. Since \Cs is closed under inverse length increasing morphisms, we have $H_K \in \Cs(\sfa)$.

Now let $w\in \sfa^*$ be stably-formed. By definition of $H_K$, $w\in H_K$ if and only if $\gamma(w) \in P$. By Fact~\ref{fct: createmorph}, this is equivalent to $\mu_d(\alpha(w)) \in P$ and hence to $\alpha(w) \in \mu_d\inv(P) = K$, as desired.
\eopo

\subsection{\texorpdfstring{From \Cs-covering to \benr{\Cs}-covering}{From C-covering to [C]-covering}}
\label{sec: half2}

We turn to the implication $(3) \Rightarrow (2)$ in Theorem~\ref{thm:transfer}. We first introduce auxiliary maps $\sigma$ and $\rho$.

The morphism $\sigma\colon A_s^* \to \sfa^*$ is defined as follows: if $x$ is a letter in $A_s$ (i.e., $x\in A^+$ is a non-empty word of length at most $s$), we let $\sigma(x) = \eta(x) \in \sfa$ (recall that $\sfa = \eta(A^+)$). In particular, $\sigma$ is length increasing. We now define the map $\rho\colon A^* \to (\sfa)^*$ to be the composition $\rho = \sigma \circ \mu_s$. Note that $\rho$ is not a morphism. Useful properties of $\rho$ are summarized in the following lemma.

\begin{lemma}\label{lm:seconddir}
The following properties hold.
\begin{enumerate}[label=(\roman*)]
\item If $w\in A^*$, then $\rho(w)\in\sfa^*$ is stably-formed.

\item If $L\subseteq A^*$ is recognized by $\eta$, then $L = \rho\inv(\sfl{L})$.

\item If $K \in \Cs(\sfa)$, then $\rho\inv(K) \in \benrp{\Cs}{s}(A)$.
\end{enumerate}
\end{lemma}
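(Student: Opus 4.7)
My plan is to prove the three parts of Lemma~\ref{lm:seconddir} in order, using the factorisation $\rho = \sigma\circ\mu_s$ together with the structural properties of $\mu_s$ (Fact~\ref{fct:blockcut}) and the length-increasing nature of~$\sigma$. The key observation that ties things together is that $\eval\circ\rho = \eta$, which follows from the definition of $\sigma$ and the morphism property of $\eta$.

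For (i), I would unpack $\mu_s(w)$ as a sequence $(w_1)\cdots(w_q)$ or $(w_1)\cdots(w_q)(w_{q+1})$ in $A_s^*$, where $|w_i|=s$ for $i\le q$ and $0<|w_{q+1}|<s$ in the second case. Applying $\sigma$ letterwise then yields $\rho(w)=(\eta(w_1))\cdots(\eta(w_q))$ or $(\eta(w_1))\cdots(\eta(w_q))(\eta(w_{q+1}))$. Since $\eta(w_i)\in\eta(A^s)$ for $i\le q$ (i.e.\ lies in the stable semigroup), every letter except possibly the last one is in the stable semigroup, which is exactly the definition of a stably-formed word.

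For (ii), because $L$ is recognised by $\eta$ we have $w\in L \iff \eta(w)\in\eta(L)$. Using the decomposition above and the fact that $\sigma((w_i))=\eta(w_i)$, a direct computation gives
\[
\eval(\rho(w))=\eta(w_1)\cdots\eta(w_q)\eta(w_{q+1})=\eta(w_1\cdots w_{q+1})=\eta(w).
\]
Combined with part (i), which ensures $\rho(w)$ is always stably-formed, this shows $\rho(w)\in\sfl{L}\iff\eval(\rho(w))\in\eta(L)\iff\eta(w)\in\eta(L)\iff w\in L$, which is the claimed equality $L=\rho\inv(\sfl{L})$.

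For (iii), the factorisation $\rho=\sigma\circ\mu_s$ gives $\rho\inv(K)=\mu_s\inv(\sigma\inv(K))$. Since $\sigma\colon A_s^*\to\sfa^*$ is length-increasing (each letter is sent to a single letter) and \Cs is closed under inverse length-increasing morphisms, we have $\sigma\inv(K)\in\Cs(A_s)$. By definition of $\benrp{\Cs}{s}$, it follows that $\mu_s\inv(\sigma\inv(K))\in\benrp{\Cs}{s}(A)$, which concludes the proof. I do not anticipate any real obstacle here: the lemma is the easy half of the reduction, and the technical work has already been absorbed into the definitions of $\mu_s$, stably-formed words, and the closure hypothesis on~\Cs.
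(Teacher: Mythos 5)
Your proof is correct and follows essentially the same route as the paper: factorise $\rho=\sigma\circ\mu_s$, observe that $\sigma$ sends all but possibly the last block into the stable semigroup (giving (i)), note that $\eval\circ\rho=\eta$ together with (i) gives (ii), and use the length-increasing property of $\sigma$ and closure of $\Cs$ for (iii). The paper only differs in explicitly treating $w=\varepsilon$ as a separate case in (i) and (ii), a detail your argument handles implicitly since $\eval\circ\rho=\eta$ holds vacuously there.
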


\begin{proof}
Let $w\in A^*$. If $w = \varepsilon$, then $\rho(w) = \varepsilon$, which is stably-formed. If $w \ne \varepsilon$, then $\mu_s(w) = (x_1) \cdots (x_n)$ with $n\ge 1$ and $x_1,\ldots,x_{n-1} \in A^s$. Then $\rho(w) = \sigma(\mu_s(w)) = (\eta(x_1)) \cdots (\eta(x_n))$ and $\eta(x_1), \ldots,\eta(x_{n-1}) \in \eta(A^s)$: this is exactly the definition of a stably-formed word. Thus (i) holds.

Now let $L\subseteq A^*$ be recognized by $\eta$ and let $w\in A^*$. By definition of $\sfl{L}$, $L$ contains the empty word if and only if $\sfl{L}$ does. Now let $w \in A^+$. Then $\mu_s(w) = (x_1) \cdots (x_n)$ with $n\ge 1$ and each $x_i$ is a non-empty word of length at most $s$. In particular $w = x_1 \cdots x_n$ and $\rho(w) = (\eta(x_1)) \cdots (\eta(x_n))$. Then we have $\eval(\rho(w)) = \eta(x_1)\cdots \eta(x_n) = \eta(w)$. By definition of $\sfl{L}$ and since $\rho(w)$ is stably-formed, we have $\rho(w) \in \sfl{L}$ if and only if $\eval(\rho(w)) \in \eta(L)$, if and only if $\eta(w) \in \eta(L)$. This is equivalent to $w\in L$ since $\eta$ recognizes $L$ and hence, (ii) holds.

Finally, let $K \in \Cs(\sfa)$ and let $P = \sigma\inv(K) \subseteq A_s^*$. In particular $P \in \Cs(A_s)$ since \Cs is closed under inverse length increasing morphisms. Moreover $\rho\inv(K) = \mu_s\inv(\sigma\inv(K)) = \mu_s\inv(P)$, which belongs to$\benrp{\Cs}{s}(A)$ by definition. This concludes the proof of (iii).
\end{proof}

We now prove the implication $(3) \Rightarrow (2)$ in Theorem~\ref{thm:transfer}. Let $L_1$ be a language and $\Lb_2$ be a finite multiset of languages, all recognized by $\eta$, such that $(\sfl{L_1},\sfl{\Lb_2})$ is \Cs-coverable. Let \Kb be a \Cs-cover of $\sfl{L_1}$ which is separating for $\sfl{\Lb_2}$ and let
\[
\Ub = \{\rho\inv(K) \mid K \in \Kb\}.
\]
In order to complete the proof, we verify that \Ub is a \benrp{\Cs}{s}-cover of $L_1$ which is separating for $\Lb_2$.

Lemma~\ref{lm:seconddir}~(iii) shows that each element of \Ub is in $\benrp{\Cs}{s}(A)$. Now let $w\in L_1$. By Lemma~\ref{lm:seconddir}~(ii), $\rho(w) \in \sfl{L_1}$ and since \Kb is a cover of \sfl{L_1}, $\rho(w) \in K$ for some $K \in \Kb$. Then $w \in \rho\inv(K)$, which is an element of \Ub by definition. Thus \Ub is a \benrp{\Cs}{s}-cover of $L_1$.

Now let $U \in \Ub$, say, $U = \rho\inv(K)$ for some $K \in \Kb$. Since \Kb is separating for $\sfl{\Lb_2}$, there exists $L_2 \in \Lb_2$ such that $K \cap \sfl{L_2} = \emptyset$. Then $\rho\inv(K) \cap \rho\inv(\sfl{L_2}) = \emptyset$. Now $U = \rho\inv(K)$ and $\rho\inv(\sfl{L_2}) = L_2$ by Lemma~\ref{lm:seconddir}~(ii). Therefore $U \cap L_2 = \emptyset$, which shows that \Ub is separating for $\Lb_2$.

\section{Conclusion}
\label{sec:conc}

We showed that for every \pvarie \Cs, the covering problem for $(\Cs \circ \su) \circ \md$ is effectively reducible to the same problem for \Cs. Exploiting the connection between language theoretic enrichment and logical enrichment by local and modular predicates, we used this result to obtain that covering is decidable for the fragments \fowm, \fodwsm, \siwsm{n} for $n =1,2,3$ and \bswsm{n} for $n =1,2$.

Naturally, a downside of this result is that we are only able to hand \md-enrichment for classes which have been built with \su-enrichment. Therefore a natural question is whether there exists a complementary theorem which states that for any \pvarie \Cs, covering for $\Cs \circ \md$ reduces to the same problem for \Cs. Such a theorem would make it possible to handle logical classes equipped with the modular predicates but not the local ones, such as \fodwm, \siwm{n} or \bswm{n}. Let us point out that this would be a complementary result and not a generalization of our main theorem: the classes of the form $\Cs \circ \su$, which we can handle, are \plivaris but usually not \pvaries.

Finally let us point out that a natural generalization of our results concerns languages of infinite words, for which regularity and modular predicates are well-defined. Such a generalization, for \su-enrichment and the addition of local predicates, is treated in detail by Place and Zeitoun in~\cite{2018:PlaceZeitoun}. An analogous reasoning for \md-enrichment and the addition of modular predicates would yield analogous results.



\bibliographystyle{alpha}
\bibliography{PRW}

\end{document}